\documentclass[12pt]{article}
\pdfoutput=1
\usepackage{amsfonts}
\usepackage{amsmath}
\usepackage{amssymb}
\usepackage{amsthm}
\usepackage{adjustbox}
\usepackage{graphicx}
\usepackage{natbib}
\bibliographystyle{plainnat}
\usepackage{amsthm}
\usepackage{authblk}
\usepackage{array}

\usepackage{multirow}

\newtheorem{theorem}{Theorem}
\newtheorem{lemma}{Lemma}

\newtheorem{remark}{Remark}

\newcommand{\x}{\mathbf{x}}
\newcommand{\w}{\mathbf{w}}
\newcommand{\X}{\mathbf{z}}

\newcommand{\vv}{v}
\newcommand{\sign}{\operatorname{sign}}
\newcommand{\Beta}{\boldsymbol{\beta_1}}
\newcommand{\E}{\operatorname{\mathbb{E}}}
\newcommand{\var}{\operatorname{\mathbb{V}ar}}
\newcommand{\mginv}{mg^{-1}}
\newcommand{\penn}[3]{P_{#1, #3}\left( #2 \right)}

\begin{document}

\title{Robust Penalized Estimators for High--Dimensional Generalized Linear Models}

\author[1]{Marina Valdora}
\author[2]{Claudio Agostinelli}
\affil[1]{Instituto de C\'alculo and Department of Mathematics, University of Buenos Aires, Buenos Aires, Argentina \texttt{mvaldora@dm.uba.ar}}
\affil[2]{Department of Mathematics, University of Trento, Trento, Italy \texttt{claudio.agostinelli@unitn.it}}

\date{\today}

\maketitle

\begin{abstract}
  Robust estimators for generalized linear models (GLMs) are not easy to develop due to the nature of the distributions involved. Recently, there has been growing interest in robust estimation methods, particularly in contexts involving a potentially large number of explanatory variables. Transformed M-estimators (MT-estimators) provide a natural extension of M-estimation techniques to the GLM framework, offering robust methodologies. We propose a penalized variant of MT-estimators to address high-dimensional data scenarios. Under suitable assumptions, we demonstrate the consistency and asymptotic normality of this novel class of estimators. Our theoretical development focuses on redescending $\rho$-functions and penalization functions that satisfy specific regularity conditions. We present an Iterative Re-Weighted Least Squares algorithm, together with a deterministic initialization procedure, which is crucial since the estimating equations may have multiple solutions. We evaluate the finite sample performance of this method for Poisson distribution and well known penalization functions through Monte Carlo simulations that consider various types of contamination, as well as an empirical application using a real dataset.
  
\noindent \textbf{Keywords}: GLMs, High-dimensional data, MT-estimators, Penalized methods, Robustness.
  
\noindent \textbf{MSC Classification}: 62F35, 62J12, 62J07
\end{abstract}

\section{Introduction}
\label{sec:introduction}
GLMs are widely used tools in data analysis because of their ability to handle a variety of response distributions and link functions. However, traditional estimation methods for GLMs often fail in high-dimensional settings, where the number of explanatory variables exceeds the number of observations. The problem of high-dimensional data has been widely studied and penalized procedures have been proposed to address it. For linear models, notable contributions include the ridge regression proposed by \citet{hoerl1970Ridge}, the least absolute shrinkage and selection operator (lasso) by \citet{tibshirani1996regression}, and the elastic net by \citet{zou2005regularization}. These methods were extended to GLMs by \citet{friedman2010regularization}. Additional penalization techniques include the bridge estimator introduced by \citet{frank1993}, the smoothly clipped absolute deviation (SCAD) penalty by \citet{fan2001variable}, and the minimax concave penalty (MCP) by \citet{zhang2010}. More recently, \citet{bianco2022penalized} introduced the sign penalty for logistic regression. The theoretical aspects and performance of penalized estimators in the context of linear regression have been further explored by \citet{knight2000asymptotics} and \citet{zou2006adaptive}.

While most of these penalized methods perform well under the assumption that all data points follow the model, their performance is poor in the presence of outliers. A small proportion of atypical data can significantly affect the reliability of the estimators. In response to this limitation, robust estimators for high-dimensional linear models have been developed by \citet{maronna2011robust} and \citet{smucler2017robust}. \citet{avella2018robust} extended the ideas in \cite{cantoni2001robust} and proposed robust penalized estimation for GLMs,  aiming at effective variable selection.  However, their theoretical assumptions are too strong for our purposes. In fact, their Condition 1 excludes our proposed penalized estimators in the case of covariates with unbounded support, see Appendix \ref{sec:computational}, equation \eqref{MTA1}. In the context of logistic regression, \citet{bianco2022penalized} proposed penalized robust estimators. \citet{salamwade2021robust} further contributed to robust variable selection in GLMs by introducing penalized MT-estimators and proving their oracle properties. Their approach is limited to the SCAD penalty, does not cover robust and computational aspects such as the development of a specific algorithm or how to obtain robust initial values which are very important since the estimating equations might have more then one solution. Additionally, their numerical study does not address high-dimensional scenarios, which require specialized computational techniques, as highlighted by \citet{agostinelli2019initial}.

MT-estimators were originally introduced by \citet{ValdoraYohai2014} for classical dimensional settings and further studied in \citet{agostinelli2019initial}, where scenarios with a large number of covariates were explored. These studies have shown that MT-estimators outperform other robust methods in terms of efficiency and robustness in several simulation settings and two real data examples, specifically focusing on Poisson regression. \citet{cantoni2001robust} proposed robust tests based on robust quasi-likelihood (RQL) estimators. The advantage of MT-estimators over RQL-estimators is that they not only have the property of infinitesimal robustness but have a high breakdown point as well. In fact, the simulations in \cite{ValdoraYohai2014} and \cite{agostinelli2019initial} show the resistance of MT-estimators to contaminations of up to $15\%$. 
%%%On the other hand, RQL-estimators with a Huber-type $\rho$ function have a very low breakdown point, as suggested by the simulation study and the fact that the loss function is unbounded.}
The breakdown point is formally defined in Section \ref{sec:robustness}. Informally, the breakdown point represents the maximum proportion of outliers that the estimator can tolerate before its values become unbounded. It is clear that a high breakdown point does not ensure the robustness or the good performance of an estimator, especially in the case of non-equivariant methods; see \citet{daviesgather2005}. However, it is still a desirable property. % and estimators with a very low breakdown point are clearly not robust.  
%\cite{cantoni2001robust} developed the theory for robust inference in GLMs based on RQL-estimators with a general loss function and carried out their simulations using the Huber function. %, other loss functions could be used, including redescending functions. %had good results using the code provided in the supplement to \citet{avella2018robust} because of very frequent convergence problems and large computational times. 
Another advantage of MT-estimators is that, since they are defined using a bounded loss function, they are robust even in the absence of weights on the covariates. Weighted MT-estimators (WMT) were investigated in \cite{ValdoraYohai2014} and weights on the covariates have been used succesfully in many cases; see, for instance \cite{pensia2024robust}. In some cases, however, the use of weights may cause the estimator to lose robustness. This happens when there are ``good'' high leverage observations, meaning that they have a response that follows the nominal model. Down-weighting these observations may, not only cause a small loss of efficiency, but a loss of robustness as well, since penalizing high leverage non-outlying observations may increase the influence of low-leverage outliers. In fact, the use of hard-rejection weights can reduce the breakdown point since good high leverage point will have weights equal to zero. See the simulations reported in Section 6 in \cite{ValdoraYohai2014}.
Despite the above comments, adding weights on the covariates is still a useful technique that may increase robustness when high leverage outliers are present. The investigation of the theoretical properties of penalized  WMT-estimators is a challenging and interesting problem that may be the subject of further work.

%RQL-estimators need weights on the covariates in order to achieve robustness in the presence of high leverage outliers. 
In this paper we adapt MT-estimators to the high-dimensional setting. We introduce penalized MT-estimators that utilize redescending $\rho$-functions and general penalty functions. We investigate their theoretical properties, including consistency, robustness, asymptotic normality, and variable selection under suitable assumptions. Our theoretical approach follows closely that of \citet{bianco2022penalized} for logistic regression. This approach allows us to consider random penalties instead of only deterministic ones, providing a more realistic setting than those considered in \citet{avella2018robust} and \citet{salamwade2021robust} since penalties are usually selected by data-driven procedures.
Finally, we present a computational algorithm and a procedure for obtaining initial estimates. The performance of the proposed method is evaluated through extensive Monte Carlo simulations, which consider various types of contamination, and an empirical application to predict the length of hospital stay of patients using multiple covariates.

The rest of the paper is organized as follows. Section \ref{sec:MT} reviews the definition of MT-estimators for GLMs, while Section \ref{sec:penalized} introduces their penalized version. Section \ref{sec:penaltyselection} discusses the selection of penalty parameters by information criteria and cross-validation, while Section \ref{sec:asres} provides asymptotic and robustness results. Sections \ref{sec:montecarlo} and \ref{sec:rhc} present the Monte Carlo setting, summarize its results, and provide an application example. Section \ref{sec:conclusion} offers concluding remarks. Appendix \ref{sec:proofs} contains the complete proofs of the theorems given in Section \ref{sec:asres}, whereas Appendix \ref{sec:computational} provides details on obtaining a robust starting value and describes the iteratively reweighted least squares (IRWLS) algorithm. Finally, the Supplementary Material reports complete results of the Monte Carlo experiment.

\section{M-estimators based on transformations} \label{sec:MT}
Let $(y, \x)$ be a random vector of dimension $p+2$ and $F = F(\cdot,\mu, \phi)$ a distribution function depending on two parameters $\mu$ and $\phi$, with $\mu \in (\mu_1, \mu_2)$ where $\mu_i$ can be $\pm \infty$ and $\phi>0$. We say that $(y, \x)$ follows a GLM with parameter $\boldsymbol{\beta}$, link function $g$ and distribution function $F$ if 
\begin{equation}\label{eq:glm}
y \vert \x \sim F(\cdot,\mu, \phi), \qquad g(\mu) = \eta = \x^\top \boldsymbol{\beta}, 
\end{equation}
where $\boldsymbol{\beta} = (\beta_0, \Beta^\top)^\top$ is the parameter of the linear predictor $\eta$, $\beta_0 \in \mathbb{R}$, $\Beta = (\beta_1,\ldots, \beta_{p})^\top \in \mathbb{R}^p$, $\x = (1, \X^\top)^\top$, $\X=(x_1, \ldots, x_p)^\top \in \mathbb{R}^p$ and $\phi$ is a nuisance scale parameter which is assumed to be known. We assume that $g$ is defined on an open interval $(\mu_1, \mu_2)$, and that it is continuous and strictly increasing. We also assume that $\mathop{\lim_{\mu\rightarrow \mu_1}} g(\mu) = - \infty$ and $\mathop{\lim_{\mu\rightarrow \mu_2}} g(\mu) = + \infty$.

By $\E_{\boldsymbol{\beta}_*}(h(y,\x))$ we denote the expectation of $h(y,\x)$, when $(y, \x)$ follows a GLM with parameter $\boldsymbol{\beta}_*$, as defined in \eqref{eq:glm}. This notation is used in the definition of the objective function $L$, defined below, in order to emphasize the different roles of $\boldsymbol{\beta}$ and $\boldsymbol{\beta}_*$. In the sequel, we will denote it either by $\E_{\boldsymbol{\beta}_*}(h(y,\x))$ or simply by $\E(h(y,\x))$ when the distribution of $(y,\x)$ is clear.

Let $t$ be a variance stabilizing transformation, that is, such that $\var(t(y)) \approx a(\phi)$ is approximately constant and $a$ is a known function. We define
\begin{equation}\label{eq:L}
L(\boldsymbol{\beta}) = \E_{\boldsymbol{\beta}_*}\left(\rho\left(\frac{t(y)- m ( g^{-1}(\x^\top\boldsymbol{\beta}))}{\sqrt{a(\phi)}}\right)\right),
\end{equation}
where the $m$ function is given by
\begin{equation}\label{eq:m}
m(\mu) = \arg\min_\gamma \E_\mu\left( \rho\left(\frac{t(y)- \gamma}{\sqrt{a(\phi)}}\right)\right) ,
\end{equation}
$\rho$ is a bounded $\rho$-function, see Section \ref{sec:asres} for the definition, and $t$ is a variance stabilizing transformation; for instance, in the Poisson case $t(y) = \sqrt{y}$. The function $m$ is necessary to ensure the Fisher consistency of the estimator. We assume that this function is uniquely defined for all $\mu$. We denote by $\mginv$ the composition $m\circ g^{-1}$. 

Introduced by \citet{ValdoraYohai2014}, MT-estimators are the finite sample version of (\ref{eq:L}). Given an independent and identically distributed (i.i.d.) sample $(y_1, \x_1), \ldots, (y_n, \x_n)$ of size $n$, we define
\begin{equation}
 L_n(\boldsymbol{\beta}) = \frac{1}{n}\sum_{i=1}^n \rho \left( \frac{t(y_i) - \mginv \left(\x_{i}^\top \boldsymbol{\beta}\right)}{\sqrt{a(\phi)}}\right)
\label{eq:Ln}
\end{equation}
and the MT-estimator is $\tilde{\boldsymbol{\beta}}_n = \arg\min_{\boldsymbol{\beta}} L_n(\boldsymbol{\beta})$.
  
Taking $\rho(x)=x^2$ we get a least squares estimate based on transformations (LST) for GLMs:
\begin{equation}
\tilde{\boldsymbol{\beta}}_{LS} = \arg\min_{\boldsymbol{\beta}}\sum_{i=1}^{n} \left(t(y_{i})-m_{LS}\left(g^{-1}( \x_{i}^\top\boldsymbol{\beta})\right)\right)^2,
\label{eq:LSTest}
\end{equation}
where $m_{LS}(\mu) = \mathbb{E}_\mu (t(y))$. Hence, MT-estimators can be seen as a robustification of LST-estimators. LST- and MT-estimators are investigated in \citet{ValdoraYohai2014} and \citet{agostinelli2019initial}, where an initial estimator based on the idea of \citet{PenaYohai1999} is proposed and an IRWLS algorithm to solve the minimization problem is studied.
\section{Penalized  MT-estimators for Generalized Linear Models}
\label{sec:penalized}
Consider the following penalized objective function
\begin{align}\label{eq:lntilde}
\hat{L}_n	(\boldsymbol{\beta}) & =  L_n(\boldsymbol{\beta}) + \penn{\lambda_n}{\boldsymbol{\beta}}{\alpha_n} \\
& = \frac{1}{n}\sum_{i=1}^n \rho \left( \frac{t(y_i) - \mginv \left(\x_{i}^\top \boldsymbol{\beta}\right)}{\sqrt{a(\phi)}} \right) + \penn{\lambda_n}{\boldsymbol{\beta}}{\alpha_n} , \nonumber
\end{align}
where $\penn{\lambda}{\boldsymbol{\beta}}{\alpha}$ is a penalization term depending on the vector $\boldsymbol{\beta}$, and $\alpha_n$ and $\lambda_n$ are penalization parameters that may depend on the sample. We define penalized MT-estimators as
\begin{equation}
\label{eq:MTpenalized}
\hat{\boldsymbol{\beta}}_n = \hat{\boldsymbol{\beta}}_{n,\lambda,\alpha} = \arg\min_{\boldsymbol{\beta}} \hat{L}_n	(\boldsymbol{\beta}).
\end{equation}
For instance, as in \citet{zou2005regularization}, let 
\begin{align}\label{eq:elasticnet}
\penn{\lambda}{\boldsymbol{\beta}}{\alpha} & = \lambda\left((1-\alpha)\frac{1}{2} \| \Beta \|^2_2 + \alpha \|\Beta\|_1\right) \\
\nonumber & = \lambda \left( \sum_{j=1}^p \frac{1}{2} (1-\alpha) \beta_j^2 + \alpha |\beta_j| \right)
\end{align}
be the elastic net penalty function \citep{zou2005regularization}, which corresponds to the ridge penalty for $\alpha=0$ and to the lasso penalty for $\alpha=1$. Notice that we are not penalizing the intercept $\beta_0$. However, the theory we develop also applies to the case in which the intercept is penalized. Other relevant penalty functions are the bridge penalty \citep{frank1993} where $\penn{\lambda}{\boldsymbol{\beta}}{\alpha} = \lambda \|\boldsymbol{\beta}\|_\alpha^\alpha$ with $\alpha > 0$, the SCAD \citep{fan2001variable} and the MCP \citep{zhang2010} penalties. Recently \citet{bianco2022penalized} studied the sign penalty function, which is known as the $\ell_1/\ell_2$ penalty, for penalized robust estimators in logistic regression. It is defined as
\begin{equation*}
\penn{\lambda}{\boldsymbol{\beta}}{\alpha} = \lambda \frac{\|\boldsymbol{\beta}\|_1}{\|\boldsymbol{\beta}\|_2} \mathbb{I}_{\{\boldsymbol{\beta} \neq \mathbf{0} \}} = \lambda \|s(\boldsymbol{\beta})\|_1 \mathbb{I}_{\{\boldsymbol{\beta} \neq \mathbf{0} \}} ,
\end{equation*}
where $s(\boldsymbol{\beta}) = \boldsymbol{\beta}/\|\boldsymbol{\beta}\|_2$ is the sign function.

For the computation of these estimators we use an IRLWS algorithm which is described in Appendix \ref{sec:computational}. The use of a redescending $\rho$ function, essential to get a high breakdown point robust estimator (see Section \ref{sec:asres}), leads to a non convex optimization problem, in which the loss function may have several local minima. For this reason, the choice of the initial point for the IRWLS algorithm is crucial. In particular, we need to start the iterations at an estimator that is already close to the global minimum in order to avoid convergence to a different local minimum. In low dimensions, solutions to this problem are often obtained by the subsampling procedure e.g., the fast S-estimator for regression of \citet{salibian2006fast}. Penalized robust procedures can use the same approach e.g., in \citet{alfons2013sparse}. However, when the dimension of the problem becomes large, the number of sub-samples one needs to explore in order to find a good initial estimator soon becomes computationally infeasible. For this reason, we propose a deterministic algorithm, reported in Appendix \ref{sec:computational} and inspired by \citet{PenaYohai1999}. A similar algorithm is introduced in \citet{agostinelli2019initial} for the unpenalized MT-estimators, and it is proved to be highly robust and computationally efficient.

\section{Selection of the penalty parameters}
\label{sec:penaltyselection}
Robust selection of the penalty parameters can be performed using information like criteria, such as AIC or BIC, or by Cross-Validation. Information criteria are of the form
\begin{equation*}
IC(\lambda, \alpha) = \text{Goodness of the fit} + \text{Complexity Penalty} \times \text{Degree of Freedom}.
\end{equation*}
For penalized methods the degrees of freedom, which measure the complexity of the model, have been studied extensively; see e.g. \citep{friedman2001elements, zuo2007, TibshiraniTaylor2012} and the references therein. For instance, for the Elastic Net penalty the degrees of freedom are based on an ``equivalent projection'' matrix $\mathbf{H}$ that in our context is given by
\begin{equation}
\mathbf{H}=\sqrt{\mathbf{W}} \mathbf{X}_\mathcal{A}( \mathbf{X}_\mathcal{A}^\top \mathbf{W} \mathbf{X}_\mathcal{A} + \lambda (1-\alpha) \mathbf{I} )^{-1}\mathbf{X}_\mathcal{A}^\top\sqrt{\mathbf{W}}
\end{equation}
where $\mathbf{W}$ is a diagonal matrix of weights given by $\mathbf{w}^2(\x_i^\top \hat{\boldsymbol{\beta}}_n) \mathbf{w}^\ast(\x_i^\top \hat{\boldsymbol{\beta}}_n)$ ($i=1,\ldots,n$) defined in equation \eqref{equ:mtIRWLS} in Appendix \ref{sec:computational} and  $\mathbf{X}_\mathcal{A}$ is the matrix containing only the covariates corresponding to the active set $\mathcal{A}$, which is defined as the set of covariates with estimated coefficients different from $0$. 

Then, the effective degrees of freedom are given by 
\begin{equation*}
\text{df}(\lambda, \alpha) = \operatorname{tr}(\mathbf{H})
\end{equation*}
which leads to the definitions of the following robust information criterion
\begin{equation*}
RIC(\lambda, \alpha) =  \hat{L}_n(\hat{\boldsymbol{\beta}}_n) + \operatorname{C}(n,p) \ \operatorname{df}(\lambda,\alpha)
\end{equation*}
and to the choice of $\lambda$ and $\alpha$ that minimise these measure. $\operatorname{C}(n,p)$ is the complexity penalty which in general depends on the sample size $n$ and on the number of parameters $p$; in case $\operatorname{C}(n,p) = 2$ we obtain a robust version of the AIC, while for $\operatorname{C}(n,p) = \log(n)$ we have a robust version of the BIC. As suggested in \citet{avella2018robust}, see also \citet{fan2013} and the references therein, we can set $\operatorname{C}(n,p) = \log(n) + \gamma \log(p)$ and $0 \le \gamma \le 1$ is a constant to have a robust extended BIC. For other penalties, the degree of freedom can be evaluated in a similar way.

Alternatively, we can select the penalty parameters by cross-validation. We first randomly divide the data into $K$ disjoint subsets of approximately the same number of observations. Let $c_1, \ldots, c_K$ be the number of observations in each of the subsets.  Let $ \hat{\boldsymbol{\beta}}^{(j)}_{\lambda,\alpha}$ be an estimator of $\boldsymbol{\beta}$ computed with penalty parameters $(\lambda,\alpha)$ and without using the observations of the $j$-th subset. The robust cross-validation criterion chooses $\lambda,\alpha$ that minimise
\begin{equation*}
\operatorname{RCV}(\lambda,\alpha) = \frac{1}{n} \sum_{j=1}^K c_j \hat{L}_n^{(j)}(\hat{\boldsymbol{\beta}}^{(j)}_{\lambda,\alpha})
\end{equation*}
where $\hat{L}_n^{(j)}$ is the loss function defined in \eqref{eq:lntilde} for the observations belonging to the $j$-th cross-validation subset.

\section{Asymptotic and robustness results}\label{sec:asres}

In this section we discuss some asymptotic and robustness properties of the penalized MT-estimators. 
Throughout this and the following section, we assume $(y_1, \x_1), \ldots,$ $(y_n, \x_n)$ and $(y,\x)$ to be i.i.d. random vectors following a GLM  with parameter $\boldsymbol{\beta}_*$, distribution $F$ and continuous and strictly increasing link function $g$ and we consider the estimator $\hat{\boldsymbol{\beta}}_{n}$ defined in \eqref{eq:MTpenalized}. 
\subsection{Consistency}
The following assumptions are needed to prove consistency of the penalized MT-estimators, and they are the same used for the MT-estimators, see \citet{ValdoraYohai2014}.
\begin{itemize}
\item[A1] $\sup_{\mu}\var_{\mu}(t(y)) = A <\infty$.
\item[A2] $\ m(\mu)$ is univocally defined for all $\mu$ and
$\mu_{1} < \mu_{2}$ implies $m(\mu_{1}) < m(\mu_{2})$.
\item[A3] $F(\mu,\phi)$ is continuous in $\mu$ and $\phi$.
\item[A4] Suppose that $\mu_{1} < \mu_{2}$, $Y_{1} \sim F(\mu_{1},\phi)$ and $Y_{2} \sim F(\mu_{2},\phi)$ then $Y_{1}$ is stochastically smaller $Y_{2}$.
\item[A5] The function $t$ is strictly increasing and continuous.
\end{itemize}  
A function $\rho:\mathbb R \to \mathbb R$ is a $\rho$-function if it satisfies the following assumptions
\begin{itemize}
\item[B1] $\rho(u) \geq$ $0$, $\rho(0)=0$ and $\rho(u)=\rho(-u)$.
\item[B2] $\lim_{u\rightarrow\infty}\rho(u)=a<\infty$. Without loss of generality we will assume $a=1$.
\item[B3] $0\leq u<v$ implies $\rho(u)\leq\rho(v)$.
\item[B4] $0\leq u<v$ and $\rho(u)<1$ implies $\rho(u)<\rho(v)$.
\item[B5] $\rho$ is continuous.
\end{itemize}  
For our purposes we also need  $\rho$ to verify the following
\begin{itemize}
\item[B6] Let $A$ be as in A1, then there exists $\eta$ such that $\rho(A^{1/2}+\eta)<1$.
\end{itemize}  
We also consider the following assumption on the distribution of the covariate vector $\x$
\begin{itemize}
\item[B7] Let $S=\left\{\mathbf{t} \in \mathbb{R}^{p+1}: ||\mathbf{t}||=1\right\}$, then $\inf \left\{\mathbb{P} \left(\mathbf{t}^\top \mathbf{x} \neq \mathbf{0}\right), \mathbf{t} \in S \right\}>0$. 
\end{itemize}
The following lemma has been proved in \citet{ValdoraYohai2014}, as part of the proof of their Theorem 1. We include it here because it is needed to state the next assumption.
\begin{lemma} \label{lemma:tau} Assume A1-A5 and B1-B7, then
	\begin{equation}
		\label{eq:tau}  
		\tau= \inf_{\mathbf{t} \in S} \mathbb{E}_{\boldsymbol{\beta}_*}\left(\lim_{\zeta\rightarrow\infty} \rho\left( \frac{t(y) -\mginv(\zeta\x^\top \mathbf{t})}{\sqrt{a(\phi)}} \right)\right) - L(\boldsymbol{\beta}_*) >0.
	\end{equation}
\end{lemma}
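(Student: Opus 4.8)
The plan is to set $r(\mathbf{t},y,\mathbf{x}):=\lim_{\zeta\to\infty}\rho((t(y)-\mginv(\zeta\mathbf{x}^\top\mathbf{t}))/\sqrt{a(\phi)})$ and $\psi(\mathbf{t}):=\E_{\boldsymbol{\beta}_*}(r(\mathbf{t},y,\mathbf{x}))$, so that $\tau=\inf_{\mathbf{t}\in S}\psi(\mathbf{t})-L(\boldsymbol{\beta}_*)$, and then to establish two things: $\psi(\mathbf{t})>L(\boldsymbol{\beta}_*)$ for each fixed $\mathbf{t}\in S$, and that this inequality holds uniformly over the compact set $S$. First I would check that the limit defining $r$ exists: by A2 and the monotonicity of $g$, $\mginv=m\circ g^{-1}$ is increasing, $\zeta\mapsto\zeta\mathbf{x}^\top\mathbf{t}$ is monotone, and $\rho$ is continuous (B5) with $\rho(u)\to 1$ as $|u|\to\infty$ (B1--B2); hence $\mginv(\zeta\mathbf{x}^\top\mathbf{t})\to\ell(\mathbf{t},\mathbf{x})$, which equals $\ell_{+}:=\lim_{\mu\uparrow\mu_2}m(\mu)$ on $\{\mathbf{x}^\top\mathbf{t}>0\}$, equals $\ell_{-}:=\lim_{\mu\downarrow\mu_1}m(\mu)$ on $\{\mathbf{x}^\top\mathbf{t}<0\}$ (each possibly $\pm\infty$), and equals $\mginv(0)$ on $\{\mathbf{x}^\top\mathbf{t}=0\}$; accordingly $r(\mathbf{t},y,\mathbf{x})=\rho((t(y)-\ell(\mathbf{t},\mathbf{x}))/\sqrt{a(\phi)})$, read as $1$ when $\ell=\pm\infty$. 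Throughout I write $\mu(\mathbf{x}):=g^{-1}(\mathbf{x}^\top\boldsymbol{\beta}_*)\in(\mu_1,\mu_2)$, well defined because $g$ maps $(\mu_1,\mu_2)$ continuously and increasingly onto $\mathbb{R}$.

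The engine is the defining property \eqref{eq:m} of $m$: for $y\mid\mathbf{x}\sim F(\cdot,\mu(\mathbf{x}),\phi)$ one has $\E_{\mu(\mathbf{x})}(\rho((t(y)-\gamma)/\sqrt{a(\phi)}))\ge\E_{\mu(\mathbf{x})}(\rho((t(y)-m(\mu(\mathbf{x})))/\sqrt{a(\phi)}))$ for every constant $\gamma$, with strict inequality when $\gamma\neq m(\mu(\mathbf{x}))$ because the minimiser is unique. Since $\mginv(\mathbf{x}^\top\boldsymbol{\beta}_*)=m(\mu(\mathbf{x}))$, conditioning on $\mathbf{x}$ and integrating already yields $\psi(\mathbf{t})\ge L(\boldsymbol{\beta}_*)$, i.e.\ $\tau\ge 0$; the whole content of the lemma is strictness, uniform over $S$. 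With $\Delta(\mathbf{t},\mathbf{x}):=\E_{\mu(\mathbf{x})}(\rho((t(y)-\ell(\mathbf{t},\mathbf{x}))/\sqrt{a(\phi)}))-\E_{\mu(\mathbf{x})}(\rho((t(y)-m(\mu(\mathbf{x})))/\sqrt{a(\phi)}))\ge 0$ we obtain $\psi(\mathbf{t})-L(\boldsymbol{\beta}_*)=\E_{\mathbf{x}}(\Delta(\mathbf{t},\mathbf{x}))$, and $\Delta(\mathbf{t},\mathbf{x})>0$ whenever $\mathbf{x}^\top\mathbf{t}\neq 0$: then $\ell(\mathbf{t},\mathbf{x})\in\{\ell_{-},\ell_{+}\}$, whereas by A2 $m(\mu(\mathbf{x}))$ is strictly between $\ell_{-}$ and $\ell_{+}$, so $\ell(\mathbf{t},\mathbf{x})\neq m(\mu(\mathbf{x}))$; in the subcase $\ell(\mathbf{t},\mathbf{x})=\pm\infty$ one instead invokes B6 together with A1 and Chebyshev's inequality, which give $\sup_{\mu}\E_{\mu}(\rho((t(y)-m(\mu))/\sqrt{a(\phi)}))<1$, so $\Delta>0$ there as well. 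Hence $\psi(\mathbf{t})>L(\boldsymbol{\beta}_*)$ for each $\mathbf{t}$, since $\{\mathbf{x}^\top\mathbf{t}\neq 0\}$ has positive probability by B7.

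Pointwise strictness is not enough, because $\Delta(\mathbf{t},\mathbf{x})$ can tend to $0$ as $\mu(\mathbf{x})$ approaches $\mu_1$ or $\mu_2$ when the corresponding limit of $m$ is finite; the bound must therefore be made uniform by compactness of $S$. The obstacle is that $\psi$ itself need not be continuous on $S$: $\mathbf{t}\mapsto\mathbb{I}\{\mathbf{x}^\top\mathbf{t}=0\}$ is only upper semicontinuous, and $\{\mathbf{x}^\top\mathbf{t}=0\}$ may carry positive $\mathbf{x}$-mass for particular $\mathbf{t}$ (B7 only keeps this mass below $1$). The remedy is to run the compactness argument on a lower semicontinuous minorant. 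Put $\Delta_{\pm}(\mu):=\E_{\mu}(\rho((t(y)-\ell_{\pm})/\sqrt{a(\phi)}))-\E_{\mu}(\rho((t(y)-m(\mu))/\sqrt{a(\phi)}))$, strictly positive for every $\mu\in(\mu_1,\mu_2)$ by the previous paragraph, and note
\begin{equation*}
\psi(\mathbf{t})-L(\boldsymbol{\beta}_*)\ \ge\ \chi(\mathbf{t}):=\E_{\mathbf{x}}\!\left(\Delta_{+}(\mu(\mathbf{x}))\,\mathbb{I}\{\mathbf{x}^\top\mathbf{t}>0\}\right)+\E_{\mathbf{x}}\!\left(\Delta_{-}(\mu(\mathbf{x}))\,\mathbb{I}\{\mathbf{x}^\top\mathbf{t}<0\}\right),
\end{equation*}
obtained by dropping the nonnegative $\{\mathbf{x}^\top\mathbf{t}=0\}$ term. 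The indicators $\mathbb{I}\{\mathbf{x}^\top\mathbf{t}>0\}$ and $\mathbb{I}\{\mathbf{x}^\top\mathbf{t}<0\}$ are lower semicontinuous in $\mathbf{t}$ (they can jump only downward, at points where their value is $0$); multiplying by the $\mathbf{t}$-free nonnegative weights $\Delta_{\pm}(\mu(\mathbf{x}))$ preserves this, and since all integrands lie in $[0,1]$, Fatou's lemma makes $\chi$ lower semicontinuous on $S$. Moreover $\chi(\mathbf{t})>0$ for every $\mathbf{t}\in S$: by B7, $\mathbb{P}(\mathbf{x}^\top\mathbf{t}>0)+\mathbb{P}(\mathbf{x}^\top\mathbf{t}<0)=\mathbb{P}(\mathbf{x}^\top\mathbf{t}\neq 0)\ge\delta>0$ with $\delta$ independent of $\mathbf{t}$, while $\Delta_{\pm}>0$ on all of $(\mu_1,\mu_2)$ and $\mu(\mathbf{x})\in(\mu_1,\mu_2)$ almost surely, so at least one of the two expectations is strictly positive. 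A lower semicontinuous function on a compact set attains its infimum, hence $\inf_{\mathbf{t}\in S}\chi(\mathbf{t})=\chi(\mathbf{t}_0)>0$, and therefore $\tau\ge\inf_{\mathbf{t}\in S}\chi(\mathbf{t})>0$.

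The main obstacle is precisely this passage from pointwise to uniform strictness: the gap $\Delta$ degenerates at the boundary of the parameter space, so compactness of $S$ cannot be avoided, and it must be applied to the minorant $\chi$ rather than to $\psi$ because the hyperplanes $\{\mathbf{x}^\top\mathbf{t}=0\}$ need not be $\mathbf{x}$-null. If one is willing to assume that the covariate vector places no mass on any hyperplane, these sets are null for every $\mathbf{t}$, $\psi$ is continuous by dominated convergence, and the proof collapses to pointwise positivity plus compactness, with B7 still supplying $\mathbb{P}(\mathbf{x}^\top\mathbf{t}\neq 0)>0$.
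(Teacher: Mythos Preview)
Your proof is correct and takes a genuinely different route from the paper's. The paper does not argue via lower semicontinuity on $S$. Instead, it produces a bound that is manifestly independent of $\mathbf{t}$ from the outset: using B7 it fixes $\delta=\inf_{\mathbf{t}\in S}\mathbb{P}(\mathbf{t}^\top\mathbf{x}\neq 0)>0$, then chooses a compact interval $[K_1,K_2]$ with $\mathbb{P}(\boldsymbol{\beta}_*^\top\mathbf{x}\in[K_1,K_2])>1-\delta/2$, so that the event $V_{\mathbf{t}}=\{\mathbf{t}^\top\mathbf{x}\neq 0,\ \boldsymbol{\beta}_*^\top\mathbf{x}\in[K_1,K_2]\}$ has probability at least $\delta/2$ for every $\mathbf{t}$. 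On $V_{\mathbf{t}}$ the parameter $\mu(\mathbf{x})$ lies in the fixed compact set $[g^{-1}(K_1),g^{-1}(K_2)]$, and the paper then uses \emph{continuity of $\mu\mapsto\Delta_\pm(\mu)$} (your notation; their $C_i$) on that compact set to extract a uniform lower bound $c_0>0$, giving $\psi(\mathbf{t})-L(\boldsymbol{\beta}_*)\ge c_0\,\delta/2$ directly. So the paper's compactness argument lives in $\mu$-space, while yours lives in $\mathbf{t}$-space.

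What each approach buys: the paper's argument yields an explicit, $\mathbf{t}$-free constant $c_0\delta/2$, at the price of invoking continuity of $\Delta_\pm$ in $\mu$ (which leans on A3 and the continuity of $m$). Your argument is more structural: it needs only pointwise strict positivity of $\Delta_\pm$ on $(\mu_1,\mu_2)$, lower semicontinuity of the open-halfspace indicators in $\mathbf{t}$, Fatou, and compactness of $S$. You therefore use fewer regularity facts about $\mu\mapsto\Delta_\pm(\mu)$, but you do not obtain a quantitative lower bound. Your handling of the $\{\mathbf{x}^\top\mathbf{t}=0\}$ contribution, by passing to the l.s.c.\ minorant $\chi$, is exactly the right way to avoid the discontinuity that B7 alone does not rule out.
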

The following assumption on the distribution of the covariate vector $\x$, which is a little stronger than B7,  is needed for consistency.
\begin{itemize}
\item[B8] Let $\tau$ be defined as in \eqref{eq:tau}, then $\inf \left\{\mathbb{P} \left(\mathbf{t}^\top \mathbf{x} \neq \mathbf{0}\right), \mathbf{t} \in S \right\}>1-\tau$.
\end{itemize}
Note that assumption B8 is trivially verified if $\x = (1, \X)$ and $\X$ has a density. For $\epsilon > 0$ and a vector $\boldsymbol{\beta}$ we denote by $B(\boldsymbol{\beta}, \epsilon)$ the $L_2$-ball centered in $\boldsymbol{\beta}$ and radius $\epsilon$. Depending on the investigated property some of these assumptions are required on the penalty function. 
\begin{itemize}
\item[P1] $\penn{\lambda}{\boldsymbol{\beta}}{\alpha}$ is Lipschitz in a neighborhood of $\boldsymbol{\beta}_*$, that is, there exits $\epsilon > 0$ and a constant $K$ that does not depend on $\alpha$, such that if $\boldsymbol{\beta}_1, \boldsymbol{\beta}_2 \in B(\boldsymbol{\beta}_*, \epsilon)$ then $| \penn{\lambda}{\boldsymbol{\beta}_1}{\alpha} - \penn{\lambda}{\boldsymbol{\beta}_2}{\alpha}| \le K \| \boldsymbol{\beta}_1 - \boldsymbol{\beta}_2 \|_1$.

\end{itemize}
Assumption P1 holds, e.g., for elastic net, SCAD and MCP, for the sign penalty when $\|\boldsymbol{\beta}_*\|_2 > 0$ and for the bridge when $\alpha \ge 1$. In what follows we assume we have a sequence $\{\lambda_n,\alpha_n\}_{n=1}^\infty$ indexed by the sample size $n$ of parameters for the penalty term. Such a sequence might be random as well. The following theorem establishes the strong consistency of penalized MT-estimators. All the proofs are deferred to Appendix \ref{sec:proofs}. 

\begin{theorem}[Consistency]\label{teo:consistency}
 Assume A1-A5 and B1-B6 and B8 hold, and that  $\penn{\lambda_n}{\boldsymbol{\beta}_*}{\alpha_n} \xrightarrow{a.s.} 0$ when $n \rightarrow \infty$. Then $\hat{\boldsymbol{\beta}}_n$ is strongly consistent for $\boldsymbol{\beta}_{*}$.
\end{theorem}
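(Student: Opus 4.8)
The plan is to follow the consistency proof for the unpenalized MT-estimator in \citet{ValdoraYohai2014}, adapted along the lines of \citet{bianco2022penalized}, the only genuinely new point being to dispose of the penalty through one inequality at the outset. Since the penalty functions under consideration are nonnegative, the fact that $\hat{\boldsymbol{\beta}}_n$ minimises $\hat{L}_n$ yields, for every $n$,
\[
L_n(\hat{\boldsymbol{\beta}}_n)\ \le\ L_n(\hat{\boldsymbol{\beta}}_n)+\penn{\lambda_n}{\hat{\boldsymbol{\beta}}_n}{\alpha_n}\ =\ \hat{L}_n(\hat{\boldsymbol{\beta}}_n)\ \le\ \hat{L}_n(\boldsymbol{\beta}_*)\ =\ L_n(\boldsymbol{\beta}_*)+\penn{\lambda_n}{\boldsymbol{\beta}_*}{\alpha_n}.
\]
The summands of $L_n$ are i.i.d.\ and bounded by $1$ by B2, so $L_n(\boldsymbol{\beta}_*)\xrightarrow{a.s.}L(\boldsymbol{\beta}_*)$ by the strong law of large numbers, while $\penn{\lambda_n}{\boldsymbol{\beta}_*}{\alpha_n}\xrightarrow{a.s.}0$ by hypothesis; hence $\limsup_n L_n(\hat{\boldsymbol{\beta}}_n)\le L(\boldsymbol{\beta}_*)$ almost surely. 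From here on the penalty plays no role, and it remains to show that this bound on the \emph{unpenalized} loss at $\hat{\boldsymbol{\beta}}_n$ already forces $\hat{\boldsymbol{\beta}}_n\to\boldsymbol{\beta}_*$.

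Next I would record two ingredients, both inherited from the unpenalized theory. First, $\boldsymbol{\beta}_*$ is the unique minimiser of $L$: conditionally on $\x$, $L(\boldsymbol{\beta})$ depends on $\boldsymbol{\beta}$ only through $\mginv(\x^\top\boldsymbol{\beta})$, and by the definition \eqref{eq:m} of $m$ together with A2 the conditional expectation of $\rho((t(y)-\gamma)/\sqrt{a(\phi)})$ is uniquely minimised at $\gamma=\mginv(\x^\top\boldsymbol{\beta}_*)$; since $g^{-1}$ and $m$ are strictly increasing, $\mginv$ is too, so $L(\boldsymbol{\beta})=L(\boldsymbol{\beta}_*)$ forces $\x^\top(\boldsymbol{\beta}-\boldsymbol{\beta}_*)=0$ almost surely, which by B8 (indeed B7 already suffices here) gives $\boldsymbol{\beta}=\boldsymbol{\beta}_*$. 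Second, $\sup_{\boldsymbol{\beta}\in\mathcal{K}}|L_n(\boldsymbol{\beta})-L(\boldsymbol{\beta})|\xrightarrow{a.s.}0$ for every compact $\mathcal{K}$, by boundedness and continuity of $\rho$ (B2, B5), continuity of $\mginv$, and a routine equicontinuity argument, and $L$ is continuous.

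The substantive step is to show that $\hat{\boldsymbol{\beta}}_n$ remains bounded almost surely; this is where Lemma \ref{lemma:tau} and B8 enter. If not, on an event of positive probability there is a subsequence with $\hat{\boldsymbol{\beta}}_{n_k}=\zeta_k\mathbf{t}_k$, $\zeta_k\to\infty$ and $\mathbf{t}_k\in S$, and by compactness of $S$ we may assume $\mathbf{t}_k\to\mathbf{t}_0\in S$. One then wants
\[
\liminf_{k}L_{n_k}(\hat{\boldsymbol{\beta}}_{n_k})\ \ge\ \E_{\boldsymbol{\beta}_*}\!\left(\lim_{\zeta\to\infty}\rho\!\left(\frac{t(y)-\mginv(\zeta\,\x^\top\mathbf{t}_0)}{\sqrt{a(\phi)}}\right)\right)\ \ge\ L(\boldsymbol{\beta}_*)+\tau\ >\ L(\boldsymbol{\beta}_*),
\]
which contradicts the $\limsup$ bound of the first step. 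The main obstacle is precisely to make this inequality rigorous, since it couples the two limits $n_k\to\infty$ and $\zeta_k\to\infty$ with the perturbation $\mathbf{t}_k\to\mathbf{t}_0$, so that a pointwise law of large numbers is not enough. Following \citet{ValdoraYohai2014} I would use the uniform bound $\rho\le 1$ and the monotonicity of $\mginv$ to produce, for each finite $R$, a lower bound for $L_n(\zeta\mathbf{t})$ valid simultaneously for all $\zeta\ge R$ that is lower semicontinuous in $\mathbf{t}$, then apply a uniform law of large numbers over $\mathbf{t}\in S$ and let $R\to\infty$; assumption B8 guarantees, via Lemma \ref{lemma:tau}, that $\mathbb{P}(\x^\top\mathbf{t}_0\ne\mathbf{0})$ is large enough for the limiting value to exceed $L(\boldsymbol{\beta}_*)$.

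Finally, once boundedness is in hand the conclusion is routine: any subsequence of $\hat{\boldsymbol{\beta}}_n$ has a further subsequence converging to some finite $\boldsymbol{\beta}_0$, and by uniform convergence on compacts and continuity of $L$ one gets $L_{n_j}(\hat{\boldsymbol{\beta}}_{n_j})\to L(\boldsymbol{\beta}_0)$, so $L(\boldsymbol{\beta}_0)\le L(\boldsymbol{\beta}_*)$ by the first step and hence $\boldsymbol{\beta}_0=\boldsymbol{\beta}_*$ by uniqueness. Since every subsequence has a sub-subsequence converging to $\boldsymbol{\beta}_*$, we conclude $\hat{\boldsymbol{\beta}}_n\xrightarrow{a.s.}\boldsymbol{\beta}_*$.
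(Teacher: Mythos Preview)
Your argument is correct and reaches the conclusion, but it is organized differently from the paper's proof. The paper reduces everything to a general consistency theorem (Theorem~\ref{teo:gralcons}, taken from \citet{bianco2022penalized}) whose hypotheses it then verifies: a \emph{global} uniform law of large numbers $\sup_{\boldsymbol{\beta}\in\mathbb{R}^{p+1}}|L_n(\boldsymbol{\beta})-L(\boldsymbol{\beta})|\to 0$ a.s., established via a VC-subgraph argument (Lemma~\ref{lemma:vcsubgraph}), together with the population-level separation $\inf_{\|\boldsymbol{\beta}-\boldsymbol{\beta}_*\|>\epsilon}L(\boldsymbol{\beta})>L(\boldsymbol{\beta}_*)$, proved through Lemma~\ref{lemma:liminf}. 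The global ULLN lets the paper decouple the sample limit $n\to\infty$ from the analysis of $L$ as $\|\boldsymbol{\beta}\|\to\infty$, pushing the latter entirely to the population level.

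Your route instead strips the penalty off at the outset via nonnegativity, shows boundedness of $\hat{\boldsymbol{\beta}}_n$ directly at the \emph{sample} level, and afterwards needs only a ULLN on compacts. This is closer in spirit to the original \citet{ValdoraYohai2014} proof and avoids the VC machinery, but the price is that the two limits $n_k\to\infty$ and $\zeta_k\to\infty$ are now coupled in the boundedness step; you rightly flag this as the main obstacle, and carrying out your sketch rigorously amounts to reproducing the content of Lemmas~\ref{lemma:phistar} and~\ref{lemma:liminf} at the empirical rather than population level. Two small caveats: first, nonnegativity of the penalty, on which your opening inequality $L_n(\hat{\boldsymbol{\beta}}_n)\le \hat L_n(\hat{\boldsymbol{\beta}}_n)$ rests, is not among the stated hypotheses of Theorem~\ref{teo:consistency} (though every penalty discussed in the paper satisfies it); second, your compact-ULLN step is asserted via ``a routine equicontinuity argument'', whereas the paper gets even the global version for free from the VC bound, so in terms of what is actually written down the paper's route is the shorter one.
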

It is sufficient that $\lambda_n \xrightarrow{a.s.} 0$ for elastic net, bridge, sign, SCAD and MCP to fulfill the assumption in the above Theorem.

\subsection{Rate of convergence, variable selection and asymptotic normality}
The theorems in this subsection are analogue to Theorems 2 to 8 in \citet{bianco2022penalized}, where they were proved for logistic regression. Here we show that they are also valid for penalized MT-estimators for general GLMs and establish the necessary assumptions.
\begin{itemize}
	\item[C1]\label{cond:c1} $F(\cdot,\mu,\phi)$ has three continuous and bounded derivatives as a function of $\mu$ and the link function $g(\mu)$ is twice continuously differentiable.
	\item[C2]\label{cond:c2} $\rho$ has three continuous and bounded derivatives. We write $\psi=\rho^{\prime}$.
	\item[C3]\label{cond:c3} $\E_{\mu}(\psi^{\prime}(t(y)-m(\mu)))  \neq 0$ for all $\mu$.
  \item[C4]\label{cond:c4}  Let $\boldsymbol{\Psi}(y, \x, \boldsymbol{\beta})$ be the derivative of $\rho\left((t(y)-\mginv(\x^\top\boldsymbol{\beta}))/\sqrt{a(\phi)}\right)$ with respect to $\boldsymbol{\beta}$ and $\mathbf{J}(y,\x,\boldsymbol{\beta})$ be the Hessian matrix. There exists $\eta>0$ such that $\E\left(\sup_{||\boldsymbol{\beta}-\boldsymbol{\beta}_*||\leq\eta}\left\vert J^{j,k}(y,\x,\boldsymbol{\beta})\right\vert \right) < \infty$, for all $1 \leq j,k\leq p,$ where $||\cdot||$ denotes the $L_{2}$ norm, and $\E(\mathbf{J}(y,\x,\boldsymbol{\beta}_*))$ is non singular.
\end{itemize}
It is worth mentioning that Lemma 5 in \citet{ValdoraYohai2014} proves that the function $m$ is twice differentiable, which ensures the existence of the derivatives in Assumption C4.
\begin{remark}
Assumption C2 does not hold for the Tukey-Bisquare $\rho$-function. A family of functions satisfying B1-B6 and C2 is suggested in \citet{ValdoraYohai2014} as
\begin{equation*}
\rho_k(u) =
\begin{cases}
1 - \left( 1 - \frac{u}{k}^2 \right)^4 & |u| \le k , \\
1 & |u| > k .
\end{cases}
\end{equation*}
with $k > A^{1/2}$. An alternative solution is to use an approximated version of the Tukey-Bisquare function, e.g. by the use of a positive, symmetric mollifier. However, as it is shown in the Monte Carlo simulation (Section \ref{sec:montecarlo}) the fact that Tukey-Bisquare function does not verify C2 seems to have very little impact on the performance of the MT-estimators.
\end{remark}
In order to establish the asymptotic normality of penalized MT-estimators we define the expectation of the Hessian matrix $\mathbf{J}$ and the variance of the gradient vector $\boldsymbol{\Psi}$ together with their empirical versions
\begin{align}\label{eq:AB}
\mathbf{A} & = \E_{\boldsymbol{\beta}_{*}}(\mathbf{J}(y,\x ,\boldsymbol{\beta}_*)), & \mathbf{B} & = \E_{\boldsymbol{\beta}_*}(\boldsymbol{\Psi}(y,\x ,\boldsymbol{\beta}_*)\boldsymbol{\Psi}(y,\x,\boldsymbol{\beta}_*)^\top), \\
\nonumber \mathbf{A}_n(\boldsymbol{\beta}) & = \frac{1}{n}\sum_{i=1}^{n} \mathbf{J}(y_i,\x_i ,\boldsymbol{\beta}), & \mathbf{B}_n(\boldsymbol{\beta}) & = \frac{1}{n}\sum_{i=1}^{n}  \boldsymbol{\Psi}(y_i,\x_i,\boldsymbol{\beta})\boldsymbol{\Psi}(y_i,\x_i,\boldsymbol{\beta})^\top.
\end{align}

\begin{theorem}[Order of convergence] \label{teo:order}
Assume A2, C1-C4. Furthermore, assume $\hat{\boldsymbol{\beta}}_n \xrightarrow{p} \boldsymbol{\beta}_*$.
	\begin{enumerate}
		\item[a)] If assumption P1 holds, $\|\hat{\boldsymbol{\beta}}_n-\boldsymbol{\beta}_*\|=O_{\mathbb{P}}\left(\lambda_n+1 / \sqrt{n}\right)$. Hence, if $\lambda_n=$ $O_{\mathbb{P}}(1 / \sqrt{n})$, we have that $\|\hat{\boldsymbol{\beta}}_n-\boldsymbol{\beta}_0\|=O_{\mathbb{P}}(1 / \sqrt{n})$, while if $\lambda_n \sqrt{n} \rightarrow \infty$, $\|\tilde{\boldsymbol{\beta}}_n-\boldsymbol{\beta}_0\|=O_{\mathbb{P}}\left(\lambda_n\right)$.
		\item[b)] Suppose $\penn{\lambda_n}{\boldsymbol{\beta}}{\alpha}=\sum_{\ell=1}^p J_{\ell, \lambda_n}\left(\left|\beta_{\ell}\right|\right)$ where the functions $J_{\ell, \lambda_n}(\cdot)$ are twice continuously differentiable in $(0, \infty)$, take non-negative values, $J_{\ell, \lambda_n}^{\prime}\left(\left|\beta_{*, \ell}\right|\right) \geq 0$ and $J_{\ell, \lambda_n}(0)=0$, for all $1 \leq \ell \leq p$. Let
\begin{equation*}
a_n = \max \left\{J_{\ell, \lambda_n}^{\prime}\left(\left|\beta_{*, \ell}\right|\right): 1 \leq \ell \leq p \text{ and } \beta_{*, \ell} \neq 0\right\}
\end{equation*}
and let $b_n = \frac{1}{\sqrt{n}} + a_n$. In addition, assume that there exists a $\delta>0$ such that
\begin{equation*}
\sup \left\{\left|J_{\ell, \lambda_n}^{\prime \prime}\left(\left|\beta_{*, \ell}\right|+\tau \delta\right)\right|: \tau \in[-1,1], 1 \leq \ell \leq p \text { and } \beta_{*, \ell} \neq 0\right\} \xrightarrow{p} 0 .
\end{equation*}
Then, $\|\hat{\boldsymbol{\beta}}_n-\boldsymbol{\beta}_*\|=O_{\mathbb{P}}\left(b_n\right)$.
\end{enumerate}
\end{theorem}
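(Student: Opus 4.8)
\textbf{Proof plan for Theorem \ref{teo:order}.}

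The plan is to follow the standard M-estimation localization argument, adapted to the penalized setting as in \citet{bianco2022penalized}. The idea is to show that for a suitably chosen rate sequence $r_n$ (namely $r_n = \lambda_n + 1/\sqrt{n}$ in part (a), and $r_n = b_n$ in part (b)), the penalized objective $\hat{L}_n(\boldsymbol{\beta}_* + r_n \mathbf{u})$ is, with probability tending to one, larger than $\hat{L}_n(\boldsymbol{\beta}_*)$ for all $\mathbf{u}$ on a sphere $\|\mathbf{u}\| = M$ once $M$ is large enough. Since $\hat{L}_n$ is (locally) of the form smooth loss plus penalty, and since $\hat{\boldsymbol{\beta}}_n$ is assumed consistent, the minimizer $\hat{\boldsymbol{\beta}}_n$ must eventually lie inside the ball $\{\|\boldsymbol{\beta} - \boldsymbol{\beta}_*\| \le M r_n\}$, which gives $\|\hat{\boldsymbol{\beta}}_n - \boldsymbol{\beta}_*\| = O_{\mathbb{P}}(r_n)$.

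First I would Taylor-expand $L_n(\boldsymbol{\beta}_* + r_n \mathbf{u})$ around $\boldsymbol{\beta}_*$ to second order, using C1--C2 to justify the expansion and C4 to control the remainder uniformly on a neighborhood (the dominating function for $|J^{j,k}|$ from C4 plus the law of large numbers handles the Hessian term). This yields
\begin{equation*}
L_n(\boldsymbol{\beta}_* + r_n \mathbf{u}) - L_n(\boldsymbol{\beta}_*) = r_n\, \mathbf{u}^\top \nabla L_n(\boldsymbol{\beta}_*) + \tfrac{1}{2} r_n^2\, \mathbf{u}^\top \mathbf{A}_n(\boldsymbol{\beta}_*) \mathbf{u} + o_{\mathbb{P}}(r_n^2 \|\mathbf{u}\|^2),
\end{equation*}
where the gradient term is $O_{\mathbb{P}}(r_n / \sqrt{n}) \|\mathbf{u}\|$ because $\E_{\boldsymbol{\beta}_*}(\boldsymbol{\Psi}(y,\x,\boldsymbol{\beta}_*)) = \mathbf{0}$ (Fisher consistency, guaranteed by the role of $m$ and A2) so the CLT gives $\nabla L_n(\boldsymbol{\beta}_*) = O_{\mathbb{P}}(1/\sqrt{n})$; note $1/\sqrt{n} \le r_n$. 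The quadratic term is bounded below by $c\, r_n^2 \|\mathbf{u}\|^2$ for some $c > 0$ since $\mathbf{A} = \E(\mathbf{J}(y,\x,\boldsymbol{\beta}_*))$ is nonsingular (C4) and $\mathbf{A}_n(\boldsymbol{\beta}_*) \to \mathbf{A}$ a.s. For the penalty contribution: in part (a), P1 gives $|\penn{\lambda_n}{\boldsymbol{\beta}_* + r_n \mathbf{u}}{\alpha_n} - \penn{\lambda_n}{\boldsymbol{\beta}_*}{\alpha_n}| \le K r_n \|\mathbf{u}\|_1 \le K \sqrt{p}\, r_n \|\mathbf{u}\|$ with $r_n = \lambda_n + 1/\sqrt{n}$, which is of order $r_n \|\mathbf{u}\|$ — smaller than the quadratic term once $M = \|\mathbf{u}\|$ is large. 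In part (b), I would instead expand each coordinate penalty $J_{\ell,\lambda_n}(|\beta_{*,\ell} + r_n u_\ell|)$: for $\ell$ with $\beta_{*,\ell} \neq 0$ use a second-order Taylor expansion (valid since $J_{\ell,\lambda_n}$ is $C^2$ on $(0,\infty)$ and $|\beta_{*,\ell}|$ is bounded away from zero), producing a first-order term bounded by $a_n r_n |u_\ell|$ and a second-order term that is $o_{\mathbb{P}}(r_n^2)$ by the assumed uniform control on $J_{\ell,\lambda_n}^{\prime\prime}$; for $\ell$ with $\beta_{*,\ell} = 0$ the penalty increment $J_{\ell,\lambda_n}(r_n |u_\ell|) \ge 0$ only helps the lower bound and can be dropped. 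Summing, the penalty change is at least $-\sqrt{p}\, a_n r_n \|\mathbf{u}\| + o_{\mathbb{P}}(r_n^2 \|\mathbf{u}\|^2)$, and since $a_n r_n \le b_n r_n = r_n^2$ with $r_n = b_n$, this is again dominated by $c r_n^2 \|\mathbf{u}\|^2$ for large $M$.

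Combining, on $\|\mathbf{u}\| = M$ we get, with probability approaching one,
\begin{equation*}
\hat{L}_n(\boldsymbol{\beta}_* + r_n \mathbf{u}) - \hat{L}_n(\boldsymbol{\beta}_*) \ge r_n^2 \left( c M^2 - C M \right) + o_{\mathbb{P}}(r_n^2 M^2),
\end{equation*}
which is positive for $M$ large. Since $\hat{L}_n$ restricted to the ball $\{\|\boldsymbol{\beta} - \boldsymbol{\beta}_*\| \le M r_n\}$ is continuous and attains its minimum in the interior (the boundary values exceed the center value), and since consistency of $\hat{\boldsymbol{\beta}}_n$ puts it in this ball eventually, we conclude $\mathbb{P}(\|\hat{\boldsymbol{\beta}}_n - \boldsymbol{\beta}_*\| \le M r_n) \to 1$. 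The stated specializations in part (a) — $O_{\mathbb{P}}(1/\sqrt{n})$ when $\lambda_n \sqrt{n}$ is bounded, and $O_{\mathbb{P}}(\lambda_n)$ when $\lambda_n \sqrt{n} \to \infty$ — follow immediately by reading off which term dominates $r_n = \lambda_n + 1/\sqrt{n}$.

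I expect the main obstacle to be making the uniform-in-$\boldsymbol{\beta}$ control of the second-order remainder of $L_n$ rigorous: one needs the Taylor remainder, which involves $\mathbf{J}(y_i,\x_i,\tilde{\boldsymbol{\beta}})$ at an intermediate point $\tilde{\boldsymbol{\beta}}$ inside a shrinking ball, to converge (after scaling) to the deterministic $\mathbf{A}$. This is where assumption C4 does the work — the integrable envelope $\sup_{\|\boldsymbol{\beta} - \boldsymbol{\beta}_*\| \le \eta} |J^{j,k}(y,\x,\boldsymbol{\beta})|$ together with continuity of $\mathbf{J}$ in $\boldsymbol{\beta}$ (from C1, C2 and the twice-differentiability of $m$ noted after C4) gives, via a uniform law of large numbers, $\sup_{\|\boldsymbol{\beta} - \boldsymbol{\beta}_*\| \le \eta} \|\mathbf{A}_n(\boldsymbol{\beta}) - \mathbf{A}_n(\boldsymbol{\beta}_*)\| \to 0$ in the relevant sense, and hence the remainder is genuinely $o_{\mathbb{P}}(r_n^2 \|\mathbf{u}\|^2)$ on the sphere. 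The remaining bookkeeping — converting $\|\cdot\|_1$ to $\|\cdot\|_2$ with the $\sqrt{p}$ factor (here $p$ fixed), and checking that the cross term from the penalty does not interfere — is routine.
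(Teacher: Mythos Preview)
Your ingredients---second-order Taylor expansion of $L_n$, CLT for the score at $\boldsymbol{\beta}_*$ (using Fisher consistency so its mean vanishes), convergence of the empirical Hessian via the envelope in C4, and a Lipschitz control on the penalty---are exactly those used in the paper. The organization differs, though, and two points in your write-up need correction.

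First, the paper does not use a sphere argument. It works directly from the optimality inequality $\hat L_n(\hat{\boldsymbol\beta}_n)\le \hat L_n(\boldsymbol\beta_*)$: Taylor-expand $L_n(\hat{\boldsymbol\beta}_n)$ about $\boldsymbol\beta_*$ (the intermediate point $\boldsymbol\xi_n$ lies between the two and hence, by the assumed consistency, eventually enters the $\eta$-neighbourhood from C4, whence $\mathbf A_n(\boldsymbol\xi_n)\to\mathbf A$), bound the penalty increment, and solve the resulting quadratic inequality in $\|\hat{\boldsymbol\beta}_n-\boldsymbol\beta_*\|$. Your sphere presentation has a gap at the step ``consistency of $\hat{\boldsymbol\beta}_n$ puts it in this ball eventually'': consistency gives only $\|\hat{\boldsymbol\beta}_n-\boldsymbol\beta_*\|=o_{\mathbb P}(1)$, not membership in the \emph{shrinking} ball of radius $Mr_n$, and without convexity the inequality on a single sphere does not by itself localize the global minimizer. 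Your route is salvageable---extend the lower bound from the sphere to the whole annulus $Mr_n\le\|\boldsymbol\beta-\boldsymbol\beta_*\|\le\eta$ (the same quadratic growth holds uniformly there by C4) and use consistency only to enter the fixed $\eta$-ball---but the paper's direct route avoids this detour.

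Second, your penalty bookkeeping in part~(a) loses a factor of $\lambda_n$. You bound the penalty increment by $K r_n\|\mathbf u\|_1$ and assert that $K\sqrt{p}\,r_n M$ is dominated by $c\,r_n^2M^2$ for large $M$; but that would require $M\gtrsim 1/r_n\to\infty$, so no fixed $M$ works. The paper instead applies the Lipschitz bound to $P_{1,\alpha}$ and multiplies by $\lambda_n$ (this is their event $\mathcal D_n$), obtaining $|P_{\lambda_n,\alpha_n}(\hat{\boldsymbol\beta}_n)-P_{\lambda_n,\alpha_n}(\boldsymbol\beta_*)|\le K\lambda_n\|\hat{\boldsymbol\beta}_n-\boldsymbol\beta_*\|$. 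In your scaling the correct bound is $K\lambda_n r_n M\le K r_n^2 M$ (since $\lambda_n\le r_n$), which \emph{is} dominated by $c\,r_n^2M^2$ once $M>K/c$. With this fix and the reorganization above, your argument for part~(a) goes through; your handling of part~(b)---Taylor-expand each $J_{\ell,\lambda_n}$ at $|\beta_{*,\ell}|$ on active coordinates and drop the nonnegative inactive contributions---is correct and is what the paper intends, though its written proof spells out only~(a).
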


\begin{remark}
Theorem \ref{teo:order} implies that, for SCAD and MCP penalties, $\sqrt{n}$-consistency can be achieved assuming only that $\lambda_n \xrightarrow{p} 0$, while for lasso and Ridge penalties the additional requirement that $\lambda_n=O_{\mathbb{P}}(1 / \sqrt{n})$ is necessary. See Remark 5 in \citet{bianco2022penalized}.
\end{remark}
From now on, let us assume there exists $k$ such that $\boldsymbol{\beta}_*=(\beta_{*0}, \boldsymbol{\beta}_{*A}^\top, \boldsymbol 0^\top_{p-k})^\top$, where $\beta_{*0}\in \mathbb R$ is the intercept, $\boldsymbol{\beta}_{*A} \in \mathbb R^k$ are the active entries of $\boldsymbol\beta_*$ (that is to say the entries that are different from $0$) and $\boldsymbol 0^\top_{p-k}$ is a zero vector of length $p-k$.
\begin{theorem}[Variable selection]\label{teo:varsel}
Let $\hat{\boldsymbol{\beta}}_n=(\hat\beta_{n0}, \hat{\boldsymbol{\beta}}_{nA}^\top, \hat{\boldsymbol{\beta}}_{nB}^\top)^\top$ be the estimator defined in \eqref{eq:lntilde}. Assume A2 and C1-C4. Moreover, assume that $\sqrt{n}\|\hat{\boldsymbol{\beta}}_n-\boldsymbol{\beta}_*\|=O_{\mathbb{P}}\left(1\right)$ and that for every $C>0$ and $\ell\in \{k+1, \ldots, p\}$ there exists constants $K_{C, \ell}>0$ and $N\in \mathbb N$ such that, if $||u||_2<C$ and $b\geq N$, then
\begin{equation}\label{eq:varsel}
\penn{\lambda_n}{\boldsymbol{\beta}_{*} + \frac{\mathbf u}{\sqrt{n}}}{\alpha_n} - \penn{\lambda_n}{\boldsymbol{\beta}_{*} + \frac{\mathbf u^{(-\ell)}}{\sqrt{n}}}{\alpha_n} \geq \frac{ \lambda_n K_{C, \ell}}{\sqrt{n}}|u_\ell|,  
\end{equation} 
where $\mathbf{u}^{(-\ell)}$ is obtained by replacing $u_{\ell}$, the $\ell$-th coordinate of $\mathbf{u}$, with zero. Then 
\begin{enumerate}
	\item[(a)] For every $\tau>0$, there exists $b>0$ and $n_0 \in \mathbb{N}$ such that if $\lambda_n=b / \sqrt{n}$, we have that, for any $n \geq n_0$,
	\begin{equation*}
		\mathbb{P}\left(\hat{\boldsymbol{\beta}}_{n, B}=\mathbf{0}_{p-k}\right) \geq 1-\tau .
		\end{equation*}
	\item[(b)] If $\lambda_n \sqrt{n} \rightarrow \infty$, then
		\begin{equation*}
			\mathbb{P}\left(\hat{\boldsymbol{\beta}}_{n, B}=\mathbf{0}_{p-k}\right) \rightarrow 1 .
	\end{equation*}
\end{enumerate}
\end{theorem}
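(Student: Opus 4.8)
The plan is to follow the standard contradiction argument used for this kind of result (as in \citet{bianco2022penalized}): suppose, on an event of non-negligible probability, that some coordinate $\hat\beta_{n,\ell}$ with $\ell \in \{k+1,\dots,p\}$ is nonzero, and show this forces $\hat{\boldsymbol{\beta}}_n$ to have strictly larger penalized objective $\hat L_n$ than the competitor obtained by setting that coordinate to zero, contradicting the definition of the minimizer. Concretely, write $\hat{\boldsymbol{\beta}}_n = \boldsymbol{\beta}_* + \mathbf{u}/\sqrt n$ where, by the hypothesis $\sqrt n \|\hat{\boldsymbol{\beta}}_n - \boldsymbol{\beta}_*\| = O_{\mathbb P}(1)$, we have $\|\mathbf{u}\|_2 \le C$ with probability at least $1-\tau/2$ for $C$ large enough. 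Let $\hat{\boldsymbol{\beta}}_n^{(-\ell)} = \boldsymbol{\beta}_* + \mathbf{u}^{(-\ell)}/\sqrt n$ be the corresponding competitor. Then
\begin{equation*}
\hat L_n(\hat{\boldsymbol{\beta}}_n) - \hat L_n(\hat{\boldsymbol{\beta}}_n^{(-\ell)}) = \bigl(L_n(\hat{\boldsymbol{\beta}}_n) - L_n(\hat{\boldsymbol{\beta}}_n^{(-\ell)})\bigr) + \bigl(\penn{\lambda_n}{\hat{\boldsymbol{\beta}}_n}{\alpha_n} - \penn{\lambda_n}{\hat{\boldsymbol{\beta}}_n^{(-\ell)}}{\alpha_n}\bigr).
\end{equation*}
The second term is bounded below by $\lambda_n K_{C,\ell} |u_\ell| / \sqrt n$ by assumption \eqref{eq:varsel} (on the event $\|\mathbf{u}\|_2 < C$, for $n$ large).

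For the first term I would do a second-order Taylor expansion of $L_n$ around $\hat{\boldsymbol{\beta}}_n^{(-\ell)}$ in the $\ell$-th coordinate only, or equivalently expand both $L_n(\hat{\boldsymbol{\beta}}_n)$ and $L_n(\hat{\boldsymbol{\beta}}_n^{(-\ell)})$ around $\boldsymbol{\beta}_*$. Using C1--C4 (smoothness and integrability of $\boldsymbol{\Psi}$ and $\mathbf{J}$), a law of large numbers gives $\mathbf{A}_n(\boldsymbol{\beta}) \to \mathbf{A}$ uniformly on a neighborhood of $\boldsymbol{\beta}_*$, and a CLT gives $\sqrt n\, \nabla L_n(\boldsymbol{\beta}_*) = O_{\mathbb P}(1)$ — note $\E_{\boldsymbol{\beta}_*}(\boldsymbol{\Psi}(y,\x,\boldsymbol{\beta}_*)) = \mathbf{0}$ by Fisher consistency, which is where A2 and the construction of $m$ (via $\mginv$) enter. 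Consequently the difference $L_n(\hat{\boldsymbol{\beta}}_n) - L_n(\hat{\boldsymbol{\beta}}_n^{(-\ell)})$, which changes only the $\ell$-th coordinate by $u_\ell/\sqrt n$, equals
\begin{equation*}
\frac{u_\ell}{\sqrt n}\,\partial_\ell L_n(\boldsymbol{\beta}_*) + O_{\mathbb P}\!\left(\frac{\|\mathbf u\|_2\,|u_\ell|}{n}\right) = \frac{u_\ell}{n}\bigl(\sqrt n\,\partial_\ell L_n(\boldsymbol{\beta}_*)\bigr) + O_{\mathbb P}\!\left(\frac{|u_\ell|}{n}\right),
\end{equation*}
so it is $O_{\mathbb P}(|u_\ell|/n)$ in absolute value; more precisely, for any $\epsilon>0$ there is $M$ with $\mathbb P(|\sqrt n\,\partial_\ell L_n(\boldsymbol{\beta}_*)| \le M) \ge 1-\epsilon$, making $|L_n(\hat{\boldsymbol{\beta}}_n) - L_n(\hat{\boldsymbol{\beta}}_n^{(-\ell)})| \le (M+1)|u_\ell|/n$ on that event.

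Now compare magnitudes. On the good event (probability $\ge 1 - \tau$ for suitable constants), the penalty gain is $\ge \lambda_n K_{C,\ell}|u_\ell|/\sqrt n$ while the loss change is $\le (M+1)|u_\ell|/n$ in absolute value. For part (a), take $\lambda_n = b/\sqrt n$: the penalty gain is $\ge b K_{C,\ell}|u_\ell|/n$, which strictly dominates the loss change as soon as $b > (M+1)/K_{C,\ell}$; hence $\hat L_n(\hat{\boldsymbol{\beta}}_n) > \hat L_n(\hat{\boldsymbol{\beta}}_n^{(-\ell)})$ whenever $u_\ell \neq 0$, contradicting optimality, so $\hat\beta_{n,\ell}=0$ on that event — and a union bound over the finitely many $\ell \in \{k+1,\dots,p\}$ (absorbing the per-$\ell$ probability loss into $\tau$) gives $\mathbb P(\hat{\boldsymbol{\beta}}_{n,B} = \mathbf 0_{p-k}) \ge 1-\tau$. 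For part (b), $\lambda_n\sqrt n \to \infty$ means the penalty gain $\lambda_n K_{C,\ell}|u_\ell|/\sqrt n = (\lambda_n \sqrt n) K_{C,\ell}|u_\ell|/n$ eventually dominates $(M+1)|u_\ell|/n$ with probability tending to one, so the same contradiction yields $\mathbb P(\hat{\boldsymbol{\beta}}_{n,B}=\mathbf 0_{p-k}) \to 1$.

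The main obstacle I anticipate is making the Taylor-remainder control fully rigorous \emph{uniformly} over the random direction $\mathbf u$ with $\|\mathbf u\|_2 \le C$: one must bound $\sup_{\|\boldsymbol{\beta}-\boldsymbol{\beta}_*\|\le C/\sqrt n}|\partial_\ell L_n(\boldsymbol{\beta}) - \partial_\ell L_n(\boldsymbol{\beta}_*)|$ by $O_{\mathbb P}(1/\sqrt n)$, which needs the local domination condition on the Hessian entries in C4 together with a uniform LLN; and one must be careful that the competitor $\hat{\boldsymbol{\beta}}_n^{(-\ell)}$ is a genuinely admissible point (it is, since $\arg\min$ is over all of $\mathbb R^{p+1}$) and that inequality \eqref{eq:varsel} is invoked only on the event $\|\mathbf u\|_2 < C$ and for $n \ge N$. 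Everything else is bookkeeping with events and constants.
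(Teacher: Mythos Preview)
Your proposal is correct and follows essentially the same route the paper indicates (the paper omits the detailed argument, noting it is obtained by the same modifications as for Theorem~2 in \citet{bianco2022penalized}): compare $\hat L_n$ at $\hat{\boldsymbol{\beta}}_n$ and at the competitor with the $\ell$-th coordinate zeroed, control the loss difference $L_n(\hat{\boldsymbol{\beta}}_n)-L_n(\hat{\boldsymbol{\beta}}_n^{(-\ell)})$ by a second-order Taylor expansion using the CLT for $\sqrt{n}\,\nabla L_n(\boldsymbol{\beta}_*)$ together with the uniform convergence of $\mathbf{A}_n$ (Lemma~\ref{lemma:limits}), and let the penalty gain from \eqref{eq:varsel} dominate. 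One minor remark: the union bound over $\ell$ is not really needed---the single high-probability event $\{\|\mathbf u\|\le C,\ \|\sqrt{n}\,\nabla L_n(\boldsymbol{\beta}_*)\|\le M,\ \sup_{\|\boldsymbol{\beta}-\boldsymbol{\beta}_*\|\le C/\sqrt n}\|\mathbf{A}_n(\boldsymbol{\beta})\|\le M'\}$ delivers the contradiction for every $\ell>k$ simultaneously, with $b$ in part~(a) chosen larger than $\max_{\ell>k}(M+M'C)/K_{C,\ell}$.
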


\begin{remark} \label{remark:varsel}
\citet{bianco2022penalized} show that condition \eqref{eq:varsel} holds for several penalty functions, e.g., lasso, SCAD, MCP and sign. On the contrary, Ridge penalty and bridge penalty with $\alpha>1$ do not verify \eqref{eq:varsel}, because they have a derivative that verifies $P'_\alpha(0)=0$. Note that, as a consequence of Theorems \ref{teo:order} and \ref{teo:varsel}, penalties SCAD and MCP lead to estimators that are $\sqrt{n}-$consistent and perform variable selection. As remarked in \citet{bianco2022penalized}, this states a difference with
\citet{avella2018robust}, who require stronger rates on $\lambda_n$. On the other hand, lasso and sign penalties cannot achieve these two properties simultaneously. In these cases, if we want to have $\sqrt{n}-$consistency, we can only ensure the result in Theorem \ref{teo:varsel}(a). See also Remark 6 in \citet{bianco2022penalized}.
\end{remark}
The following theorem states the variable selection property for random penalties. The proof is very similar to the proof of Theorem \ref{teo:varsel} and is omitted.
\begin{theorem}
\label{teo:varselrandom}
Let $\hat{\boldsymbol{\beta}}_n=(\hat\beta_{n0}, \hat{\boldsymbol{\beta}}_{nA}^\top, \hat{\boldsymbol{\beta}}_{nB}^\top)^\top$ be the estimator defined in \eqref{eq:lntilde}. Assume A2 and C1-C4. Moreover, assume that $\sqrt{n}\|\hat{\boldsymbol{\beta}}_n-\boldsymbol{\beta}_*\|=O_{\mathbb{P}}\left(1\right)$  
and that for some $\gamma>0, n^{(1+\gamma) / 2} \lambda_n \rightarrow \infty$. Furthermore, assume that for every $C>0, \ell \in\{k+1, \ldots, p\}$ and $\tau>0$, there exist a constant $K_{C, \ell}$ and $N_{C, \ell} \in \mathbb{N}$ such that if $\|\mathbf{u}\|_2 \leq C$ and $n \geq N_{C, \ell}$, we have that
\begin{equation*}
\mathbb{P}\left( \penn{\lambda_n}{\boldsymbol{\beta}_* + \frac{\mathbf{u}}{\sqrt{n}}}{\alpha_n} - \penn{\lambda_n}{\boldsymbol{\beta}_* + \frac{\mathbf{u}^{(-\ell)}}{\sqrt{n}}}{\alpha_n} \geq \frac{ \lambda_n K_{C, \ell}}{\sqrt{n^{1-\gamma}}}\left|u_{\ell}\right|\right)>1-\tau,
\end{equation*}
where $\mathbf{u}^{(-\ell)}$ and $u_{\ell}$ are defined as in Theorem \ref{teo:varsel}. Then,
\begin{equation*}
\mathbb{P}\left(\hat{\boldsymbol{\beta}}_{n, B}=\mathbf{0}_{p-k}\right) \rightarrow 1.
\end{equation*}  
\end{theorem}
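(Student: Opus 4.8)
The plan is to follow the proof of Theorem~\ref{teo:varsel}(b), replacing the deterministic lower bound~\eqref{eq:varsel} by the in--probability version assumed here. Since $\{k+1,\dots,p\}$ is finite, it suffices to fix a single index $\ell\in\{k+1,\dots,p\}$ and prove $\mathbb{P}(\hat\beta_{n,\ell}\neq 0)\to 0$; a union bound over $\ell$ then yields $\mathbb{P}(\hat{\boldsymbol{\beta}}_{n,B}=\mathbf{0}_{p-k})\to 1$. For such an $\ell$, let $\hat{\boldsymbol{\beta}}_n^{(-\ell)}$ be the vector obtained from $\hat{\boldsymbol{\beta}}_n$ by zeroing its $\ell$-th coordinate. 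Since $\hat{\boldsymbol{\beta}}_n$ minimises $\hat L_n=L_n+\penn{\lambda_n}{\cdot}{\alpha_n}$ (the parameter $\alpha_n$ plays no separate role and is simply carried along), we get the basic inequality
\[
\penn{\lambda_n}{\hat{\boldsymbol{\beta}}_n}{\alpha_n}-\penn{\lambda_n}{\hat{\boldsymbol{\beta}}_n^{(-\ell)}}{\alpha_n}\;\le\;L_n(\hat{\boldsymbol{\beta}}_n^{(-\ell)})-L_n(\hat{\boldsymbol{\beta}}_n).
\]

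First I would control the right--hand side. Write $\mathbf{u}_n=\sqrt n(\hat{\boldsymbol{\beta}}_n-\boldsymbol{\beta}_*)$, so $\|\mathbf{u}_n\|=O_{\mathbb{P}}(1)$ by hypothesis and $u_{n,\ell}=\sqrt n\,\hat\beta_{n,\ell}$ because $\beta_{*,\ell}=0$. As the two vectors differ only in coordinate $\ell$, a mean value expansion gives $L_n(\hat{\boldsymbol{\beta}}_n^{(-\ell)})-L_n(\hat{\boldsymbol{\beta}}_n)=-\,\hat\beta_{n,\ell}\,g_{n,\ell}(\boldsymbol{\xi}_n)$, where $g_{n,\ell}(\boldsymbol{\beta})$ denotes the $\ell$-th component of $\tfrac1n\sum_{i=1}^n\boldsymbol{\Psi}(y_i,\x_i,\boldsymbol{\beta})$ and $\boldsymbol{\xi}_n$ lies on the segment between the two vectors, so $\|\boldsymbol{\xi}_n-\boldsymbol{\beta}_*\|=O_{\mathbb{P}}(1/\sqrt n)$. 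Expanding once more around $\boldsymbol{\beta}_*$ and using the Hessian $\mathbf{J}$ of C4, $g_{n,\ell}(\boldsymbol{\xi}_n)=g_{n,\ell}(\boldsymbol{\beta}_*)+[\mathbf{A}_n(\tilde{\boldsymbol{\xi}}_n)(\boldsymbol{\xi}_n-\boldsymbol{\beta}_*)]_\ell$. By the defining (first--order) property of $m$ one has $\E_{\boldsymbol{\beta}_*}(\boldsymbol{\Psi}(y,\x,\boldsymbol{\beta}_*)\mid\x)=\mathbf 0$, so $g_{n,\ell}(\boldsymbol{\beta}_*)$ is a centred average and hence $O_{\mathbb{P}}(1/\sqrt n)$ by the central limit theorem (finite second moments being guaranteed by C1--C2); the second term is $O_{\mathbb{P}}(1)\cdot O_{\mathbb{P}}(1/\sqrt n)$ by the domination condition in C4 together with the law of large numbers, the intermediate points lying in the $\eta$--neighbourhood of C4 with probability tending to one. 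Therefore there is $R_n=O_{\mathbb{P}}(1)$ with
\[
\big|L_n(\hat{\boldsymbol{\beta}}_n^{(-\ell)})-L_n(\hat{\boldsymbol{\beta}}_n)\big|\;\le\;\frac{R_n}{\sqrt n}\,|\hat\beta_{n,\ell}|\;=\;\frac{R_n}{n}\,|u_{n,\ell}|.
\]

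Next I would feed in the penalty hypothesis. Fix $\varepsilon>0$, pick $C$ with $\mathbb{P}(\|\mathbf{u}_n\|>C)<\varepsilon/3$, and apply the hypothesis with this $C$, the chosen $\ell$ and $\tau=\varepsilon/3$; this gives, for $n\ge N_{C,\ell}$ and on $\{\|\mathbf{u}_n\|\le C\}$, that with probability at least $1-\varepsilon/3$ the penalty difference is bounded below by $\lambda_n K_{C,\ell}|u_{n,\ell}|/\sqrt{n^{1-\gamma}}$. On the intersection of this event with $\{\hat\beta_{n,\ell}\neq0\}$ we have $u_{n,\ell}\neq 0$, and chaining the two displays above and cancelling $|u_{n,\ell}|$ yields $n^{(1+\gamma)/2}\lambda_n K_{C,\ell}\le R_n$. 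Since $R_n=O_{\mathbb{P}}(1)$ whereas $n^{(1+\gamma)/2}\lambda_n\to\infty$, the probability of this last event tends to $0$; collecting the three contributions gives $\mathbb{P}(\hat\beta_{n,\ell}\neq0)<\varepsilon$ for all large $n$, and letting $\varepsilon\downarrow0$ and taking the union bound over $\ell$ finishes the proof.

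The step I expect to be the main obstacle is the transfer of the penalty hypothesis, which is stated at a fixed $\mathbf{u}$, to the data--dependent vector $\hat{\boldsymbol{\beta}}_n=\boldsymbol{\beta}_*+\mathbf{u}_n/\sqrt n$: since $\lambda_n$ and $\alpha_n$ are themselves data--driven, the ``with probability $>1-\tau$'' statement cannot be invoked verbatim at $\mathbf{u}_n$ without some uniformity. As in \citet{bianco2022penalized}, this is handled by discretising the ball $\{\|\mathbf{u}\|_2\le C\}$ on a fine grid, controlling the modulus of continuity of $\penn{\lambda}{\cdot}{\alpha}$ there, and applying a union bound over the finitely many grid points with $\tau$ shrunk accordingly; the extra factor $n^{\gamma/2}$ built into the hypothesis is precisely the slack that absorbs this approximation error, which is also why the requirement on $\lambda_n$ is strengthened from $\lambda_n\sqrt n\to\infty$ in Theorem~\ref{teo:varsel}(b) to $n^{(1+\gamma)/2}\lambda_n\to\infty$. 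The remaining points --- keeping the Taylor intermediate points inside the neighbourhood of C4 and checking the moment conditions for the central limit theorem from C1--C2 --- are routine but slightly tedious.
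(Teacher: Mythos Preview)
Your proposal is correct and matches what the paper intends: it explicitly omits the proof, stating only that it ``is very similar to the proof of Theorem~\ref{teo:varsel}'', and your argument is precisely that adaptation --- the same optimality inequality, the same one--coordinate mean--value expansion of $L_n$ controlled via C4 and the CLT for $\nabla L_n(\boldsymbol{\beta}_*)$, followed by cancellation of $|u_{n,\ell}|$ and the contradiction with $n^{(1+\gamma)/2}\lambda_n\to\infty$. Your identification of the one genuine wrinkle (the hypothesis is stated pointwise in $\mathbf{u}$ while you need it at the random $\mathbf{u}_n$, which is \emph{not} independent of $(\lambda_n,\alpha_n)$) and your proposed fix via discretisation plus the $n^{\gamma/2}$ slack are exactly right; in the concrete penalties the paper has in mind the event in the hypothesis typically does not depend on $\mathbf{u}$ (or only through $\operatorname{sign}(u_\ell)$), so uniformity is in fact automatic, but for the abstract statement your discretisation argument is the clean way to close the gap.
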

We now consider the asymptotic behavior of the penalized MT-estimator when $\sqrt{n} \lambda_{n} \xrightarrow{p} b$ for some constant $b$.
\begin{theorem}
\label{teo:asdist}
Assume A2, C1-C4 and P1 hold. Furthermore assume that $\sqrt{n}(\hat{\boldsymbol{\beta}}_{n}-\boldsymbol{\beta}_*)=O_{\mathbb P}(1)$ and that $\sqrt{n} \lambda_{n} \xrightarrow{p} b$, with $0\leq b < + \infty$. Then, if $\left\|\boldsymbol{\beta}_*\right\| \neq 0$,
\begin{equation*}
  \sqrt{n}(\hat{\boldsymbol{\beta}}_{n}-\boldsymbol{\beta}_*) \xrightarrow{d} \arg\min_{\mathbf{w}} R(\mathbf{w})
\end{equation*}  
where the process $R: \mathbb{R}^{p+1} \rightarrow \mathbb{R}$ is defined as
\begin{equation*}
R(\mathbf{w})=\mathbf{w}^\top \mathbf{w}_0+\frac{1}{2} \mathbf{w}^{\top} \mathbf{A} \mathbf{w}+ {b} \mathbf{w}^\top \mathbf{q}(\mathbf{w}).
\end{equation*}
with $\mathbf{w}_0 \sim N_{p+1}(\mathbf{0}, \mathbf{B})$, $\mathbf{q}(\mathbf{w})=(q_0(\mathbf w), q_{1}(\mathbf{w}), \ldots, q_{p}(\mathbf{w}))^\top$.
\begin{itemize}
\item[(i)] For the elastic net penalty function
\begin{equation*}
q_{\ell}(\mathbf{w}) = (1-\alpha)  \beta_{*, \ell} +  \alpha \left\{ \sign(\beta_{*, \ell}) \mathbb{I}_{\left\{\beta_{*, \ell} \neq 0\right\}} + \sign(w_{\ell}) \mathbb{I}_{\left\{\beta_{*, \ell}=0\right\}}\right\}.
\end{equation*}
and $\w = (w_0, w_1, \ldots, w_p)$.
\item[(ii)] For the bridge penalty function, with $\alpha > 1$ we have
\begin{equation*}
q_{\ell}(\mathbf{w}) = \sign(\beta_{*, \ell}) |\beta_{*, \ell}|^{\alpha-1}
\end{equation*}
$\alpha=1$ corresponds to the lasso, which is equivalent to the elastic net penalty function and $\alpha=1$. 
\item[(iii)] For the sign penalty function
\begin{equation*}
q_{\ell}(\mathbf{w}) = \nabla_\ell(\boldsymbol{\beta}_*) \mathbb{I}_{\left\{\beta_{*, \ell} \neq 0\right\}} + \sign(w_\ell)/\|\boldsymbol{\beta}_*\|_2 \mathbb{I}_{\left\{\beta_{*, \ell} = 0\right\}} \mathbf{e}_\ell
\end{equation*}
and $\nabla_\ell(\boldsymbol{\beta}) = - (|\beta_\ell| / \|\boldsymbol{\beta}_*\|_2^3) \boldsymbol{\beta} + \sign(\beta_\ell)/\|\boldsymbol{\beta}\|_2 \mathbf{e}_\ell$.
\end{itemize}
\end{theorem}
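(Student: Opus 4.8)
The argument follows the argmin--of--a--random--process technique of \citet{knight2000asymptotics}, in the non-convex variant employed by \citet{bianco2022penalized}. Reparametrize by $\mathbf{w}=\sqrt{n}(\boldsymbol{\beta}-\boldsymbol{\beta}_*)$ and set
\begin{equation*}
V_n(\mathbf{w}) = n\left[\hat{L}_n\!\left(\boldsymbol{\beta}_*+\tfrac{\mathbf{w}}{\sqrt{n}}\right)-\hat{L}_n(\boldsymbol{\beta}_*)\right]
= n\left[L_n\!\left(\boldsymbol{\beta}_*+\tfrac{\mathbf{w}}{\sqrt{n}}\right)-L_n(\boldsymbol{\beta}_*)\right] + n\left[\penn{\lambda_n}{\boldsymbol{\beta}_*+\tfrac{\mathbf{w}}{\sqrt{n}}}{\alpha_n}-\penn{\lambda_n}{\boldsymbol{\beta}_*}{\alpha_n}\right],
\end{equation*}
so that $\hat{\mathbf{w}}_n:=\sqrt{n}(\hat{\boldsymbol{\beta}}_n-\boldsymbol{\beta}_*)=\arg\min_{\mathbf{w}}V_n(\mathbf{w})$, a sequence that is tight by hypothesis. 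The plan is to prove that $V_n$ converges to the process $R$ uniformly on every compact set (jointly with the Gaussian limit below), to check that $R$ is strictly convex with an a.s.\ unique minimizer, and then to transfer this convergence to the minimizers.

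\noindent\textbf{The loss part.} Since the $m$-function enforces Fisher consistency, $L$ is minimized at $\boldsymbol{\beta}_*$; by C1--C2 the map $\boldsymbol{\beta}\mapsto L(\boldsymbol{\beta})$ is differentiable with derivative obtained by differentiating under the expectation (domination from C4), hence $\E_{\boldsymbol{\beta}_*}(\boldsymbol{\Psi}(y,\x,\boldsymbol{\beta}_*))=\nabla L(\boldsymbol{\beta}_*)=\mathbf{0}$. A first-order Taylor expansion with integral remainder gives
\begin{equation*}
n\left[L_n\!\left(\boldsymbol{\beta}_*+\tfrac{\mathbf{w}}{\sqrt{n}}\right)-L_n(\boldsymbol{\beta}_*)\right]
= \mathbf{w}^\top\bigl(\sqrt{n}\,\nabla L_n(\boldsymbol{\beta}_*)\bigr) + \mathbf{w}^\top\!\left[\int_0^1(1-s)\,\mathbf{A}_n\!\left(\boldsymbol{\beta}_*+\tfrac{s\mathbf{w}}{\sqrt{n}}\right)\,ds\right]\!\mathbf{w}.
\end{equation*}
By the multivariate central limit theorem $\sqrt{n}\,\nabla L_n(\boldsymbol{\beta}_*)=n^{-1/2}\sum_{i=1}^n\boldsymbol{\Psi}(y_i,\x_i,\boldsymbol{\beta}_*)\xrightarrow{d}\mathbf{w}_0\sim N_{p+1}(\mathbf{0},\mathbf{B})$ (the variance being finite by C2 and C4). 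A uniform law of large numbers for the family $\{\mathbf{J}(y,\x,\boldsymbol{\beta}):\|\boldsymbol{\beta}-\boldsymbol{\beta}_*\|\le\eta\}$ --- continuous in $\boldsymbol{\beta}$ (recall that $m$ is twice differentiable, Lemma~5 of \citet{ValdoraYohai2014}) and dominated by C4 --- together with the continuity of $\boldsymbol{\beta}\mapsto\E_{\boldsymbol{\beta}_*}(\mathbf{J}(y,\x,\boldsymbol{\beta}))$ at $\boldsymbol{\beta}_*$ yields $\sup_{s\in[0,1],\,\mathbf{w}\in K}\|\mathbf{A}_n(\boldsymbol{\beta}_*+s\mathbf{w}/\sqrt{n})-\mathbf{A}\|\xrightarrow{p}0$ for every compact $K$. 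Hence the loss part of $V_n$ converges, uniformly on compacts and jointly with $\mathbf{w}_0$, to $\mathbf{w}^\top\mathbf{w}_0+\tfrac12\mathbf{w}^\top\mathbf{A}\mathbf{w}$.

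\noindent\textbf{The penalty part.} In the elastic net, bridge and sign cases write $\penn{\lambda_n}{\boldsymbol{\beta}}{\alpha_n}=\lambda_n\,\widetilde{P}(\boldsymbol{\beta},\alpha_n)$, so that
\begin{equation*}
n\left[\penn{\lambda_n}{\boldsymbol{\beta}_*+\tfrac{\mathbf{w}}{\sqrt{n}}}{\alpha_n}-\penn{\lambda_n}{\boldsymbol{\beta}_*}{\alpha_n}\right]
= \bigl(\sqrt{n}\,\lambda_n\bigr)\,\sqrt{n}\left[\widetilde{P}\!\left(\boldsymbol{\beta}_*+\tfrac{\mathbf{w}}{\sqrt{n}},\alpha_n\right)-\widetilde{P}(\boldsymbol{\beta}_*,\alpha_n)\right].
\end{equation*}
Because $\sqrt{n}\,\lambda_n\xrightarrow{p}b$, it remains to identify coordinatewise the limit of the second factor: for coordinates with $\beta_{*,\ell}\neq0$ the penalty is differentiable there (for the sign and bridge penalties this uses $\|\boldsymbol{\beta}_*\|\neq0$), and the factor tends to the corresponding directional derivative, producing the $\mathbb{I}_{\{\beta_{*,\ell}\neq0\}}$ terms of $\mathbf{q}$; for coordinates with $\beta_{*,\ell}=0$ the exact identity $\sqrt{n}\,|\beta_{*,\ell}+w_\ell/\sqrt{n}|=|w_\ell|$ produces the $\sign(w_\ell)\mathbb{I}_{\{\beta_{*,\ell}=0\}}$ terms (no such terms arise for the bridge with $\alpha>1$, whose zero-coordinate contribution vanishes). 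The convergence is uniform on compact $\mathbf{w}$-sets since, for $n$ large, these expressions coincide identically with their limits on any bounded set. Collecting terms yields $b\,\mathbf{w}^\top\mathbf{q}(\mathbf{w})$ with $\mathbf{q}$ as in (i)--(iii), the intercept coordinate contributing $0$ because $\beta_0$ is unpenalized.

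\noindent\textbf{Conclusion and main difficulty.} Combining the two parts, $V_n\to R$ uniformly on every compact set, jointly with $\mathbf{w}_0$. The quadratic form $\mathbf{A}=\E_{\boldsymbol{\beta}_*}(\mathbf{J}(y,\x,\boldsymbol{\beta}_*))$ is the Hessian of $L$ at its minimizer and is non-singular by C4, hence positive definite; and in each case $\mathbf{w}\mapsto b\,\mathbf{w}^\top\mathbf{q}(\mathbf{w})$ is convex, being a pointwise limit of the convex functions $(\sqrt{n}\lambda_n)\sqrt{n}[\widetilde{P}(\boldsymbol{\beta}_*+\mathbf{w}/\sqrt{n},\alpha_n)-\widetilde{P}(\boldsymbol{\beta}_*,\alpha_n)]$. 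Therefore $R$ is strictly convex and has an a.s.\ unique minimizer $\arg\min_{\mathbf{w}}R(\mathbf{w})$. Since $\hat{\mathbf{w}}_n=\sqrt{n}(\hat{\boldsymbol{\beta}}_n-\boldsymbol{\beta}_*)$ is tight by assumption, an argmin continuous--mapping argument (uniform convergence on compacts, tightness of the minimizers, a.s.\ unique limiting minimizer) gives $\hat{\mathbf{w}}_n\xrightarrow{d}\arg\min_{\mathbf{w}}R(\mathbf{w})$, as claimed. The delicate point is that, because $\rho$ is redescending, $L_n$ and hence $V_n$ are non-convex, so the classical convexity lemma of \citet{knight2000asymptotics} does not apply directly: one must establish genuine uniform convergence of $V_n$ on compact sets --- via the uniform law of large numbers for the Hessian supplied by C4 and the integral-remainder control it affords --- and then invoke the a priori tightness $\sqrt{n}(\hat{\boldsymbol{\beta}}_n-\boldsymbol{\beta}_*)=O_{\mathbb{P}}(1)$ to prevent the empirical minimizer from escaping to infinity.
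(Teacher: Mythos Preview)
Your approach is essentially the same as the paper's: reparametrize, split the localized objective $V_n=R_{n,1}+R_{n,2}$ into loss and penalty parts, establish process convergence to $R$, and transfer to argmins using the a priori tightness $\sqrt{n}(\hat{\boldsymbol{\beta}}_n-\boldsymbol{\beta}_*)=O_{\mathbb{P}}(1)$. The paper packages the last step explicitly via Theorem~2.7 of \citet{kim1990cube}, verifying finite-dimensional convergence (Lemma~\ref{lemma:rn1}) and stochastic equicontinuity (Lemmas~\ref{lemma:rn1} and~\ref{lemma:rn2}) rather than arguing uniform convergence on compacts directly; but this is the same argmin theorem you are invoking, and your Taylor-with-integral-remainder treatment of the loss part and coordinatewise treatment of the penalty part match the paper's computations.

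Two small issues. First, your justification that $\mathbf{w}\mapsto b\,\mathbf{w}^\top\mathbf{q}(\mathbf{w})$ is convex ``being a pointwise limit of the convex functions'' fails in case~(iii): the sign penalty $\|\boldsymbol{\beta}\|_1/\|\boldsymbol{\beta}\|_2$ is \emph{not} convex, so the pre-limit functions are not convex. The conclusion is nonetheless correct, since the limit in~(iii) is a sum of linear terms (from the $\nabla_\ell$ part) and $|w_\ell|/\|\boldsymbol{\beta}_*\|_2$ terms, hence convex; but you should argue this directly from the form of $\mathbf{q}$. Second, the claim that the penalty expressions ``coincide identically with their limits on any bounded set for $n$ large'' is too strong: the ridge component of the elastic net contributes a residual $\lambda_n w_\ell^2/(2\sqrt{n})$ and in any case $\sqrt{n}\lambda_n\neq b$. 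What you actually have (and need) is uniform convergence on compacts, which the paper obtains more cleanly from the local Lipschitz property~P1.
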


\begin{remark}
Note that, if $\sqrt{n} \lambda_n \rightarrow 0$, $R$ is minimized when $\w = - A^{-1} \w_0 \sim N_{p+1}(\mathbf{0}, A^{-1}\mathbf{B}A^{-1})$. Therefore, the penalized MT-estimator defined in \eqref{eq:MTpenalized} has the same asymptotic distribution as its unpenalized version introduced in \citet{ValdoraYohai2014}. Also note that, when $\penn{\lambda}{\boldsymbol\beta)}{\alpha} = \lambda||\boldsymbol\beta||_2^2$ (the Ridge estimator), then $\sqrt{n}(\hat{\boldsymbol{\beta}}_n-\boldsymbol{\beta}_*) \xrightarrow{d} -\mathbf{A}^{-1}(\mathbf{w}_0 + b \boldsymbol{\beta}_*) \sim N_{p+1}(-b \mathbf{A}^{-1} \boldsymbol{\beta}_*, \mathbf{A}^{-1} \mathbf{B} \mathbf{A}^{-1})$,  so in this case $\hat{\boldsymbol{\beta}}_n$ is asymptotically biased.
\end{remark}
In the next theorem we study the behaviour of $\hat{\boldsymbol{\beta}}_n$ when $\sqrt{n} \lambda_{n} \xrightarrow{p} \infty$.

\begin{theorem}\label{teo:asdist2}
Assume A2, C1-C4 and P1 hold. Furthermore assume that $\hat{\boldsymbol{\beta}}_{n}-\boldsymbol{\beta}_*=O_{\mathbb P}(\lambda_{n})$, $\lambda_{n} \xrightarrow{p} 0$, $\sqrt{n} \lambda_{n} \xrightarrow{p} \infty$ and $\left\|\boldsymbol{\beta}_*\right\| \neq 0$.
Then, $(1 / \lambda_{n})(\hat{\boldsymbol{\beta}}_{n}-\boldsymbol{\beta}_*) \xrightarrow{p} \arg\min_{\mathbf{w}} R(\mathbf{w})$, where the process $R: \mathbb{R}^{p+1} \rightarrow \mathbb{R}$ is defined as
\begin{equation*}
R(\mathbf{w})=\frac{1}{2} \mathbf{w}^\top \mathbf{A} \mathbf{w}+\mathbf{w}^\top \mathbf{q}(\mathbf{w}) .
\end{equation*}
with $\mathbf{q}(\mathbf{w})$ being the function defined in Theorem \ref{teo:asdist}.
\end{theorem}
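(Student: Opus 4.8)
The plan is to reparametrize in the $\lambda_n$-scale, identify the limit of the correspondingly normalized objective, and extract the limit of the minimizer by an $\arg\min$-continuity argument. Set $\hat{\mathbf w}_n = (\hat{\boldsymbol\beta}_n - \boldsymbol\beta_*)/\lambda_n$; the standing hypothesis $\hat{\boldsymbol\beta}_n - \boldsymbol\beta_* = O_{\mathbb P}(\lambda_n)$ means $\hat{\mathbf w}_n = O_{\mathbb P}(1)$, and, since $\lambda_n^2 > 0$ and $\hat{\boldsymbol\beta}_n$ is a global minimizer of $\hat L_n$, $\hat{\mathbf w}_n$ is a global minimizer of
\[
V_n(\mathbf w) = \frac{1}{\lambda_n^2}\bigl\{\hat L_n(\boldsymbol\beta_* + \lambda_n \mathbf w) - \hat L_n(\boldsymbol\beta_*)\bigr\} =: V_n^{L}(\mathbf w) + V_n^{P}(\mathbf w),
\]
where $V_n^L$ and $V_n^P$ collect the contributions of $L_n$ and of the penalty $\penn{\lambda_n}{\cdot}{\alpha_n}$, respectively. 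Because $\hat{\mathbf w}_n$ is tight, it is enough to show that $V_n \to R$ uniformly on compact sets in probability, and that $R$ is continuous, coercive ($R(\mathbf w) \to \infty$ as $\|\mathbf w\| \to \infty$) and has a unique minimizer $\mathbf w^\star$; then the routine argument applies: for a compact $K$ with $\mathbf w^\star$ in its interior and large enough that $\mathbb P(\hat{\mathbf w}_n \in K) \to 1$, uniform closeness of $V_n$ to $R$ on $K$ combined with $\inf\{R(\mathbf w): \mathbf w \in K,\ \|\mathbf w - \mathbf w^\star\| \ge \epsilon\} > R(\mathbf w^\star)$ forces $\hat{\mathbf w}_n \xrightarrow{p} \mathbf w^\star = \arg\min R$.

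For the loss part, with $\mathbf A_n(\boldsymbol\beta)$ as in \eqref{eq:AB}, a second order Taylor expansion around $\boldsymbol\beta_*$ --- legitimate under C1--C2, using Lemma~5 of \citet{ValdoraYohai2014} for the differentiability of $m$ and hence of $\mginv$ --- gives
\[
V_n^L(\mathbf w) = \frac{1}{\lambda_n}\,\nabla L_n(\boldsymbol\beta_*)^\top \mathbf w + \int_0^1 (1-s)\,\mathbf w^\top \mathbf A_n(\boldsymbol\beta_* + s\lambda_n \mathbf w)\,\mathbf w\,ds.
\]
By Fisher consistency of the MT-functional one has $\E_{\boldsymbol\beta_*}(\boldsymbol\Psi(y,\x,\boldsymbol\beta_*)) = \nabla L(\boldsymbol\beta_*) = \mathbf 0$, so the central limit theorem gives $\nabla L_n(\boldsymbol\beta_*) = O_{\mathbb P}(n^{-1/2})$; since $\sqrt n \lambda_n \xrightarrow{p} \infty$, the quantity $\lambda_n^{-1}\|\nabla L_n(\boldsymbol\beta_*)\| = (\lambda_n\sqrt n)^{-1} O_{\mathbb P}(1)$ is $o_{\mathbb P}(1)$, so the first term vanishes uniformly for $\mathbf w$ in a compact set. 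In the integral, $\lambda_n \xrightarrow{p} 0$ forces $\boldsymbol\beta_* + s\lambda_n\mathbf w$ into an arbitrarily small ball $B(\boldsymbol\beta_*,\eta)$, uniformly over $s \in [0,1]$ and $\mathbf w$ in a compact set, while a uniform law of large numbers on $B(\boldsymbol\beta_*,\eta)$ --- guaranteed by the integrable envelope in C4 and the continuity of $\boldsymbol\beta \mapsto \E\mathbf J(y,\x,\boldsymbol\beta)$ --- gives $\sup_{\boldsymbol\beta \in B(\boldsymbol\beta_*,\eta)}\|\mathbf A_n(\boldsymbol\beta) - \E\mathbf J(y,\x,\boldsymbol\beta)\| \xrightarrow{p} 0$. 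Hence the integral converges, uniformly on compacts, to $\bigl(\int_0^1 (1-s)\,ds\bigr)\,\mathbf w^\top \mathbf A\mathbf w = \tfrac12 \mathbf w^\top \mathbf A \mathbf w$, and $V_n^L(\mathbf w) \xrightarrow{p} \tfrac12 \mathbf w^\top \mathbf A \mathbf w$ uniformly on compacts.

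For the penalty part one works coordinatewise over $\ell \in \{1,\dots,p\}$ for each of the three families of Theorem~\ref{teo:asdist}, the intercept being unpenalized so $q_0 \equiv 0$. If $\beta_{*,\ell} \ne 0$, then $|\beta_{*,\ell}|$ is bounded away from $0$, so for $n$ large (since $\lambda_n \to 0$) the penalty is smooth along the segment $[\boldsymbol\beta_*, \boldsymbol\beta_* + \lambda_n\mathbf w]$ and a first order Taylor expansion, together with $\alpha_n \to \alpha$, shows that $\lambda_n^{-2}$ times its increment converges, uniformly on compacts, to the $\mathbb I_{\{\beta_{*,\ell}\ne 0\}}$ part of $w_\ell\, q_\ell(\mathbf w)$. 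If $\beta_{*,\ell} = 0$, the increment equals $\lambda_n^2$ times a positively homogeneous function of $w_\ell$ --- $\alpha_n|w_\ell|$ for the elastic net/lasso, $\|\boldsymbol\beta_*\|_2^{-1}|w_\ell|$ for the sign penalty, and $0$ for the bridge with $\alpha>1$ --- so its $\lambda_n^{-2}$-rescaling converges uniformly on compacts to the $\mathbb I_{\{\beta_{*,\ell}= 0\}}$ part. Summing over $\ell$, $V_n^P(\mathbf w) \xrightarrow{p} \mathbf w^\top \mathbf q(\mathbf w)$ uniformly on compacts, with $\mathbf q$ the field of Theorem~\ref{teo:asdist}. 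Adding the two limits, $V_n \to R$ uniformly on compacts; $R$ is continuous, coercive --- the quadratic form dominates the at most linear growth of $\mathbf w^\top\mathbf q(\mathbf w)$, because $\mathbf A \succ 0$ (positive semidefinite as the Hessian of $L$ at its interior minimum $\boldsymbol\beta_*$, nonsingular by C4) --- and strictly convex (strictly convex quadratic part plus a $\mathbf w^\top\mathbf q(\mathbf w)$ built from affine terms and terms $c_\ell|w_\ell|$ with $c_\ell \ge 0$), hence it has a unique minimizer and the argument of the first paragraph concludes.

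The main obstacle is that $L_n$ is not convex, because $\rho$ is redescending, so $V_n$ is not globally convex and one cannot directly invoke the convexity machinery behind the analogous result of \citet{bianco2022penalized}. This is precisely what the hypothesis $\hat{\boldsymbol\beta}_n - \boldsymbol\beta_* = O_{\mathbb P}(\lambda_n)$ takes care of: it confines $\hat{\mathbf w}_n$ to a compact set on which the uniform quadratic approximation of $V_n^L$ is valid. The two remaining delicate points are the uniform law of large numbers for the Hessian $\mathbf A_n$ on a neighbourhood of $\boldsymbol\beta_*$, which is where assumption C4 is used in full, and the verification that the limit $R$ is coercive and strictly convex, so that $\arg\min R$ is well defined and the $\arg\min$ selection is continuous.
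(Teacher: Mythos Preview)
Your proof is correct and follows essentially the same route as the paper's: the same $\lambda_n$-reparametrization, the same Taylor expansion of the loss part (yielding $\tfrac12\mathbf w^\top\mathbf A\mathbf w$ via the CLT for $\nabla L_n(\boldsymbol\beta_*)$ and the ULLN for $\mathbf A_n$ under C4), and the same coordinatewise analysis of the penalty part. The only presentational difference is that the paper packages the argmin step through the Kim--Pollard framework (equicontinuity plus finite-dimensional convergence, as in the proof of Theorem~\ref{teo:asdist}), whereas you argue directly via uniform convergence on compacts together with the tightness hypothesis and the strict convexity/coercivity of the deterministic limit $R$; your explicit verification that $R$ has a unique minimizer is a point the paper leaves implicit.
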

The following theorem gives the asymptotic distribution of the vector of active entries, when the penalty is consistent for variable selection. Hence, assume there exists $k$ such that $\boldsymbol{\beta}_*=(\beta_{*0}, \boldsymbol{\beta}_{*A}^\top, \boldsymbol 0^\top_{p-k})^\top$. Let
\begin{equation*}
\nabla \penn{\lambda}{\mathbf{b}}{\alpha} = \frac{\partial \penn{\lambda}{\left(\mathbf{b}^\top, \mathbf{0}_{p-k}^\top\right)^\top}{\alpha}}{\partial \mathbf{b}}
\end{equation*}  
and let $\tilde{\mathbf{A}}$ and $\tilde{\mathbf{B}}$ be the $(k +1) \times (k + 1)$ submatrices of $\mathbf{A}$ and $\mathbf{B}$, respectively, corresponding to the first $k+1$ entries of $\boldsymbol{\beta}_*$, where $\mathbf{A}$ and $\mathbf{B}$ were defined in equation \eqref{eq:AB}.

\begin{theorem}\label{teo:oracle}
  Let $\hat{\boldsymbol{\beta}}_n=(\hat\beta_{n0}, \hat{\boldsymbol{\beta}}_{nA}^\top, \hat{\boldsymbol{\beta}}_{nB}^\top)^\top$ be the estimator defined in \eqref{eq:lntilde} and assume A2 and C1-C3 hold. Suppose that there exists some $\delta>0$ such that
\begin{equation} \label{eq:conteo11}
		\sup _{\|\boldsymbol\beta_A-\boldsymbol\beta_{* A}\| \leq \delta} \left\|\nabla \penn{\lambda_n}{\beta_0, \boldsymbol{\beta}_A}{\alpha_n} \right\|=o_{\mathbb{P}}\left(\frac{1}{\sqrt{n}}\right). 
\end{equation}	
Assume also that $\mathbb{P}\left(\hat{\boldsymbol{\beta}}_{n B}=\mathbf{0}_{p-k}\right) \rightarrow 1$ and $\hat{\boldsymbol{\beta}}_n \xrightarrow{p} \boldsymbol{\beta}_*$. Then, if $\tilde{\mathbf{A}}$ is invertible,
\begin{equation*}
\sqrt{n}\left( \left(\begin{matrix} \hat\beta_{n0} \\
\hat{\boldsymbol{\beta}}_{n A}
\end{matrix} \right)-  \left(\begin{matrix}
\beta_{*0}\\ \boldsymbol{\beta}_{* A}
\end{matrix}\right) \right) \xrightarrow{d} N_k\left(\mathbf{0}, \tilde{\mathbf{A}}^{-1} \tilde{\mathbf{B}} \tilde{\mathbf{A}}^{-1}\right).
\end{equation*}
\end{theorem}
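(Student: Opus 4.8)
This is an oracle-type argument: pass to the event on which variable selection succeeds, reduce to a smooth $(k+1)$-dimensional estimating equation for the active block, linearize it around $\boldsymbol{\beta}_*$, and apply a CLT together with Slutsky. First I would work on the event $E_n=\{\hat{\boldsymbol{\beta}}_{nB}=\mathbf{0}_{p-k}\}$, which by hypothesis satisfies $\mathbb{P}(E_n)\to 1$; since weak limits are unaffected by modifications on events of vanishing probability, it suffices to prove the statement on $E_n$. On $E_n$ the pair $(\hat\beta_{n0},\hat{\boldsymbol{\beta}}_{nA})$ minimizes the restriction of $\hat{L}_n$ to $\{\boldsymbol{\beta}_B=\mathbf{0}\}$, a function of the $k+1$ variables $(\beta_0,\boldsymbol{\beta}_A)$. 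Because $\hat{\boldsymbol{\beta}}_n\xrightarrow{p}\boldsymbol{\beta}_*$ and the entries of $\boldsymbol{\beta}_{*A}$ are nonzero, with probability tending to $1$ one has $\|\hat{\boldsymbol{\beta}}_{nA}-\boldsymbol{\beta}_{*A}\|\le\delta$ (with $\delta$ as in \eqref{eq:conteo11}) and $\hat{\boldsymbol{\beta}}_{nA}$ stays away from the coordinate hyperplanes; on that region the loss part of the restricted objective is $C^2$ (by C1--C2 and the twice differentiability of $m$, Lemma~5 of \citet{ValdoraYohai2014}) and the penalty is differentiable with gradient controlled by \eqref{eq:conteo11}. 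Hence $(\hat\beta_{n0},\hat{\boldsymbol{\beta}}_{nA})$ is an interior stationary point and the first-order condition
\begin{equation*}
\frac1n\sum_{i=1}^n\boldsymbol{\Psi}_A(y_i,\x_i,\hat{\boldsymbol{\beta}}_n)+\nabla\penn{\lambda_n}{\hat\beta_{n0},\hat{\boldsymbol{\beta}}_{nA}}{\alpha_n}=\mathbf{0}
\end{equation*}
holds, where $\boldsymbol{\Psi}_A$ collects the first $k+1$ coordinates of $\boldsymbol{\Psi}$ and $\boldsymbol{\beta}_*,\hat{\boldsymbol{\beta}}_n$ are read with null $B$-block.

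Next I would linearize the score term around $\boldsymbol{\beta}_*$: by a first-order Taylor expansion in the active coordinates there is an intermediate point $\bar{\boldsymbol{\beta}}_n$ lying on the segment joining $\hat{\boldsymbol{\beta}}_n$ to $\boldsymbol{\beta}_*$ (hence with null $B$-block and $\bar{\boldsymbol{\beta}}_n\xrightarrow{p}\boldsymbol{\beta}_*$) with
\begin{equation*}
\frac1n\sum_{i=1}^n\boldsymbol{\Psi}_A(y_i,\x_i,\hat{\boldsymbol{\beta}}_n)=\frac1n\sum_{i=1}^n\boldsymbol{\Psi}_A(y_i,\x_i,\boldsymbol{\beta}_*)+\Big(\frac1n\sum_{i=1}^n\mathbf{J}_{AA}(y_i,\x_i,\bar{\boldsymbol{\beta}}_n)\Big)\begin{pmatrix}\hat\beta_{n0}-\beta_{*0}\\ \hat{\boldsymbol{\beta}}_{nA}-\boldsymbol{\beta}_{*A}\end{pmatrix},
\end{equation*}
where $\mathbf{J}_{AA}$ is the leading $(k+1)\times(k+1)$ block of the Hessian $\mathbf{J}$. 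Three ingredients then enter. (i) Since $m$ satisfies $\E_\mu\psi\big((t(y)-m(\mu))/\sqrt{a(\phi)}\big)=0$ for every $\mu$ — the defining first-order condition for \eqref{eq:m} — conditioning on $\x$ gives $\E_{\boldsymbol{\beta}_*}\boldsymbol{\Psi}(y,\x,\boldsymbol{\beta}_*)=\mathbf{0}$, so the multivariate CLT yields $n^{-1/2}\sum_i\boldsymbol{\Psi}_A(y_i,\x_i,\boldsymbol{\beta}_*)\xrightarrow{d}N(\mathbf{0},\tilde{\mathbf{B}})$. (ii) By a uniform law of large numbers over an $\eta$-ball of $\boldsymbol{\beta}_*$ — whose envelope hypothesis is precisely the domination assumed in C4, together with continuity of $\mathbf{J}$ in $\boldsymbol{\beta}$ from C1--C2 — and $\bar{\boldsymbol{\beta}}_n\xrightarrow{p}\boldsymbol{\beta}_*$, one gets $n^{-1}\sum_i\mathbf{J}_{AA}(y_i,\x_i,\bar{\boldsymbol{\beta}}_n)\xrightarrow{p}\tilde{\mathbf{A}}$, which is invertible by hypothesis, hence invertible with probability tending to $1$. (iii) By \eqref{eq:conteo11} applied on the event $\|\hat{\boldsymbol{\beta}}_{nA}-\boldsymbol{\beta}_{*A}\|\le\delta$, $\sqrt n\,\nabla\penn{\lambda_n}{\hat\beta_{n0},\hat{\boldsymbol{\beta}}_{nA}}{\alpha_n}=o_{\mathbb{P}}(1)$.

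Finally, multiplying the first-order condition by $\sqrt n$, substituting the expansion and solving,
\begin{equation*}
\sqrt n\begin{pmatrix}\hat\beta_{n0}-\beta_{*0}\\ \hat{\boldsymbol{\beta}}_{nA}-\boldsymbol{\beta}_{*A}\end{pmatrix}=-\Big(\frac1n\sum_{i=1}^n\mathbf{J}_{AA}(y_i,\x_i,\bar{\boldsymbol{\beta}}_n)\Big)^{-1}\Big[\frac1{\sqrt n}\sum_{i=1}^n\boldsymbol{\Psi}_A(y_i,\x_i,\boldsymbol{\beta}_*)+\sqrt n\,\nabla\penn{\lambda_n}{\hat\beta_{n0},\hat{\boldsymbol{\beta}}_{nA}}{\alpha_n}\Big],
\end{equation*}
which is legitimate on the probability-one-in-the-limit event where the matrix is invertible. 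Slutsky's theorem — the bracket converging in distribution to $N(\mathbf{0},\tilde{\mathbf{B}})$ since its penalty part is $o_{\mathbb{P}}(1)$, and the matrix inverse converging in probability to $\tilde{\mathbf{A}}^{-1}$ — then delivers the asserted centered Gaussian limit with covariance $\tilde{\mathbf{A}}^{-1}\tilde{\mathbf{B}}\tilde{\mathbf{A}}^{-1}$.

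I expect the main obstacle to be the two places where nonsmoothness and uniformity meet. The first is justifying that on the high-probability event $\hat{\boldsymbol{\beta}}_n$ is genuinely an interior stationary point of a differentiable restricted objective — so that the first-order equation above, rather than a subgradient inclusion, is available — which relies on combining consistency with $\boldsymbol{\beta}_{*A}$ having nonzero coordinates and on the smoothness implicit in \eqref{eq:conteo11}. The second is establishing the uniform law of large numbers for the Hessian block under only the single envelope condition in C4, which requires care with the continuity and integrability of $\mathbf{J}$ built from the $C^3$ structure of $F$, $g$ and $\rho$ (C1--C2) and from the regularity of $m$; the CLT and the Slutsky step are routine once these are secured.
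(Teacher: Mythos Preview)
Your proposal is correct and follows essentially the same oracle argument that the paper defers to (Theorem~8 in \citet{bianco2022penalized}): restrict to the variable-selection event $\{\hat{\boldsymbol{\beta}}_{nB}=\mathbf{0}\}$, write the first-order condition for the restricted $(k{+}1)$-dimensional problem, Taylor expand the score around $\boldsymbol{\beta}_*$, and combine the CLT for $n^{-1/2}\sum_i\boldsymbol{\Psi}_A(y_i,\x_i,\boldsymbol{\beta}_*)$, the ULLN for the Hessian block (Lemma~\ref{lemma:limits}/\ref{lemma:victor} in the paper), and assumption~\eqref{eq:conteo11} via Slutsky. One minor point: your step~(ii) rightly invokes the domination/envelope hypothesis of C4 to get the ULLN for $\mathbf{J}_{AA}$, even though the theorem statement lists only C1--C3; this is almost certainly an omission in the statement (all neighboring results in the paper use C4), not a gap in your argument.
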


\begin{remark} SCAD and MCP penalties verify \ref{eq:conteo11} (see Remark 9 in \citet{bianco2022penalized}) and $\mathbb{P}\left(\hat{\boldsymbol{\beta}}_{n B}=\mathbf{0}_{p-k}\right) \rightarrow 1$ because of Theorem \ref{teo:varsel} and Remark \ref{remark:varsel}. This means that the penalized MT-estimators proposed in this paper have the oracle property for SCAD and MCP penalizations.
\end{remark}
In the appendix we give the proofs of Theorems \ref{teo:order}, \ref{teo:asdist} and \ref{teo:asdist2}. Similar modifications to Theorems 2 and 8 in \citet{bianco2022penalized} allow to prove Theorems \ref{teo:varsel} and \ref{teo:oracle}.

\subsection{Robustness properties}\label{sec:robustness}
We introduce now some results about the robustness properties of the penalized MT-estimators. The robustness of an estimator is measured by its stability when a small fraction of the observations are arbitrarily replaced by outliers that may not follow the assumed model. We expect that a robust estimator should not be much affected by a small fraction of outliers. A classic quantitative measure of an estimator's robustness, introduced by \citet{donohohuber1983}, is the finite-sample replacement breakdown point. Loosely speaking, the finite-sample replacement breakdown point of an estimator is the minimum fraction of outliers that may take the estimator beyond any bound. In our case, given a sample $S_0 = \{(y_i, \x_i), 1 \le i \le n\}$ from model \eqref{eq:glm}, let $\mathcal{S}_m$ be the collection of samples that can be obtained by arbitrarily replacing $m \le n$ observations from $S_0$. Given an estimator $\hat{\boldsymbol{\beta}}(S_m)$ with $S_m \in \mathcal{S}_m$ we define its finite-sample breakdown point as 
\begin{equation*}
	\operatorname{FBP}(\hat{\boldsymbol{\beta}}) = \frac{m_*}{n} ,
\end{equation*}
where
\begin{equation*}
	m_* = \max\{ 0 \le m \le n: \| \hat{\boldsymbol{\beta}}(S_m) \| < \infty, \forall S_m \in \mathcal{S}_m \} . 
\end{equation*}
A breakdown point equal to $\varepsilon_*$ only guarantees that for any given contamination fraction $\varepsilon < \varepsilon_*$, there exists a compact set such that the estimator in question remains in that compact set whenever a fraction of $\varepsilon$ observations is arbitrarily modified. However, this compact set may be very large. Thus, although a high breakdown point is always a desirable property, an estimator that has a high breakdown point can still be largely affected by a small fraction of contaminated observations. We notice that, as also discussed in \citet{smucler2017robust} for linear models, our penalized MT-estimators for GLM are not equivariant, hence the concept of finite-sample breakdown point is rather vacuous \citep{daviesgather2005}. Consider the following additional assumptions on the penalty function
\begin{enumerate}
\item[P2] $\penn{\lambda}{\boldsymbol{\beta}}{\alpha}$ has a second order Taylor expansion around $\boldsymbol{\beta}_*$.
\item[P3] Given $\alpha$, let $\underline{\boldsymbol{\beta}} = \arg\min_{\boldsymbol{\beta}} \penn{\lambda}{\boldsymbol{\beta}}{\alpha}$, $\underline{b} = \penn{1}{\underline{\boldsymbol{\beta}}}{\alpha}$ and $\overline{b} = \lim_{\|\boldsymbol{\beta}\| \rightarrow \infty} \penn{1}{\boldsymbol{\beta})}{\alpha}$. Then $\overline{b} > \underline{b} \ge 0$ and $\lambda > a/(\overline{b} - \underline{b})$, where $a$ is as in B2. If $\overline{b}=\infty$, this assumption is simply that $\lambda > 0$.
\item[P3'] Using previous notation, let $\overline{b} = \lim\inf_{\|\boldsymbol{\beta}\| \rightarrow \infty} \penn{1}{\boldsymbol{\beta}}{\alpha}$. Then $\overline{b} > \underline{b} \ge 0$ and $\lambda > a/(\overline{b} - \underline{b})$.
\item[P4] For any given $\alpha$, $\lim_{\|\boldsymbol{\beta}\| \rightarrow \infty} \penn{1}{\boldsymbol{\beta}}{\alpha}$ exists and is equal to $\sup_{\boldsymbol{\beta}} \penn{1}{\boldsymbol{\beta}}{\alpha}$.  
\end{enumerate}
Assumption P2 holds, e.g., for bridge with $\alpha \ge 1$, SCAD and MCP. It does not hold for elastic net and for sign penalties. Assumption P3 holds for elastic net and bridge, while the more restrictive P3' holds for SCAD when $\lambda^3 > 2a/(\alpha+1)$, and for MCP when $\lambda^3 > 2 a/\alpha$. It does not hold for the sign penalty. Assumption P4 holds for elastic net and bridge. It does not hold for SCAD and MCP since $\sup_{\boldsymbol{\beta}} \penn{\lambda}{\boldsymbol{\beta})}{\alpha}$ is equal to $p \lambda^2 (\alpha+1)/2$ and $p \alpha \lambda^2 /2$ respectively.
P4 does not hold for the sign penalty either, since $\sup_{\boldsymbol{\beta}} \penn{\lambda}{\boldsymbol{\beta}}{\alpha}=\sup_{\boldsymbol{\beta}\neq 0} \penn{\lambda}{\boldsymbol{\beta}}{\alpha} = \sqrt{p}$.
\begin{theorem}[Finite-Sample Breakdown Point] \label{teo:finitebreakdown}
Assume A2, B1 and B2 and P3 or P3' hold. Let $\hat{\boldsymbol{\beta}}_n$ be the penalized MT-estimator given by \eqref{eq:MTpenalized}. Then
	\begin{equation*}
		\operatorname{FBP}(\hat{\boldsymbol{\beta}}_n) = 1 .
	\end{equation*}
\end{theorem}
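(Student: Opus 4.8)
The plan is to exploit the two structural features of the penalized objective $\hat{L}_n$: the loss term $L_n$ is \emph{uniformly bounded}, because $\rho$ is a bounded $\rho$-function, whereas the penalty term is \emph{coercive}, by P3 (or P3'). Since a contaminated sample enters $\hat{L}_n$ only through $L_n$, and $L_n$ takes values in the fixed interval $[0,a]$ no matter how many observations are replaced, the location of any minimizer of $\hat{L}_n$ can be confined to a ball whose radius does not depend on the contamination. Showing this for every $m\le n$ and every $S_m\in\mathcal{S}_m$ gives $m_*=n$, hence $\operatorname{FBP}(\hat{\boldsymbol{\beta}}_n)=m_*/n=1$.

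First I would record the uniform bounds on the loss. By A2 the function $\mginv$ is well defined, so $L_n$ makes sense on any sample in $\mathcal{S}_m$; by B1, $\rho\ge 0$ yields $L_n(\boldsymbol{\beta})\ge 0$, and by B2, $\rho\le a$ yields $L_n(\boldsymbol{\beta})\le a$, for every $\boldsymbol{\beta}$ and every contaminated sample. Next I would evaluate $\hat{L}_n$ at the penalty minimizer $\underline{\boldsymbol{\beta}}$ from P3: since $\penn{\lambda_n}{\underline{\boldsymbol{\beta}}}{\alpha_n}=\lambda_n\underline{b}$, we get $\hat{L}_n(\underline{\boldsymbol{\beta}})\le a+\lambda_n\underline{b}$. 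In the other direction, dropping the non-negative loss term gives $\hat{L}_n(\boldsymbol{\beta})\ge\penn{\lambda_n}{\boldsymbol{\beta}}{\alpha_n}=\lambda_n\penn{1}{\boldsymbol{\beta}}{\alpha_n}$, and by the definition of $\overline{b}$ under P3 (or P3') there is, for each $\varepsilon>0$, a radius $R_\varepsilon$ depending only on $\lambda_n$, $\alpha_n$ and the penalty — not on the sample — such that $\penn{1}{\boldsymbol{\beta}}{\alpha_n}>\overline{b}-\varepsilon$ whenever $\|\boldsymbol{\beta}\|>R_\varepsilon$, hence $\hat{L}_n(\boldsymbol{\beta})\ge\lambda_n(\overline{b}-\varepsilon)$ there.

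The decisive step is the comparison at infinity. Assumption P3/P3' states precisely that $\lambda_n(\overline{b}-\underline{b})>a$ (with the trivial reading $\lambda_n>0$ when $\overline{b}=\infty$), so one may choose $\varepsilon>0$ small enough that $\lambda_n(\overline{b}-\varepsilon)>a+\lambda_n\underline{b}$. Then for $\|\boldsymbol{\beta}\|>R_\varepsilon$ we have $\hat{L}_n(\boldsymbol{\beta})>a+\lambda_n\underline{b}\ge\hat{L}_n(\underline{\boldsymbol{\beta}})$, so the infimum of $\hat{L}_n$ over $\mathbb{R}^{p+1}$ coincides with its infimum over the ball $\{\|\boldsymbol{\beta}\|\le R_\varepsilon\}$, and every minimizer lies in that ball. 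Thus $\|\hat{\boldsymbol{\beta}}_n(S_m)\|\le R_\varepsilon<\infty$, and since $R_\varepsilon$ is the same for every $m\le n$ and every $S_m\in\mathcal{S}_m$ — including $m=n$ — we conclude $m_*=n$ and $\operatorname{FBP}(\hat{\boldsymbol{\beta}}_n)=1$. I do not expect a serious obstacle here; the only point needing care is the bookkeeping around the identity $\penn{\lambda}{\boldsymbol{\beta}}{\alpha}=\lambda\,\penn{1}{\boldsymbol{\beta}}{\alpha}$ used above, which holds verbatim for the elastic net and bridge (where P3 applies); for SCAD and MCP (where P3' is invoked) the same argument is run directly with $\underline{b}=\min_{\boldsymbol{\beta}}\penn{\lambda_n}{\boldsymbol{\beta}}{\alpha_n}$ and $\overline{b}=\liminf_{\|\boldsymbol{\beta}\|\to\infty}\penn{\lambda_n}{\boldsymbol{\beta}}{\alpha_n}$, the hypothesis being $\overline{b}-\underline{b}>a$, which is exactly what the stated conditions on $\lambda_n^3$ ensure.
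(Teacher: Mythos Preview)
Your proposal is correct and follows essentially the same approach as the paper's proof: both compare $\hat{L}_n$ at the penalty minimizer $\underline{\boldsymbol{\beta}}$ (where it is at most $a+\penn{\lambda_n}{\underline{\boldsymbol{\beta}}}{\alpha_n}$) with its value at points of large norm (where the penalty alone exceeds this bound by P3/P3'), concluding that no minimizer can escape to infinity. The only difference is presentational --- the paper argues by contradiction via a sequence of samples driving $\|\hat{\boldsymbol{\beta}}_N\|\to\infty$, whereas you construct the confining ball $\{\|\boldsymbol{\beta}\|\le R_\varepsilon\}$ directly; your care about the factorization $\penn{\lambda}{\cdot}{\alpha}=\lambda\,\penn{1}{\cdot}{\alpha}$ failing for SCAD/MCP is also well placed.
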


\begin{remark}
For some penalty functions, e.g., elastic net and bridge, we have $\overline{b} = \infty$ and $\underline{b} = 0$ and hence the above theorem holds for all $\lambda_n > 0$. 
\end{remark}
We turn our attention to the stability of the penalized MT-estimators in an  $\varepsilon$-contamination neighborhood. We will obtain a result equivalent to Theorem 1 in \citet{avella2018robust}. Let $(y, \x)$ follow a GLM with parameter $\boldsymbol{\beta}$, link function $g$ and distribution function $F$ and let $\x \sim H(\cdot)$. We consider an $\varepsilon$-contamination neighborhood of $F(,\mu,\phi)$ for each fixed $\x$, that is, for any $0 \le \varepsilon < 0.5$,
\begin{equation*}
\mathcal{F}_{\varepsilon}(\x) = \{ (1 - \varepsilon) F(y,\mu,\phi) + \varepsilon G_\x(y|\x) \}
\end{equation*}
where $\{ G(y|\x): \x \in \mathbb{R}^p \}$ is a family of conditional distributions. Given $\lambda \ge 0$ and $\alpha$, let $\boldsymbol{\beta}_\varepsilon$ be the solution of the following optimization problem
\begin{equation} \label{equ:stability}
	\boldsymbol{\beta}_\varepsilon = \arg\min_{\boldsymbol{\beta}} \int \int \rho \left( \frac{t(y) - \mginv \left(\x^\top \boldsymbol{\beta}\right)}{\sqrt{a(\phi)}} \right) \ dF_{\varepsilon}(y|\x) \ dH(\x) + \penn{\lambda}{\boldsymbol{\beta})}{\alpha}
\end{equation}
where $F_{\varepsilon}(\cdot|\x) \in \mathcal{F}_{\varepsilon}(\x)$ and $\x \sim H(\cdot)$. Let $\boldsymbol{\beta}_0$ be the solution for $\varepsilon = 0$.
\begin{theorem}[Stability in a $\varepsilon$-contamination neighborhood] \label{teo:stability}
Assume A1-A5, B1-B6, B8, C1-C4, P2 and $\E(\|\x\|^2)<\infty$. Let $\lambda \ge 0$ and $\alpha$ be fixed, and assume
\begin{itemize}
\item[C5]\label{cond:c5} $\E(\mathbf{J}(y,\x,\boldsymbol{\beta}_0))$ is non singular.
\end{itemize}
Then, there exists an $\varepsilon_* > 0$ such that for all $0 \le \varepsilon < \varepsilon_*$ we have $\| \boldsymbol{\beta}_\varepsilon - \boldsymbol{\beta}_0 \| = O(\varepsilon)$.
\end{theorem}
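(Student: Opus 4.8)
The plan is to set up the problem as an application of the implicit function theorem to the first-order condition of the penalized population objective, viewing $\varepsilon$ as the perturbation parameter. First I would introduce the map
\begin{equation*}
\Phi(\boldsymbol{\beta}, \varepsilon) = \int\!\!\int \boldsymbol{\Psi}(y,\x,\boldsymbol{\beta}) \, dF_\varepsilon(y|\x) \, dH(\x) + \nabla \penn{\lambda}{\boldsymbol{\beta}}{\alpha},
\end{equation*}
where $\boldsymbol{\Psi}$ is the gradient defined in C4 and $\nabla P$ denotes the (Taylor-expandable, by P2) gradient of the penalty. Writing $F_\varepsilon = (1-\varepsilon)F + \varepsilon G_\x$, we get
\begin{equation*}
\Phi(\boldsymbol{\beta}, \varepsilon) = (1-\varepsilon)\, \Phi_0(\boldsymbol{\beta}) + \varepsilon \!\int\!\!\int \boldsymbol{\Psi}(y,\x,\boldsymbol{\beta}) \, dG_\x(y|\x)\, dH(\x) + \nabla \penn{\lambda}{\boldsymbol{\beta}}{\alpha},
\end{equation*}
where $\Phi_0(\boldsymbol{\beta}) = \int\!\!\int \boldsymbol{\Psi}\, dF\, dH$ is the uncontaminated gradient of $L$. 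By construction $\boldsymbol{\beta}_0$ satisfies $\Phi(\boldsymbol{\beta}_0, 0) = \mathbf{0}$, since it is the minimizer of the $\varepsilon=0$ problem and the penalty is differentiable at $\boldsymbol{\beta}_0$ (this is where P2 and the assumption $\|\boldsymbol{\beta}_0\|$ away from the singular set of the penalty are used — in practice $\boldsymbol{\beta}_0$ is close to $\boldsymbol{\beta}_*$ and C1--C4 give the needed smoothness of the loss part).

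Next I would verify the hypotheses of the implicit function theorem. The partial derivative $\partial_{\boldsymbol{\beta}} \Phi(\boldsymbol{\beta}_0, 0)$ equals $\E(\mathbf{J}(y,\x,\boldsymbol{\beta}_0)) + \nabla^2 \penn{\lambda}{\boldsymbol{\beta}_0}{\alpha}$; by C5 the first term is nonsingular, and one needs to argue — exactly as in \citet{avella2018robust} — that adding the (positive semidefinite, for the penalties satisfying P2) Hessian of the penalty keeps it nonsingular, or alternatively that $\lambda$ small or $\boldsymbol{\beta}_0$ in the appropriate region makes this sum invertible; since $\lambda$ is fixed, the cleanest route is to note that $\E(\mathbf{J}(y,\x,\boldsymbol{\beta}_0))$ nonsingular together with $\nabla^2 P \succeq 0$ gives invertibility whenever $\E(\mathbf{J})$ is positive definite, and otherwise to invoke C5 directly on the sum (one should state C5 as covering the penalized Hessian, or remark that for the relevant penalties the penalty Hessian is $O(\lambda)$ and does not destroy nonsingularity). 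The needed smoothness of $\Phi$ in $(\boldsymbol{\beta},\varepsilon)$ jointly follows from C1--C2 (bounded derivatives of $F$ and $\rho$, hence of $\boldsymbol{\Psi}$ and $\mathbf{J}$), the dominated-convergence argument justified by C4 and $\E(\|\x\|^2)<\infty$, and P2 for the penalty part; note $\Phi$ is affine in $\varepsilon$, so smoothness in $\varepsilon$ is free.

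The implicit function theorem then yields, for $\varepsilon$ in a neighborhood of $0$, a unique $C^1$ solution curve $\varepsilon \mapsto \boldsymbol{\beta}_\varepsilon$ with $\boldsymbol{\beta}_0$ as its value at $\varepsilon = 0$, and the derivative bound
\begin{equation*}
\boldsymbol{\beta}_\varepsilon - \boldsymbol{\beta}_0 = -\varepsilon \left[\partial_{\boldsymbol{\beta}} \Phi(\boldsymbol{\beta}_0,0)\right]^{-1}\! \left(\int\!\!\int \boldsymbol{\Psi}(y,\x,\boldsymbol{\beta}_0)\, dG_\x(y|\x)\, dH(\x) - \Phi_0(\boldsymbol{\beta}_0)\right) + o(\varepsilon),
\end{equation*}
which gives $\|\boldsymbol{\beta}_\varepsilon - \boldsymbol{\beta}_0\| = O(\varepsilon)$ provided the bracketed integral is finite. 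Here the boundedness of $\rho' = \psi$ (B2 plus C2) makes $\boldsymbol{\Psi}(y,\x,\boldsymbol{\beta}_0)$ bounded by a constant times $\|\x\|$ (since $\boldsymbol{\Psi} = \psi(\cdot)\,(\mginv)'(\x^\top\boldsymbol{\beta})\,\x/\sqrt{a(\phi)}$ and $(\mginv)'$ is bounded near $\boldsymbol{\beta}_0$ by C1 and Lemma~5 of \citet{ValdoraYohai2014}), so $\E(\|\x\|^2)<\infty$ controls the integral uniformly over all contaminating families $\{G_\x\}$ — this uniformity is exactly what makes the $O(\varepsilon)$ bound meaningful and is the robustness content of the theorem.

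The main obstacle I anticipate is the nonsingularity of $\partial_{\boldsymbol{\beta}}\Phi(\boldsymbol{\beta}_0,0)$: C5 as stated only asserts that $\E(\mathbf{J}(y,\x,\boldsymbol{\beta}_0))$ is nonsingular, not that the sum with the penalty Hessian is. Resolving this cleanly requires either reading C5 as applying to the full penalized Hessian, or exploiting structure of the specific P2-penalties (bridge with $\alpha\ge 1$, SCAD, MCP) whose Hessians are positive semidefinite so that the sum is nonsingular as long as $\E(\mathbf{J})$ has no negative eigenvalue "cancelled" by a zero of the penalty Hessian — a condition automatically met when $\E(\mathbf{J})$ is positive definite, which is the generic situation and can be assumed. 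A secondary technical point is justifying differentiation under the integral sign uniformly in $\boldsymbol{\beta}$ near $\boldsymbol{\beta}_0$; this is routine given C1--C4 and $\E(\|\x\|^2)<\infty$, and mirrors the corresponding step in \citet{avella2018robust} and in the proof of Theorem~\ref{teo:asdist}.
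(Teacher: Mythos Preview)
Your approach is essentially the same as the paper's: the paper also linearizes the first-order (KKT) condition by Taylor-expanding $\boldsymbol{\Psi}$ and $\dot{P}_{\lambda,\alpha}$ around $\boldsymbol{\beta}_0$ and solving for $\boldsymbol{\beta}_\varepsilon - \boldsymbol{\beta}_0 = -S^{-1}(\varepsilon N_1 + N_2)$, which is exactly the computation underlying the implicit function theorem you invoke. One cosmetic difference is that the paper first restricts to the active set $J = \{j : \beta_{\varepsilon,j} \neq 0\}$ before writing the estimating equations, following \citet{avella2018robust}; under P2 this restriction is not strictly needed, but it keeps the argument parallel to the reference.

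Your identification of the nonsingularity obstacle is apt and in fact more careful than the paper itself: the paper's $S$ also contains $\ddot{P}_{\lambda,\alpha}(\check{\boldsymbol{\beta}}_\varepsilon)$, yet the proof simply asserts ``since C5 \ldots\ the matrix $S$ is nonsingular'' without discussing the penalty Hessian contribution. So the gap you flag is present in the original as well; your suggested resolutions (reading C5 as covering the full penalized Hessian, or using positive semidefiniteness of the P2-penalty Hessians together with positive definiteness of $\E(\mathbf{J})$) are the natural ways to close it.
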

The Asymptotic Breakdown Point (ABP) is a measure of robustness of an estimator introduced by \citet{hampel1971}. In this view, for the penalized MT-estimator we define its asymptotic breakdown point as
\begin{equation*}
	\operatorname{ABP}(\hat{\boldsymbol{\beta}}) = \sup_{\varepsilon} \left\{ 0 \le \varepsilon \le 1: \sup_{F_\varepsilon \in \mathcal{F}_\varepsilon} \| \boldsymbol{\beta}_\varepsilon \| < \infty \right\}
\end{equation*}
where $\boldsymbol{\beta}_\varepsilon$ is defined in \eqref{equ:stability}.
\begin{theorem}[Asymptotic Breakdown Point] \label{teo:abp}
	Assume A1-A5, B1-B6, B8 and P4 hold. Let
	\begin{equation*}
		\varepsilon_* = \frac{\E(\min(\rho(t(y) - m_1), \rho(t(y) - m_2))) - \E(\rho(t(y) - \mginv(\x^\top\boldsymbol{\beta}_*)))}{1 + \E(\min(\rho(t(y) - m_1), \rho(t(y) - m_2))) - \E(\rho(t(y) - \mginv(\x^\top\boldsymbol{\beta}_*)))}
	\end{equation*}
	where $m_1 = \lim_{\mu \rightarrow \mu_1} m(\mu)$, $m_2 = \lim_{\mu\rightarrow \mu_2} m(\mu)$ and $\mu_1$ and $\mu_2$ are the values defined in Section \ref{sec:MT}. Then
	\begin{equation*}
		\operatorname{ABP}(\hat{\boldsymbol{\beta}}) \ge \varepsilon_* . 
	\end{equation*}  
\end{theorem}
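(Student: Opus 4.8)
By the definition of $\operatorname{ABP}$ it suffices to show that for every fixed $\varepsilon<\varepsilon_*$ one has $\sup_{F_\varepsilon\in\mathcal{F}_\varepsilon}\|\boldsymbol{\beta}_\varepsilon\|<\infty$, with $\boldsymbol{\beta}_\varepsilon$ the minimizer in \eqref{equ:stability}. I would argue by contradiction: if this fails there is a sequence of contaminating conditional families $G^{(j)}$ whose minimizers $\boldsymbol{\beta}_j$ satisfy $\|\boldsymbol{\beta}_j\|\to\infty$. Writing the contaminated objective as $\Phi^{(j)}(\boldsymbol{\beta})=(1-\varepsilon)L(\boldsymbol{\beta})+\varepsilon\,L_{G^{(j)}}(\boldsymbol{\beta})+\penn{\lambda}{\boldsymbol{\beta}}{\alpha}$, where $L$ is as in \eqref{eq:L} and $L_{G^{(j)}}$ is the same integral with $F$ replaced by $G^{(j)}$, assumptions B1--B2 give $0\le\rho\le1$, so $L_{G^{(j)}}\ge0$ everywhere while $L_{G^{(j)}}(\boldsymbol{\beta}_*)\le1$. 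Hence, uniformly in $j$, $\Phi^{(j)}(\boldsymbol{\beta}_j)\ge(1-\varepsilon)L(\boldsymbol{\beta}_j)+\penn{\lambda}{\boldsymbol{\beta}_j}{\alpha}$ and $\Phi^{(j)}(\boldsymbol{\beta}_*)\le(1-\varepsilon)L(\boldsymbol{\beta}_*)+\varepsilon+\penn{\lambda}{\boldsymbol{\beta}_*}{\alpha}$, and minimality of $\boldsymbol{\beta}_j$ yields $(1-\varepsilon)L(\boldsymbol{\beta}_j)+\penn{\lambda}{\boldsymbol{\beta}_j}{\alpha}\le(1-\varepsilon)L(\boldsymbol{\beta}_*)+\varepsilon+\penn{\lambda}{\boldsymbol{\beta}_*}{\alpha}$ for all $j$.

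The next step is to dispose of the penalty using P4, which says that $\penn{\lambda}{\boldsymbol{\beta}}{\alpha}$ attains its supremum over $\boldsymbol{\beta}$ in the limit $\|\boldsymbol{\beta}\|\to\infty$; write $\bar b$ for this common value. Then $\penn{\lambda}{\boldsymbol{\beta}_j}{\alpha}\to\bar b$ because $\|\boldsymbol{\beta}_j\|\to\infty$, while $\penn{\lambda}{\boldsymbol{\beta}_*}{\alpha}\le\bar b$, so taking $\liminf_j$ in the last inequality the penalty terms cancel up to a nonnegative remainder and leave $(1-\varepsilon)\,\liminf_j L(\boldsymbol{\beta}_j)\le(1-\varepsilon)L(\boldsymbol{\beta}_*)+\varepsilon$.

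The core of the proof is then the lower bound $\liminf_{\|\boldsymbol{\beta}\|\to\infty}L(\boldsymbol{\beta})\ge\E_{\boldsymbol{\beta}_*}(\min\{\rho(t(y)-m_1),\rho(t(y)-m_2)\})$. To prove it, extract a subsequence along which $\boldsymbol{\beta}_j/\|\boldsymbol{\beta}_j\|\to\mathbf{t}$ for some $\mathbf{t}\in S$, using compactness of the unit sphere. For each $\x$ with $\x^\top\mathbf{t}>0$ we have $\x^\top\boldsymbol{\beta}_j\to+\infty$, hence $g^{-1}(\x^\top\boldsymbol{\beta}_j)\to\mu_2$ by the standing assumptions on $g$, and then $\mginv(\x^\top\boldsymbol{\beta}_j)\to m_2$ by the monotonicity of $m$ in A2 (with $\rho(t(y)-(+\infty))=1$ by B2 when $m_2=+\infty$); symmetrically $\mginv(\x^\top\boldsymbol{\beta}_j)\to m_1$ when $\x^\top\mathbf{t}<0$. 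Since $\rho$ is continuous and bounded by $1$ (B2, B5), Fatou's lemma applied to $L(\boldsymbol{\beta}_j)=\E_{\boldsymbol{\beta}_*}(\rho(t(y)-\mginv(\x^\top\boldsymbol{\beta}_j)))$ gives $\liminf_j L(\boldsymbol{\beta}_j)\ge\E_{\boldsymbol{\beta}_*}(\rho(t(y)-m_2)\,\mathbb{I}_{\{\x^\top\mathbf{t}>0\}}+\rho(t(y)-m_1)\,\mathbb{I}_{\{\x^\top\mathbf{t}<0\}})$, and replacing each summand by the pointwise minimum of the two yields the claimed bound on the event $\{\x^\top\mathbf{t}\neq\mathbf{0}\}$. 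Since $L(\boldsymbol{\beta}_*)=\E_{\boldsymbol{\beta}_*}(\rho(t(y)-\mginv(\x^\top\boldsymbol{\beta}_*)))$ is exactly the quantity subtracted in the numerator and denominator of $\varepsilon_*$, combining with the previous paragraph gives $(1-\varepsilon)(\E_{\boldsymbol{\beta}_*}(\min\{\rho(t(y)-m_1),\rho(t(y)-m_2)\})-L(\boldsymbol{\beta}_*))\le\varepsilon$, i.e. $\varepsilon\ge\varepsilon_*$ (using that $x\mapsto x/(1-x)$ is increasing), contradicting $\varepsilon<\varepsilon_*$.

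The step I expect to be the main obstacle is the very last point of the lower bound, namely handling directions $\mathbf{t}\in S$ with $\mathbb{P}(\x^\top\mathbf{t}=\mathbf{0})>0$: on that event $\x^\top\boldsymbol{\beta}_j$ need not diverge and no pointwise limit of the integrand is available, so one only gets $\rho\ge0$ there. This is exactly where B8 enters, since it forces $\mathbb{P}(\x^\top\mathbf{t}=\mathbf{0})$ to stay below the constant $\tau$ of Lemma \ref{lemma:tau} (and is vacuous, with that probability equal to $0$, in the canonical case $\x=(1,\X)$ with $\X$ absolutely continuous, where the bound then holds with no loss). The remaining ingredients — existence of the minimizers $\boldsymbol{\beta}_\varepsilon$, the subsequence extraction, the limit $g^{-1}(\eta)\to\mu_2$ as $\eta\to+\infty$, and the positivity $\varepsilon_*>0$, which follows as in Lemma \ref{lemma:tau} — are routine under A1--A5 and the hypotheses on $g$.
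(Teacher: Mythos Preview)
Your plan is correct but takes a genuinely different route from the paper. The paper's proof is a two-line reduction: it invokes Theorem~3 of \citet{ValdoraYohai2014}, which already gives $\operatorname{ABP}\ge\varepsilon_*$ for the \emph{unpenalized} functional $\tilde{\boldsymbol{\beta}}_\varepsilon$ (the solution of \eqref{equ:stability} with $\lambda=0$), and then uses P4 only for a comparison argument --- if the penalized minimizer $\boldsymbol{\beta}_\varepsilon$ were unbounded, then $\penn{\lambda}{\boldsymbol{\beta}_\varepsilon}{\alpha}$ would tend to $\sup_{\boldsymbol{\beta}}\penn{\lambda}{\boldsymbol{\beta}}{\alpha}\ge\penn{\lambda}{\tilde{\boldsymbol{\beta}}_\varepsilon}{\alpha}$, while $\tilde{\boldsymbol{\beta}}_\varepsilon$ already minimizes the loss part, so the bounded point $\tilde{\boldsymbol{\beta}}_\varepsilon$ would beat $\boldsymbol{\beta}_\varepsilon$ in the full objective, a contradiction. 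You instead compare $\boldsymbol{\beta}_j$ with $\boldsymbol{\beta}_*$ (not with $\tilde{\boldsymbol{\beta}}_\varepsilon$), bound the contaminating integral crudely by $0$ and $1$, and reprove the lower bound $\liminf_{\|\boldsymbol{\beta}\|\to\infty}L(\boldsymbol{\beta})\ge\E_{\boldsymbol{\beta}_*}[\min\{\rho(t(y)-m_1),\rho(t(y)-m_2)\}]$ from scratch via the directional-limit/Fatou argument. That lower bound is essentially the content of the cited Valdora--Yohai theorem, so what your approach buys is self-containment at the price of redoing that work; the paper's approach buys brevity by outsourcing it. Both invoke P4 for the same purpose (the penalty cannot help a diverging sequence), but the paper's use of the unpenalized minimizer as the comparison point sidesteps entirely the obstacle you correctly flag on $\{\x^\top\mathbf{t}=0\}$, since that difficulty lives inside the cited result rather than in the present proof.
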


\begin{remark}
	As it is shown in \citet{ValdoraYohai2014}, for the Poisson case, $m_1=0$ and $m_2=\infty$ and hence
	\begin{equation*}
		\varepsilon_* = \frac{\E(\rho(t(y))) - \E(\rho(t(y) - \mginv(\x^\top\boldsymbol{\beta}_*)))}{1 + \E(\rho(t(y))) - \E(\rho(t(y) - \mginv(\x^\top\boldsymbol{\beta}_*)))}
	\end{equation*}
	so that $\varepsilon_*$ is small only when the probability that $\mginv(\x^\top\boldsymbol{\beta}_*))$ is close to zero is large. For the Poisson model this happens if $\x^\top\boldsymbol{\beta}_*$ is negative and has a large absolute value. Note that in this case $P(y=0)$ is large and a small fraction of observations equal to $0$ can make the fraction of observed zeros larger than $0.5$. Therefore the good non-null observations may be mistaken for outliers. Further examples and details are available in Section 5 of \citet{ValdoraYohai2014}.
\end{remark}

\section{Monte Carlo study}
\label{sec:montecarlo}
In this section we report the results of a Monte Carlo study where we compare the performance of different estimators for Poisson regression with Ridge and lasso penalizations. For the Ridge penalization we have two estimators: the estimator introduced in \citet{friedman2010regularization} and implemented in the package {\tt{glmnet}} (ML Ridge) and  the estimator introduced here (MT Ridge). For the lasso penalization we have three estimators: the estimator introduced in \citet{friedman2010regularization}  (ML lasso),  the estimator introduced here (MT Lasso) and the estimator introduced in \citet{avella2018robust} (RQL lasso).

We report in this section two simulation settings labelled as AVY and AMR respectively, a third setting is reported in the Supplementary Material. Setting AVY is similar to ``model 1'' considered in \citet{agostinelli2019initial}. The differences are that we increase the number of explanatory variables $p$ and that we introduce correlation among them. Setting AMR is the same as in \citet{avella2018robust}.

Let us first describe setting AVY. Let $\x = (1, \X)$ be a random vector in $\mathbb{R}^{p+1}$ such that $\X$ is distributed as $\mathcal{N}_{p}(\mathbf{0}, \mathbf{\Sigma})$, where $\boldsymbol\Sigma_{i,j} = \rho^{|i-j|}$, and $\rho = 0.5$. Let $\mathbf{e}_i$ be the vector of $\mathbb{R}^{p+1}$ with all entries equal to zero except for the $i$-th entry which is equal to one. Let $\boldsymbol{\beta}_* = \mathbf{e}_2$, $\mu_{\x} = \exp(\x^\top \boldsymbol{\beta}_*)$ and $y$ be a random variable such that $y | \x \sim \mathcal{P}(\mu_{\x})$. In this setting, we consider dimensions $p=10, 50$ and sample sizes $n=100, 400, 1000$. 

In the AMR setting, $\boldsymbol{\beta}_* = 1.8 \mathbf{e}_2 + \mathbf{e}_3 + 1.5 \mathbf{e}_6$, $\X=(x_1,\ldots, x_p)$ is such that $x_i \sim U(0,1)$ and $\operatorname{\mathbb{C}or}(x_i, x_j)=\rho^{|i-j|}$ with $\rho =0.5$ and $y \vert \x \sim \mathcal{P}(\mu_{\x})$. In this setting, we considered  $p=100, 400, 1600$ and sample sizes $n=50, 100, 200$.

In both settings, we generated $N=1000$ replications of $(y, \x)$ and computed ML Ridge, MT Ridge, ML lasso, MT Lasso and RQL lasso for each replication. 
We then contaminated the samples with a proportion $\epsilon$ of outliers. In setting AMR, we only contaminated the responses, which were generated following a distribution of the form $y \vert \x \sim (1-b) \mathcal{P}\left(\mu_{\x} \right)+b \mathcal{P}\left(y_0 \mu_{\x}\right)$, $b\sim \mathcal B(1, \epsilon)$ and $y_0 = 0, 5, 10, 100$. 
In setting AVY we also contaminated the covariates, replacing a proportion $\epsilon$ of the observations for $(y_0, \mathbf{x}_0)$, with $\mathbf{x}_0 = \mathbf{e}_1 + 3 \mathbf{e}_2$ and $y_0$ in a grid ranging from $0$ to $400$. In both settings, we considered contamination levels $\epsilon=0.05, 0.1, 0.15$. Note that in the AMR setting $E(y)=E\left(E(y |\x)\right) = E\left(exp(1.8 x_1 + x_2 + 1.5 x_5 )\right) \approx 3.71$, while in the AVY setting $E(y |\x_0) \approx \exp(3) \approx 20.08$, and that the contaminating observations we add are only harmful for the fit if the response is very different from its expectation. This is why, in Figures \ref{fig:mseRidge} and \ref{fig:mselasso} we see that, for the AMR setting, the MSE of all estimators is increasing, while for the AVY setting, it decreases when $y_0$ increases from $0$ to $20$ and then increases until it stabilizes.

As a performance measure we computed the mean squared error of each estimator as
\begin{equation*}
{\operatorname{MSE}(\hat{\boldsymbol{\beta}})}=\frac{1}{N} \sum_{j=1}^{N}||\hat{\boldsymbol{\beta}}_j - \boldsymbol{\beta}_*||^2 ,
\end{equation*}
where $\hat{\boldsymbol{\beta}}_j$ is the value of the estimator at the $j$-th replication and $N$ is the number of Monte Carlo replications.

Figure \ref{fig:mseRidge}  summarizes the results for MT Ridge and ML Ridge in AVY setting, while Figure \ref{fig:mselasso} summarizes the  results for MT Lasso, ML lasso and RQL lasso for AVY (left) and AMR (right) setting. In these figures we plot the MSE of the estimators as a function of the contamination $y_0$. 

 These results show that, in these scenarios, as the size of the outlying response increases, the mean squared errors of MT Ridge and MT Lasso remain bounded, while the mean squared errors of ML Ridge, ML lasso and RQL lasso seem to increase without bound. The results of the complete simulation study are given in the Supplementary Material.

\begin{figure}[tp]
\begin{center}
\includegraphics[width=0.45\textwidth]{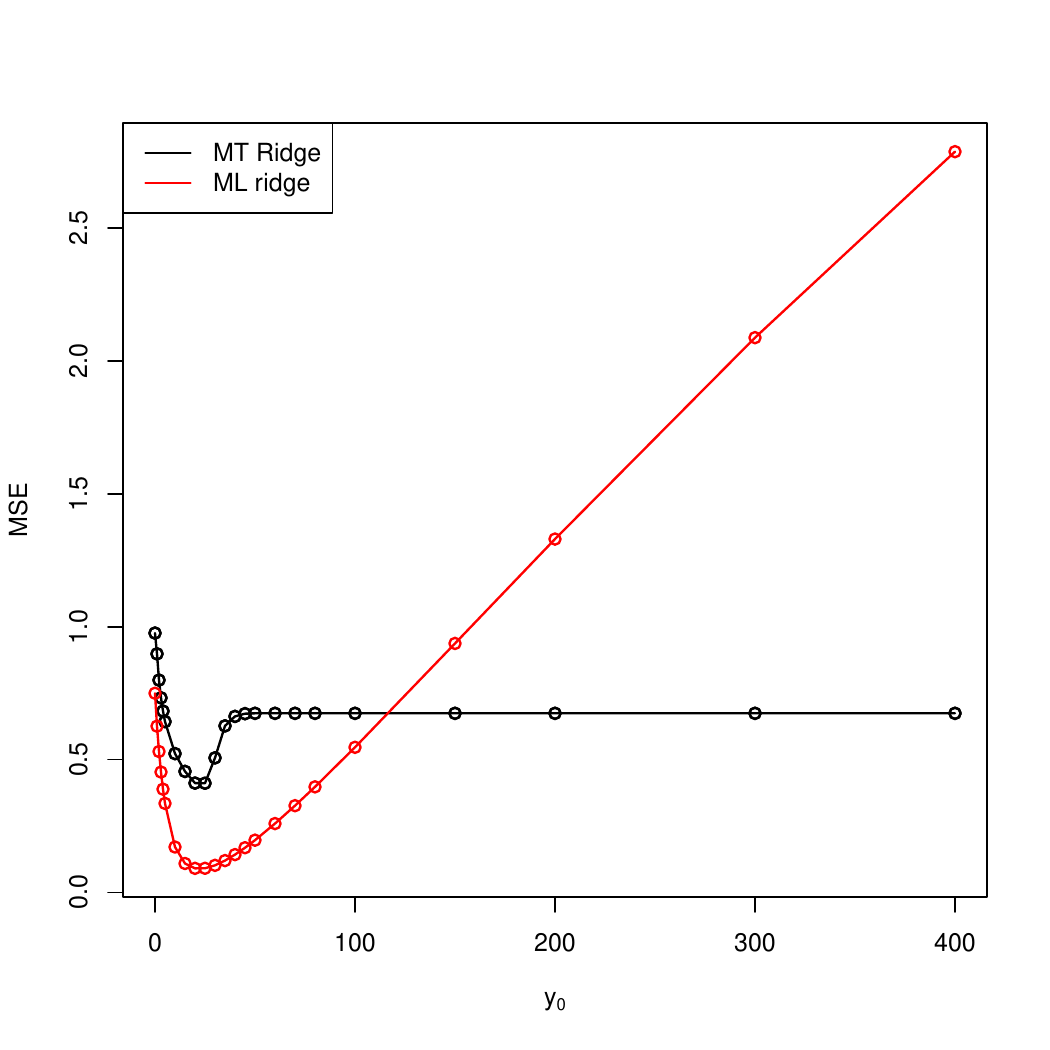}
\includegraphics[width=0.45\textwidth]{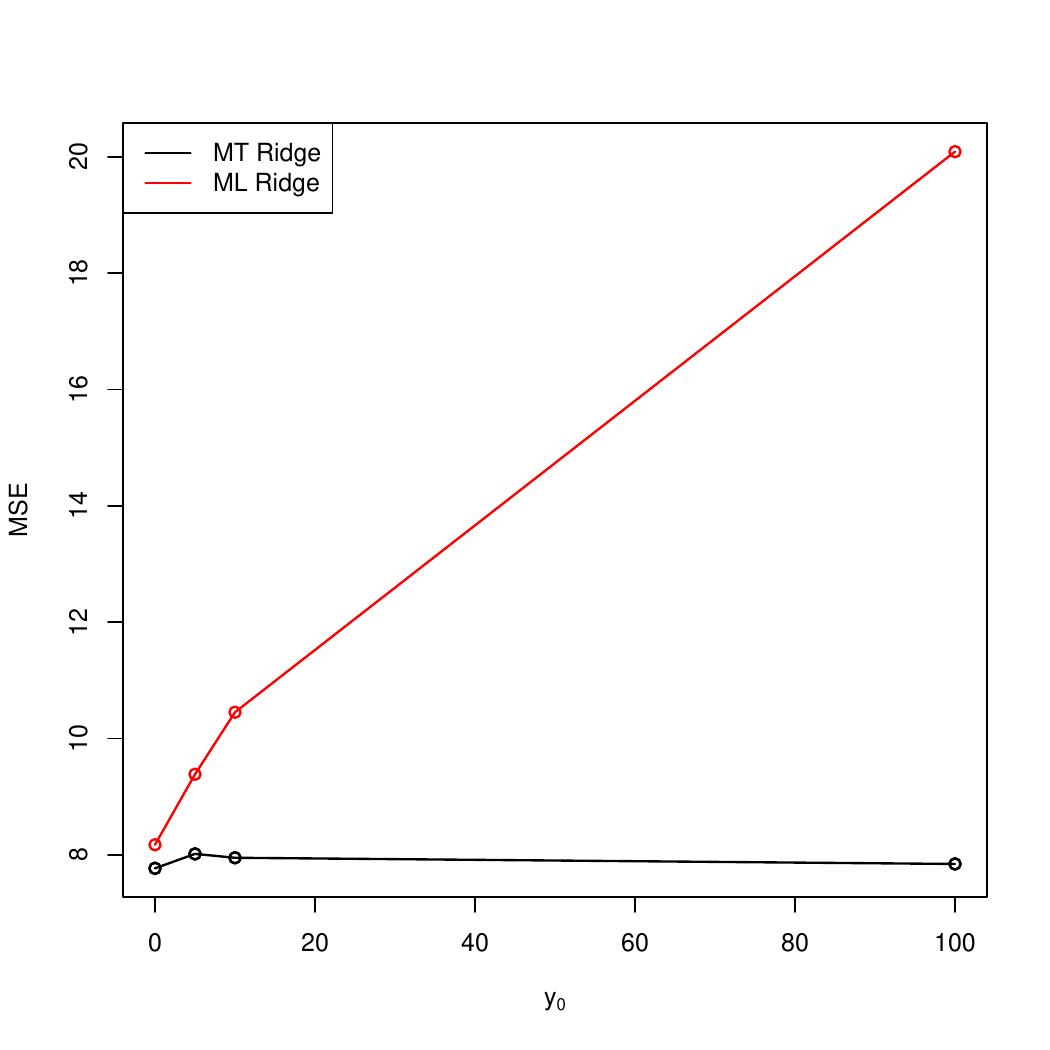}
\end{center}
\caption{Mean squared errors of MT Ridge and ML Ridge for AVY setting, with $p=50$, $n=100$ and $\epsilon=0.1$ (left) and for AMR setting with $p=1600$, $n=50$ and $\epsilon=0.1$ (right).}
\label{fig:mseRidge}\end{figure}

\begin{figure}[tp]
\begin{center}
\includegraphics[width=0.45\textwidth]{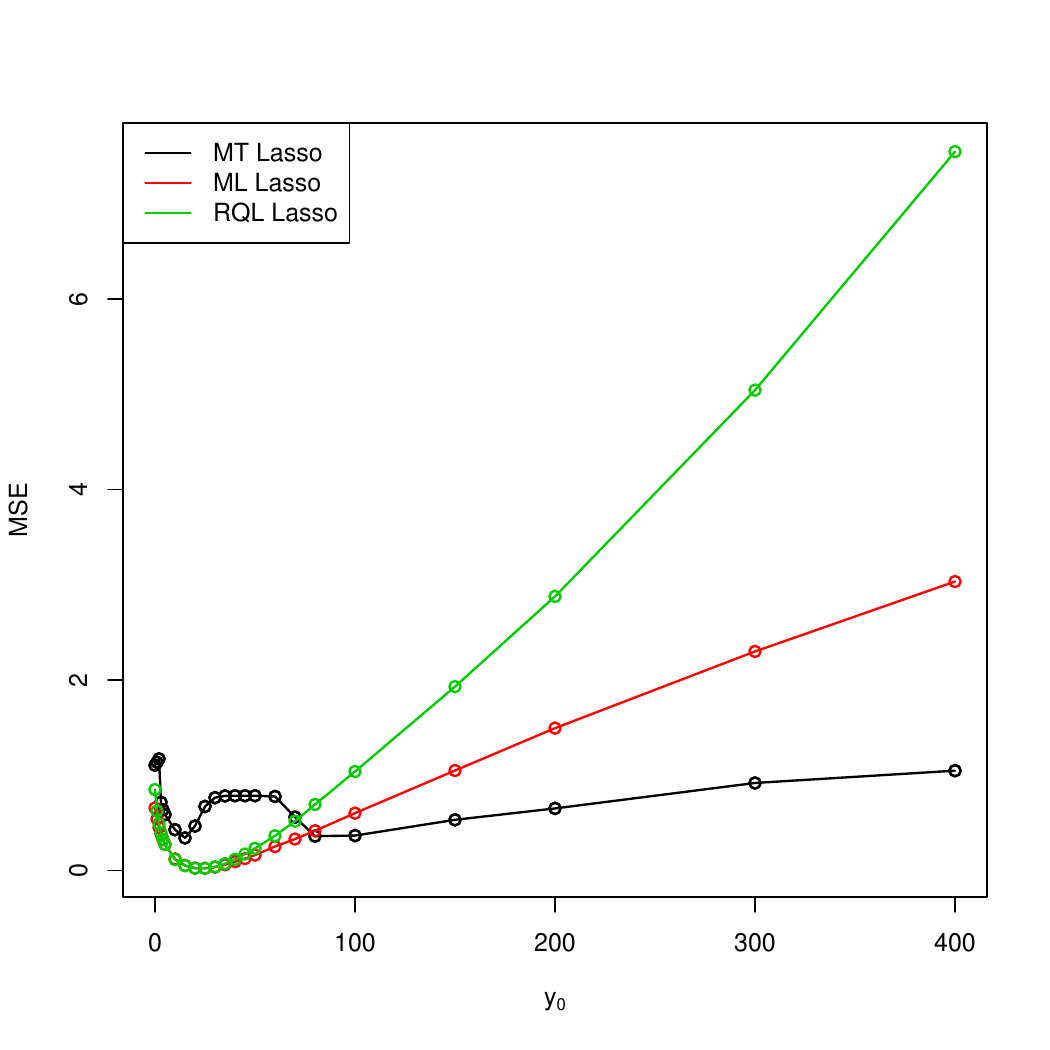}
\includegraphics[width=0.45\textwidth]{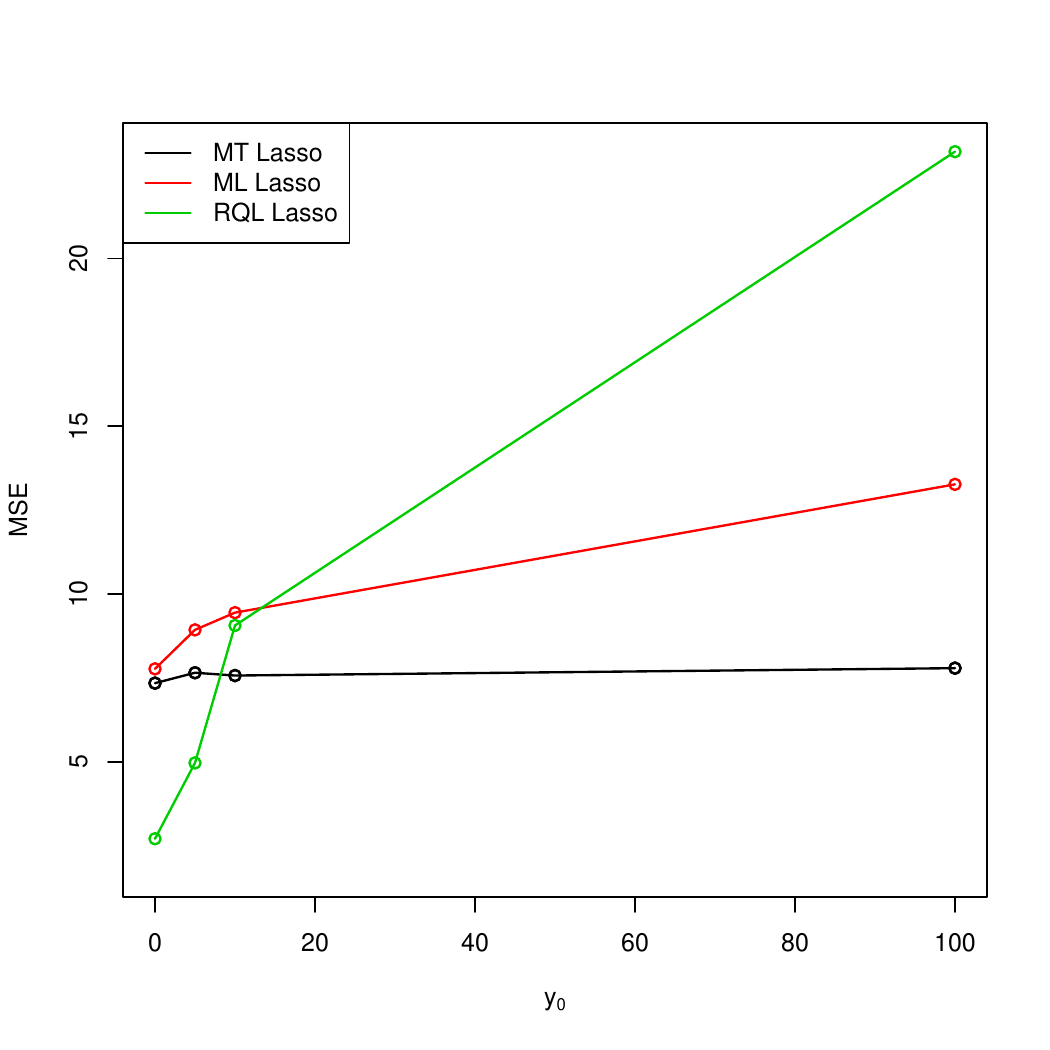}
\end{center}
\caption{Mean squared errors of MT Lasso, ML lasso and RQL lasso for AVY setting, with $p=50$, $n=100$ and $\epsilon=0.1$ (left) and for AMR setting with $p=1600$, $n=50$ and $\epsilon=0.1$ (right).}
\label{fig:mselasso}
\end{figure}

\section{Example: Right Heart Catheterization}
\label{sec:examples}
\label{sec:rhc}
This data set was used by \citet{Connors1996} to study the effect of right heart catheterization in critically ill patients. It contains data from 5735 patients from five medical centers in the USA between 1989 and 1994 on several variables. These variables include laboratory measurements taken on day one, dates of admission and discharge, category of the primary disease, and whether or not the right heart catheterization was performed, among other features.  
A detailed description of the covariates can be found in \citet{Connors1996}.
The data were downloaded from the repository at Vanderbilt University, specifically from
\begin{verbatim}
	http://biostat.mc.vanderbilt.edu/wiki/pub/Main/DataSets/rhc.csv
\end{verbatim}
We focus on patients with disease category COPD (chronic obstructive pulmonary disease).
We define the response variable as $y= \text{ length of hospital stay} - 2$, computed as discharge date minus admission date minus $2$. The matrix $\mathbf x$ of covariates contains information on $57$ variables for each of the 456 patients with COPD. We assume that $y|\mathbf{x}$ follows a Poisson distribution with mean $\mu = \exp(\boldsymbol{\beta}^\top \mathbf{x}) $ and we seek to estimate $\boldsymbol{\beta}$ and to study its usefulness to explain and predict the length of hospital stay.

To study the predictive power of the proposed method, we randomly divided the data set in a training set, composed of $75\%$ of the observations and a test set, composed of the remaining $25\%$. We then computed all the estimates based on the training set and their deviance residuals based on the test set. We repeated this for ten randomly chosen partitions. 

Figure \ref{fig:medianboxplots} shows boxplots of the median absolute deviance residuals for each method and for each train-test partition and different methods. These figures show that MT, MT Lasso and MT Ridge give a better prediction for at least $50\%$ of the test set than ML, ML lasso and MT Ridge, respectively. They also show the beneficial effect of penalization in the prediction error of the MT methods. 
It is worth mentioning that ML method only converged in 7 out of the 10 partitions. This is due to the large number of covariates and relatively small number of observations. The penalized ML methods achieve convergence for all 10 partitions.
\begin{figure}
\centering
\begin{tabular}{cc}
\includegraphics[scale=0.6]{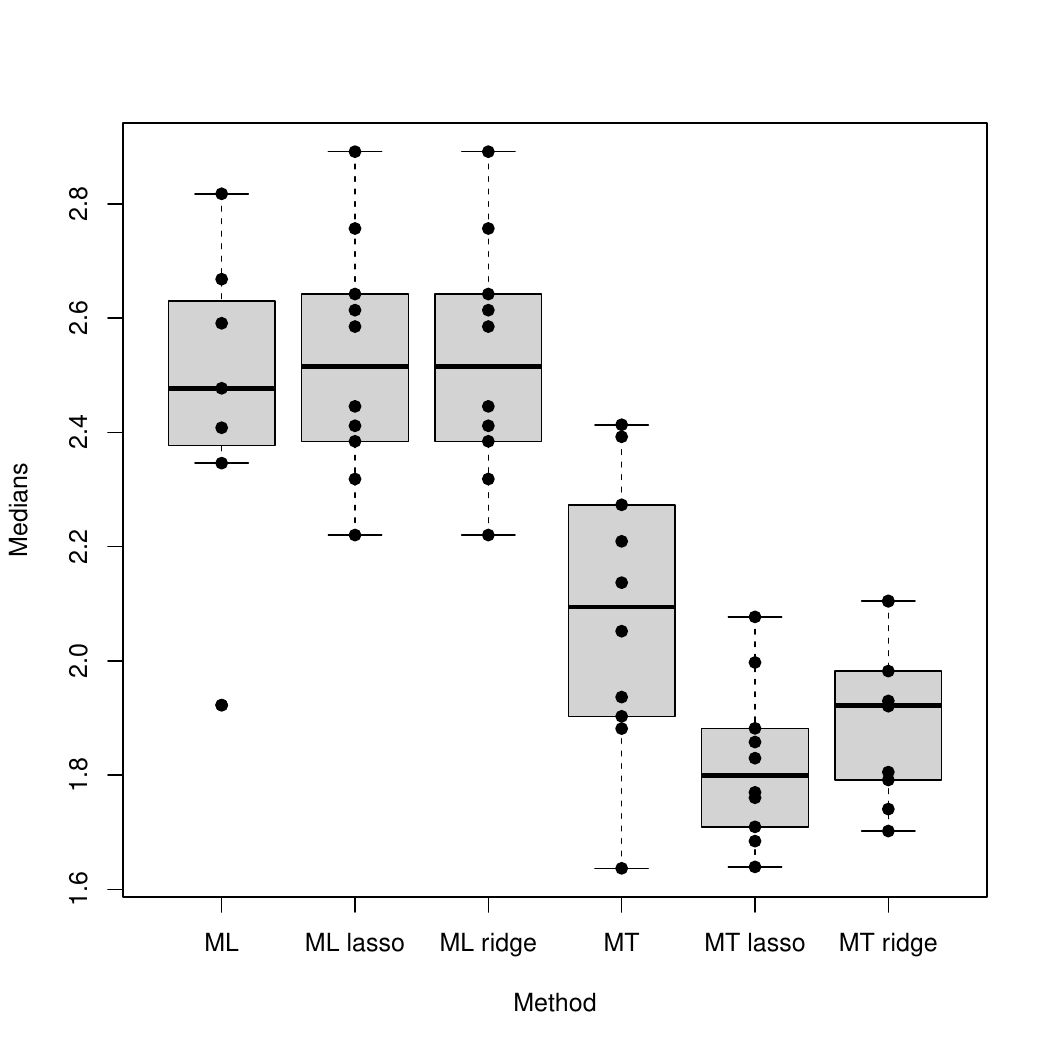}
\end{tabular}\caption{Median of the absolute deviance residuals for each train-test partition and different methods.}\label{fig:medianboxplots}
\end{figure}

In general, robust estimators can be used for outlier detection. With this aim, we compute the estimators using the complete sample. Then we generate a bootstrap sample of deviance residuals of size $B=100000$ in the following way. For  $k = 1, \ldots, B$, we randomly choose an index $i \in \{1, \ldots, n\}$ and we generate a response $y \sim \mathcal P(\mathbf{x}_i^\top \hat{\boldsymbol\beta})$, where $\mathbf{x}_i$ is the $i$-th row of the matrix of covariates and $\hat{\boldsymbol{\beta}}$ is a robust estimate (MT Lasso or MT Rigde) and  we compute the corresponding deviance residual. Let $q_1$ and $q_2$ be the minimum and the maximum  of the bootstrap sample of deviance residuals. We consider outliers observations with a deviance residual smaller than $q_1$ or larger than $q_2$.
We repeated this procedure for MT Lasso and MT Ridge with the same bootstrap samples and we obtained $q_1=-4.83$ and $q_2=4.16$ in both cases. Both methods detect $64$ outliers, aproximately $14\%$ of the observations. 
Finally, Table \ref{table:rhcMTlassocoef} shows the estimated coefficients by lasso methods. Entries for which all methods are $0$ are omitted.

ML lasso estimates as $0$ all the coefficients but the intercept. On the other hand, MT Lasso estimates as $0.0001$ the coefficient corresponding to the covariate {\tt scoma1}, that is a coma score, according to the description of the data given in \cite{sokbae2021ATbounds}.
\begin{table}[ht]
	\centering
	\begin{tabular}{rrr}
		\hline
		& ML lasso & MT Lasso \\ 
		\hline
		(Intercept) & 2.7846 & 2.1402 \\ 
%%%		surv2md1 & 0.0000 & 0.0000 \\ 
		scoma1 & 0.0000 & 0.0001 \\ 
		\hline
	\end{tabular}\caption{Estimated coefficients by lasso methods. Coefficients estimated as $0$ by both methods are omitted.}\label{table:rhcMTlassocoef}
\end{table}

\section{Conclusion}
\label{sec:conclusion}
This paper addresses the issue of robust estimation in the context of high-dimensional covariates for generalized linear models (GLMs). To this end, we introduce penalized MT-estimators. These estimators are constructed by incorporating a penalty term into the MT-estimators defined in \citet{ValdoraYohai2014} and further studied in \citet{agostinelli2019initial}. We focus on a general penalization term, and we discuss the behavior of several well known penalizations as elastic net, brige, sign, SCAD and MCP.

We give theoretic results regarding strong consistency, convergence rate, asymptotic normality and variable selection for the general class of GLMs as well as robustness properties. We also give a numerical algorithm which allows to efficiently compute the proposed estimators.

We study the performance of the proposed estimators in finite samples by a Monte Carlo study, for the case of Poisson response. This simulation study shows the good robustness properties of the proposed estimators. We also consider a real data set in which we seek to explain and predict the length of hospital stay of patients with  chronic obstructive pulmonary disease as primary disease category, using a large number of covariates. In this example, we see that both MT Lasso and MT Ridge give a better prediction for the majority of the data than their non-robust counterparts and that MT Lasso allows variable selection.
We also show how both MT Lasso and MT Ridge can be used for outlier detection.

To sum up, the proposed estimators constitute useful methods for the analysis of high-dimensional data, allowing to perform good predictions even in the presence of outliers, to perform robust variable selection and also to effectively detect outliers. 

\subsection*{Acknowledgments}
Claudio Agostinelli was partially funded by BaC INF-ACT S4 - BEHAVE-MOD PE00000007 PNRR M4C2 Inv. 1.3 - NextGenerationEU, CUP: I83C22001810007 and by the PRIN funding scheme of the Italian Ministry of University and Research (Grant No. P2022N5ZNP). The publication was produced with funding from the Italian Ministry of University and Research as part of the Call for Proposals for the scrolling of the final rankings of the PRIN 2022 - Project title ``MEMIMR: Measurement Errors and Missing Information in Meta-Regression'' - Project No. 2022FZY9PM - CUP C53C24000740006.

\clearpage

\appendix

\section{Proofs}\label{sec:proofs}
In this Appendix we provide auxiliary results and detailed proofs of the results presented in Section \ref{sec:asres}. Throughout this Appendix, {we assume $(y_1, \x_1),  \ldots (y_n, \x_n),$  and $(y,\x)$  follow a GLMs with parameter $\boldsymbol{\beta}_*$, link function $g$ and distribution function $F$}, that $\rho$ and $t$ are functions $\mathbb R \rightarrow \mathbb R$ and that $m$ is the function defined in \eqref{eq:m}. We use the notation introduced in Section \ref{sec:MT} to \ref{sec:asres}, furthermore to simplify the notation we assume without loss of generality that $a(\phi) \equiv 1$.

\subsection{Proofs of the consistency results in Section \ref{sec:asres}}
{The following Lemma states the Fisher consistency of MT-estimators and has already been proved in \citet{ValdoraYohai2014}.} We include it here for the sake of completeness.
\begin{lemma}[Fisher-Consistency]\label{lemma:FC}
Let $L(\boldsymbol{\beta})$ be the function defined in \eqref{eq:L} Under assumption A2, $L(\boldsymbol\beta)$ {has a unique minimum} at $\boldsymbol{\beta} = \boldsymbol{\beta}_*$. Therefore MT-estimators are Fisher consistent.
\end{lemma}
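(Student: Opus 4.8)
The plan is to reduce the claim to the defining property of the function $m$ by conditioning on $\x$. Write $\mu_* = g^{-1}(\x^\top\boldsymbol{\beta}_*)$, so that under the GLM with parameter $\boldsymbol{\beta}_*$ we have $y \mid \x \sim F(\cdot,\mu_*,\phi)$. By the tower property,
\begin{equation*}
L(\boldsymbol{\beta}) = \E_{\boldsymbol{\beta}_*}\left( \E_{\mu_*}\left( \rho\left( \frac{t(y) - m(g^{-1}(\x^\top\boldsymbol{\beta}))}{\sqrt{a(\phi)}} \right) \,\Big|\, \x \right) \right),
\end{equation*}
and for each fixed value of $\x$ the inner conditional expectation is exactly the map $\gamma \mapsto \E_{\mu_*}\big(\rho((t(y)-\gamma)/\sqrt{a(\phi)})\big)$ evaluated at $\gamma = m(g^{-1}(\x^\top\boldsymbol{\beta}))$. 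By \eqref{eq:m} together with the uniqueness part of A2, this map attains its minimum at the single point $\gamma = m(\mu_*) = m(g^{-1}(\x^\top\boldsymbol{\beta}_*))$.

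First I would record the resulting pointwise inequality: for every $\x$,
\begin{equation*}
\E_{\mu_*}\left( \rho\left( \frac{t(y) - m(g^{-1}(\x^\top\boldsymbol{\beta}))}{\sqrt{a(\phi)}} \right) \,\Big|\, \x \right) \ge \E_{\mu_*}\left( \rho\left( \frac{t(y) - m(g^{-1}(\x^\top\boldsymbol{\beta}_*))}{\sqrt{a(\phi)}} \right) \,\Big|\, \x \right),
\end{equation*}
with equality if and only if $m(g^{-1}(\x^\top\boldsymbol{\beta})) = m(g^{-1}(\x^\top\boldsymbol{\beta}_*))$. Integrating over the distribution of $\x$ gives $L(\boldsymbol{\beta}) \ge L(\boldsymbol{\beta}_*)$, so $\boldsymbol{\beta}_*$ is a global minimizer; moreover, since the integrand of the difference is nonnegative, $L(\boldsymbol{\beta}) = L(\boldsymbol{\beta}_*)$ forces $m(g^{-1}(\x^\top\boldsymbol{\beta})) = m(g^{-1}(\x^\top\boldsymbol{\beta}_*))$ almost surely.

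Next I would upgrade this to strict uniqueness. Because $m$ is strictly increasing by A2 and $g^{-1}$ is strictly increasing by the assumptions on the link, the composition $\mginv = m\circ g^{-1}$ is injective, so the equality event above is equivalent to $\x^\top(\boldsymbol{\beta}-\boldsymbol{\beta}_*)=0$ almost surely. Applying this with the unit vector $\mathbf{t} = (\boldsymbol{\beta}-\boldsymbol{\beta}_*)/\|\boldsymbol{\beta}-\boldsymbol{\beta}_*\|$ and the non-degeneracy of the covariate vector (assumption B7, which guarantees $\mathbb{P}(\mathbf{t}^\top\x \ne 0)>0$) rules out $\boldsymbol{\beta}\ne\boldsymbol{\beta}_*$, so the minimizer of $L$ is unique; Fisher consistency follows since the MT-functional is defined as the minimizer of $L$. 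The only delicate point is precisely this last step: A2 and the pointwise minimization argument alone yield $L(\boldsymbol{\beta}) \ge L(\boldsymbol{\beta}_*)$ and identify the set on which equality holds as $\{\x^\top\boldsymbol{\beta} = \x^\top\boldsymbol{\beta}_* \text{ a.s.}\}$, and promoting this to $\{\boldsymbol{\beta} = \boldsymbol{\beta}_*\}$ requires identifiability of the linear predictor, i.e.\ a non-degeneracy condition on the law of $\x$ such as B7.
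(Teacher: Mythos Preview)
Your argument follows exactly the same route as the paper: condition on $\x$, use the tower property, and invoke the defining property \eqref{eq:m} of $m$ to see that the inner conditional expectation is minimized at $\gamma = m(g^{-1}(\x^\top\boldsymbol{\beta}_*))$, hence so is the outer expectation. The paper's proof stops there.

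Where you go further is in the treatment of \emph{uniqueness}. The paper's proof simply asserts that the minimum is attained at $\boldsymbol{\beta}_*$ and leaves it at that; you correctly observe that the pointwise minimization only gives $L(\boldsymbol{\beta})\ge L(\boldsymbol{\beta}_*)$ with equality iff $\mginv(\x^\top\boldsymbol{\beta})=\mginv(\x^\top\boldsymbol{\beta}_*)$ a.s., and that turning this into $\boldsymbol{\beta}=\boldsymbol{\beta}_*$ requires both the injectivity of $\mginv$ (which follows from A2 and the strict monotonicity of $g$) and an identifiability condition on the law of $\x$ such as B7. This is a genuine refinement: strictly speaking, A2 alone does not rule out two distinct $\boldsymbol{\beta}$ with $\x^\top\boldsymbol{\beta}$ a.s.\ equal, so your explicit appeal to B7 fills a small gap that the paper's terse proof leaves implicit.
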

\begin{proof}
Let $y| \x \sim F(\mu_\x, \phi)$ and $\mu_\x = \exp(\x^\top \boldsymbol{\beta}_*)$, then 
\begin{equation*}
\E(\rho(t(y)-\mginv(\x^\top \boldsymbol{\beta}))) = \E\left[ \E(\rho(t(y)-\mginv(\x^\top \boldsymbol{\beta})) \mid \x)\right]
\end{equation*}  
The conditional expectation on the right is minimized in $\boldsymbol{\beta} = \boldsymbol{\beta}_*$ by definition of $m$, for all $\X$. Therefore, so is its expectation.
\end{proof}

The following lemma concerns the Vapnik-Chervonenkis (VC) dimension of a class of functions; see a definition in \citet{kosorok2008introduction}. {The proof is similar to Lemma S.2.1 in \citet{boente2020robust}; see also  Lemma 4.2.2 in  \citet{smucler2016estimadores}.
}
\begin{lemma}[VC-index] \label{lemma:vcsubgraph}
Assume A2, B1 and B3 hold. Then the class of functions
\begin{equation*}
	\mathcal{F} = \{f:\mathbb R\times \mathbb R^{p} \rightarrow \mathbb R, \,\, f(\vv, \w) = \rho(t(\vv) - \mginv(\beta_0 + \w \Beta)): \beta_0 \in \mathbb{R}, \ \Beta \in \mathbb{R}^p \} \ .
\end{equation*}  
is VC-subgraph with VC-index at most $2(p+3)-1$.
\end{lemma}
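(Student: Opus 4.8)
The plan is to exploit the standard stability properties of VC-subgraph classes: a class of functions obtained by composing a fixed monotone map with a finite-dimensional linear class is again VC-subgraph, with a controllable index. First I would recall the elementary fact (see \citet{kosorok2008introduction}, or van der Vaart--Wellner) that the collection of affine functions $\{(\vv,\w)\mapsto \beta_0 + \w\Beta : \beta_0\in\mathbb R,\ \Beta\in\mathbb R^p\}$ on $\mathbb R\times\mathbb R^p$ spans a vector space of dimension $p+1$, hence is VC-subgraph with VC-index at most $p+3$; adding the coordinate function $(\vv,\w)\mapsto t(\vv)$ to this space enlarges the dimension by at most one, so the class $\mathcal G=\{(\vv,\w)\mapsto (t(\vv)-(\beta_0+\w\Beta)) : \beta_0\in\mathbb R,\ \Beta\in\mathbb R^p\}$ is VC-subgraph with VC-index at most $p+3$.

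Next I would compose with a fixed monotone transformation. Under A2, $m$ is strictly increasing, and $g^{-1}$ is strictly increasing by the assumptions on $g$ in Section \ref{sec:MT}, so $\mginv = m\circ g^{-1}$ is strictly increasing; therefore $-\mginv$ is strictly decreasing. By B1 and B3, $\rho$ is even and nondecreasing on $[0,\infty)$, i.e. $\rho(u)=\tilde\rho(|u|)$ for a nondecreasing $\tilde\rho$. The key point is that a monotone transformation of a single real-valued argument does not change subgraph-shattering: if $\phi:\mathbb R\to\mathbb R$ is monotone, then $\{\phi\circ f : f\in\mathcal G\}$ is VC-subgraph with the same VC-index bound. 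Applying this with $\phi$ the (strictly decreasing) map $s\mapsto -s$ shows $\{(\vv,\w)\mapsto -(t(\vv)-(\beta_0+\w\Beta))\}=\{(\vv,\w)\mapsto \mginv\text{-argument negated}\}$ still has index $\le p+3$; but I actually want to compose with $u\mapsto\rho(u)$, which is \emph{not} globally monotone. The way around this is to write $f(\vv,\w)=\tilde\rho(|h(\vv,\w)|)$ where $h(\vv,\w)=t(\vv)-\mginv(\beta_0+\w\Beta)$, and note that $h$ ranges over a class of the form $\{H\circ \ell : \ell \in \mathcal G_0\}$ with $H$ monotone and $\mathcal G_0$ the affine-plus-$t$ class of dimension $\le p+2$; hence $h$ lies in a VC-subgraph class of index $\le p+4$. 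Then $|h|$ lies in a VC-subgraph class of index at most $2(p+4)-1$ by the standard rule that $\{|f| : f\in\mathcal F\}$ has VC-index at most $2\,\mathrm{VC}(\mathcal F)-1$ (subgraphs of $|f|$ are intersections/unions of two subgraphs of members of $\mathcal F\cup(-\mathcal F)$), and finally composing with the nondecreasing $\tilde\rho$ preserves the index. Carrying the dimension count carefully, the affine class in $p+1$ parameters has index $p+3$, $t(\vv)-\mginv(\cdot)$ keeps it at $p+3$ (composition with monotone $-\mginv$), taking absolute value doubles-minus-one to $2(p+3)-1$, and the final monotone composition with $\tilde\rho$ leaves it unchanged, giving the claimed bound $2(p+3)-1$.

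The main obstacle is the bookkeeping around the absolute value: $\rho$ fails to be monotone on all of $\mathbb R$, so one cannot simply say ``composition with a monotone function preserves VC-index'' in one stroke. I would handle this by the decomposition $\rho(u)=\tilde\rho(|u|)$ above and by invoking the precise lemma that for a VC-subgraph class $\mathcal F$ of index $V$, the class $\{\max(f,g): f,g\in\mathcal F\}$ (and likewise $\min$, and hence $|f|=\max(f,-f)$ applied to $\mathcal F\cup(-\mathcal F)$) has index at most $2V-1$; references for this are \citet{kosorok2008introduction} and the cited \citet{smucler2016estimadores} and \citet{boente2020robust}. A secondary, purely cosmetic point is that the transformation $t$ must be handled as an added coordinate rather than a reparametrization, since $t$ is not assumed linear; but by A5 it is at least a fixed measurable function, which is all that is needed for it to enter as one extra spanning function. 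Assembling these steps gives the VC-subgraph property with VC-index at most $2(p+3)-1$.
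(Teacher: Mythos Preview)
Your overall strategy---reduce to an affine class, pass through monotone compositions, and handle the non-monotonicity of $\rho$ by splitting into a maximum of monotone pieces---is exactly the paper's approach, and your final dimension count is correct. The one place where your argument is genuinely tangled is the passage from the affine class to the class $\{h(\vv,\w)=t(\vv)-\mginv(\beta_0+\w\Beta)\}$. You first build $\mathcal G=\{t(\vv)-(\beta_0+\w\Beta)\}$ and then look for a monotone $H$ with $h=H\circ\ell$ for $\ell$ in an ``affine-plus-$t$'' class, but no such $H$ exists: $\mginv$ acts only on the $\w$-part while $t(\vv)$ sits outside it, so $h$ is not a monotone transform of any member of $\mathcal G$. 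The correct order, which the paper uses, is: start from the affine class $\{\beta_0+\w\Beta\}$ (index $\le p+3$), compose with the monotone $\mginv$ (Lemma~9.9(viii) in \citet{kosorok2008introduction}, index unchanged), and only then negate and \emph{add the fixed function} $t(\vv)$ (Lemma~9.9(v), index unchanged). This yields the class of $h$'s with index $\le p+3$ without ever trying to place $t$ inside a composition.

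Your treatment of $\rho$ via $\rho(u)=\tilde\rho(|u|)$ together with $|h|=\max(h,-h)$ is a cosmetic variant of the paper's decomposition $\rho=\max(\rho^{+},\rho^{-})$ with $\rho^{\pm}$ monotone (one nondecreasing, one nonincreasing, by B1 and B3); both routes invoke the same ``max of two VC-subgraph classes'' rule (Lemma~9.9(ii)) and both land on the bound $2(p+3)-1$.
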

\begin{proof}
Consider the set of functions
\begin{equation*}
\mathcal{F}_1 = \{ f(\vv, \w) = \beta_0 + \w^\top \Beta: \beta_0 \in \mathbb{R}, \ \Beta \in \mathbb{R}^p \} \ .
\end{equation*}  
$\mathcal{F}_1$ is a subset of the vector space of all affine functions in $p+1$ variables, hence e.g. by Lemma 9.6 of \citet{kosorok2008introduction}, $\mathcal{F}_1$ is  VC-subgraph with VC-index at most $p+3$.  Using Lemma 9.9 (viii) of \citet{kosorok2008introduction}
we get that
\begin{equation*}
\mathcal{F}_2 = \{ f(\vv, \w) = \mginv(\beta_0 + \w^\top \Beta): \beta_0 \in \mathbb{R}, \ \Beta \in \mathbb{R}^p \}
\end{equation*}  
is also VC-subgraph with VC-index at most $p+3$ and, using item (v) of the same lemma we get  that the class of functions
\begin{equation*}
\mathcal{F}_3 = \{ f(\vv, \w) = t(\vv) - \mginv(\beta_0 + \w\top \Beta): \beta_0 \in \mathbb{R}, \ \Beta \in \mathbb{R}^p \}
\end{equation*}
is VC-subgraph with at most VC-index $p+3$. Consider the functions $\rho^+(s) = \rho(s) \mathbb{I}_{[0, \infty)}(s)$ and $\rho^-(s) = \rho(s) \mathbb{I}_{(-\infty,0)}(s)$ which are monotone by B1 and B3, and $\rho(s) = \max\{\rho^+(s), \rho^-(s)\}$. Hence, by appling Lemma 9.9 (viii) and (ii) of \citet{kosorok2008introduction} we have that both
\begin{align*}
\mathcal{F}^+ & = \{ f(\vv, \w) = \rho^+(t(\vv) - \mginv(\beta_0 + \w\top \Beta)): \beta_0 \in \mathbb{R}, \ \Beta \in \mathbb{R}^p \} \\
\mathcal{F}^- & = \{ f(\vv, \w) = \rho^-(t(\vv) - \mginv(\beta_0 + \w^\top \Beta))): \beta_0 \in \mathbb{R}, \ \Beta \in \mathbb{R}^p \}.
\end{align*}
are VC-subgraph and $\mathcal{F} = \mathcal{F}^+ \vee \mathcal{F}^-$ is VC-subgraph with VC-index at most $2(p+3) - 1$.
\end{proof}

The following four lemmas have been proved in \citet{ValdoraYohai2014}, as part of the proofs of Theorem 1 and Lemma 1. We give  statements and proofs here, for the sake of completeness. Let $B(\mathbf{t},\epsilon) = \left\{\mathbf s \in \mathbb R^{p+1}/||\mathbf s - \mathbf t||\leq \epsilon\right\}$.

\begin{lemma}\label{lemma:Ct}
	\begin{enumerate} 
		\item Let $\w$, $\mathbf t \in \mathbb R^{p+1}$ be such that $\w ^\top \mathbf t >0$, then  there exist $\epsilon>0$ and
		$C_{\mathbf t}$ a compact set in $\mathbb R^{p+1}$ such that for all $\mathbf{s} \in B(\mathbf{t}, \epsilon)$ and $\w \in C_{\mathbf{t}}$, 
		$\operatorname{sign}\left(\w^\top \mathbf{t}\right)=\operatorname{sign}\left(\w^\top \mathbf{s}\right)$ and  $\left|\w^\top \mathbf{s}\right|>\epsilon.$
		\item If $\x$ is a random vector such that $\mathbb P(\mathbf x^\top \mathbf t = 0 )<\delta$, then the set $C_\mathbf{t}$ can be chosen in such a way that $P(\mathbf x \in C_\mathbf t) >1-\delta$. 
	\end{enumerate}
\end{lemma}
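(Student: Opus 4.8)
The plan is to treat the two parts separately, since part~1 is a purely topological/continuity statement about a single fixed pair $(\w,\mathbf t)$ being extended to neighborhoods, while part~2 adds a measure-theoretic refinement. For part~1, fix $\w, \mathbf t \in \mathbb R^{p+1}$ with $\w^\top \mathbf t > 0$, and set $c := \w^\top\mathbf t > 0$. The map $(\mathbf s, \mathbf u) \mapsto \mathbf u^\top \mathbf s$ is continuous (indeed bilinear), so there is a radius $r>0$ such that $\mathbf u^\top \mathbf s > c/2$ whenever $\|\mathbf s - \mathbf t\| \le r$ and $\|\mathbf u - \w\| \le r$. Choosing $\epsilon := \min\{r, c/2\}$ and $C_{\mathbf t} := \overline{B(\w, r)}$ (a closed ball, hence compact) gives immediately that for all $\mathbf s \in B(\mathbf t,\epsilon)$ and all $\mathbf u \in C_{\mathbf t}$ we have $\mathbf u^\top \mathbf s > c/2 > 0$, so $\sign(\mathbf u^\top \mathbf s) = 1 = \sign(\w^\top \mathbf t)$ and $|\mathbf u^\top \mathbf s| > c/2 \ge \epsilon$. (The statement as written has $\w$ play two roles — the fixed vector generating the configuration and the generic element of $C_{\mathbf t}$ — but the natural reading is that $C_{\mathbf t}$ is a small ball around the original $\w$.)

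For part~2 the subtlety is that we are no longer allowed to pick $C_{\mathbf t}$ to be a tiny ball around one point: we want a compact set carrying $\mathbb P$-mass at least $1-\delta$, while still forcing $|\mathbf x^\top \mathbf t|$ to stay away from $0$ on it. The key observation is that the hypothesis $\mathbb P(\mathbf x^\top\mathbf t = 0) < \delta$ means the random variable $Z := \mathbf x^\top\mathbf t$ puts less than $\delta$ mass at $0$; hence $\mathbb P(|Z| \le \eta) \to \mathbb P(Z = 0) < \delta$ as $\eta \downarrow 0$ by continuity of measure, so there is $\eta_0 > 0$ with $\mathbb P(|Z| \le \eta_0) < \delta$, equivalently $\mathbb P(|\mathbf x^\top\mathbf t| > \eta_0) > 1-\delta$. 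Next, by tightness of the law of $\mathbf x$ on $\mathbb R^{p+1}$ (every Borel probability measure on a Polish space is tight), we can choose $M$ large enough that $\mathbb P(\|\mathbf x\| \le M) > 1 - \delta'$ for any prescribed small $\delta'$. Combining, set
\begin{equation*}
C_{\mathbf t} := \{\mathbf u \in \mathbb R^{p+1} : \|\mathbf u\| \le M, \ |\mathbf u^\top \mathbf t| \ge \eta_0\},
\end{equation*}
which is closed and bounded, hence compact, and on which $|\mathbf u^\top\mathbf t|$ is bounded below; a union bound on the two ``bad'' events gives $\mathbb P(\mathbf x \in C_{\mathbf t}) > 1 - \delta$ once $\delta'$ is chosen so that the two deficiencies sum to less than $\delta$ (e.g. start from $\mathbb P(|Z|>\eta_0) > 1-\delta + \delta'$, which is available because the limit is strictly below $\delta$, then subtract $\delta'$ for the norm truncation). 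Finally one re-runs the continuity argument of part~1 uniformly over this larger $C_{\mathbf t}$: since $\inf_{\mathbf u \in C_{\mathbf t}} |\mathbf u^\top\mathbf t| \ge \eta_0 > 0$ and $\sup_{\mathbf u\in C_{\mathbf t}}\|\mathbf u\| \le M$, the bilinear map is uniformly continuous on the compact set $\overline{B(\mathbf t,1)} \times C_{\mathbf t}$, so shrinking $\epsilon$ if necessary preserves $\sign(\mathbf u^\top\mathbf s) = \sign(\mathbf u^\top\mathbf t)$ and $|\mathbf u^\top\mathbf s| > \epsilon$ for all $\mathbf s \in B(\mathbf t,\epsilon)$, $\mathbf u\in C_{\mathbf t}$.

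The main obstacle is really just bookkeeping the two-sided estimate in part~2: one must be careful that the same compact set $C_{\mathbf t}$ simultaneously (i) has large probability, (ii) keeps $\mathbf x^\top\mathbf t$ bounded away from zero, and (iii) is bounded so that uniform continuity of the inner product applies — and that the final $\epsilon$ works uniformly over all of $C_{\mathbf t}$ rather than pointwise. None of the individual steps is hard (continuity of measure, tightness, uniform continuity on compacts), but the quantifier order — choosing $\eta_0$ and $M$ \emph{before} $\epsilon$, and choosing $\epsilon$ to work for \emph{all} $\mathbf u\in C_{\mathbf t}$ at once — is where a careless argument would slip.
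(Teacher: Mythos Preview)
Your proof is correct and follows essentially the same approach as the paper. The one notable difference is that the paper, already in part~1, takes $C_{\mathbf t}$ to be the slab $\{\mathbf u:\ |\mathbf u^\top\mathbf t|\ge\varsigma,\ \|\mathbf u\|\le K\}$ (exactly the set you build in part~2) and gives the explicit choice $\epsilon=\varsigma/(2K)$ via the Cauchy--Schwarz bound $|\mathbf u^\top\mathbf s-\mathbf u^\top\mathbf t|\le\|\mathbf u\|\,\|\mathbf s-\mathbf t\|\le K\epsilon=\varsigma/2$, rather than invoking abstract uniform continuity on a compact product set; this makes part~1 and part~2 use the same construction, so no ``re-running'' of the continuity argument is needed.
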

\begin{proof}
	\begin{enumerate}
		\item Let $\varsigma$ be such that $\w ^\top \mathbf t > \varsigma$ and $K$ be such that $||\w|| \leq K$. Take $\epsilon=\varsigma/(2K)$ and $C_\mathbf{t} = \left\{\mathbf u \in \mathbb R^{p+1}:  |\mathbf u ^\top \mathbf t| \geq \varsigma, ||\mathbf u||\leq K \right\}$.
		\item Note that, since $\mathbb P(\mathbf x^\top \mathbf t = 0 )<\delta$, there exist $\xi$ and $\varsigma$ such that  $\mathbb P(\mathbf x^\top \mathbf t \geq \varsigma )>1-\delta-\xi$. Also note that there exists $K\geq 1$ such that $\mathbb P(||\mathbf x|| \leq K)>1-\xi$. Then $P\left(\left|\mathbf{x}^\top \mathbf{t}\right| \geq \varsigma, \|\mathbf{x}\| \leq K\right) \geq P\left(\left|\mathbf{x}^\top \mathbf{t}\right| \geq \varsigma\right)-P(\|\mathbf{x}\|>K)>1-\delta+\xi-\xi=1-\delta$.
	\end{enumerate}
\end{proof}
\begin{lemma}
	\label{lemma:phistar} Assume A1-A5 and B1-B6.
Then there exists a function $\Phi^{\ast}:\mathbb{R} \times \{-1, 0, 1\} \longrightarrow \mathbb{R}$ such that, for all $\mathbf{t}, \w \in \mathbb{R}^{p+1}$, $v\in\mathbb{R}$,
\begin{equation}\label{eq:phiestrella0}  
	\lim_{\zeta\rightarrow\infty}\rho(t(\vv) -\mginv(\zeta\w^\top   \mathbf{t}) ) = \Phi^{\ast}(\vv,\sign( \w^\top \mathbf{t})) 
\end{equation}
and if $\w^\top \mathbf{t} \neq0$ there exists a neighborhood of {$(\mathbf{t}, \w)$ where this convergence is uniform. }
\end{lemma}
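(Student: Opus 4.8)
The plan is to read $\Phi^*$ off from the behaviour at $\pm\infty$ of the composition $\mginv=m\circ g^{-1}$. First I record the relevant monotonicity: the hypotheses on $g$ make $g^{-1}$ a continuous strictly increasing bijection of $\mathbb{R}$ onto $(\mu_1,\mu_2)$ with $g^{-1}(\eta)\to\mu_1$ as $\eta\to-\infty$ and $g^{-1}(\eta)\to\mu_2$ as $\eta\to+\infty$, while A2 makes $m$ strictly increasing; hence $\mginv$ is increasing and the one--sided limits
\[
m_1:=\lim_{\eta\to-\infty}\mginv(\eta)=\lim_{\mu\downarrow\mu_1}m(\mu)\in[-\infty,\infty),\qquad m_2:=\lim_{\eta\to+\infty}\mginv(\eta)=\lim_{\mu\uparrow\mu_2}m(\mu)\in(-\infty,\infty]
\]
exist. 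I then define $\Phi^*(v,1)=\rho(t(v)-m_2)$, $\Phi^*(v,-1)=\rho(t(v)-m_1)$ and $\Phi^*(v,0)=\rho(t(v)-\mginv(0))$, with the convention $\rho(\pm\infty)=a=1$ that is forced by B1--B3.

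Next I prove the pointwise identity \eqref{eq:phiestrella0} by splitting on $\sign(\w^\top\mathbf{t})$. If $\w^\top\mathbf{t}=0$ then $\zeta\w^\top\mathbf{t}\equiv 0$, so both sides equal $\rho(t(v)-\mginv(0))$. If $\w^\top\mathbf{t}>0$ then $\zeta\w^\top\mathbf{t}\to+\infty$, hence $\mginv(\zeta\w^\top\mathbf{t})\to m_2$; when $m_2<\infty$ the claim follows from continuity of $\rho$ (B5), and when $m_2=+\infty$ it follows because $t(v)-\mginv(\zeta\w^\top\mathbf{t})\to-\infty$, so B1--B3 give $\rho(\cdot)\to 1$. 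The case $\w^\top\mathbf{t}<0$ is symmetric with $m_1$ in place of $m_2$.

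For the uniform statement, fix $(\mathbf{t},\w)$ with $\w^\top\mathbf{t}\neq 0$, say $\w^\top\mathbf{t}>0$. By continuity of the bilinear map $(\mathbf{s},\mathbf{u})\mapsto\mathbf{u}^\top\mathbf{s}$ (the mechanism already used in Lemma~\ref{lemma:Ct}) there are a neighborhood $\mathcal{N}$ of $(\mathbf{t},\w)$ and a $\delta>0$ such that $\mathbf{u}^\top\mathbf{s}\ge\delta$ for all $(\mathbf{s},\mathbf{u})\in\mathcal{N}$. Since $\mginv$ is increasing, $\mginv(\zeta\delta)\le\mginv(\zeta\,\mathbf{u}^\top\mathbf{s})\le m_2$, so for fixed $v$ the argument $t(v)-\mginv(\zeta\,\mathbf{u}^\top\mathbf{s})$ lies in the band between $t(v)-\mginv(\zeta\delta)$ and $t(v)-m_2$ uniformly over $\mathcal{N}$; as $\zeta\to\infty$ this band shrinks to $\{t(v)-m_2\}$ when $m_2<\infty$ and is pushed off to $-\infty$ when $m_2=+\infty$. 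Composing with $\rho$ --- using continuity of $\rho$ at $t(v)-m_2$ in the finite case, and $\rho\le 1$ together with $\rho(u)\to 1$ as $|u|\to\infty$ and monotonicity B1--B3 in the infinite case --- gives $\rho(t(v)-\mginv(\zeta\,\mathbf{u}^\top\mathbf{s}))\to\Phi^*(v,1)$ uniformly over $(\mathbf{s},\mathbf{u})\in\mathcal{N}$. The case $\w^\top\mathbf{t}<0$ is again symmetric.

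The bulk of the argument is bookkeeping; the only point requiring genuine care is the degenerate case $m_1=-\infty$ or $m_2=+\infty$ --- which does occur, e.g.\ $m_2=+\infty$ for the Poisson model --- where $\rho$ is never evaluated at a finite point, so that convergence to the value $1$ must be extracted from the monotonicity and boundedness axioms B1--B3 of a $\rho$-function, and, for the uniform part, from the monotonicity of $\mginv$, which makes that convergence insensitive to $(\mathbf{s},\mathbf{u})$ inside $\mathcal{N}$.
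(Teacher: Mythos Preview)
Your proof is correct and follows essentially the same approach as the paper: you define $\Phi^*$ via the one--sided limits $m_1,m_2$ of $m$, establish \eqref{eq:phiestrella0} by case analysis on $\sign(\w^\top\mathbf t)$, and then argue uniformity on a neighborhood where the sign is constant. The only genuine difference is in the last step: the paper obtains a compact neighborhood from Lemma~\ref{lemma:Ct} and appeals to a Dini--type result (Theorem~7.13 in Rudin) together with B4, whereas you give a direct sandwich argument from the monotonicity of $\mginv$, treating the cases $m_2<\infty$ and $m_2=+\infty$ separately. Your route is slightly more elementary and makes the degenerate case explicit; the paper's is more compact but leaves the reader to verify that the monotone--convergence hypothesis of Dini's theorem is (eventually) satisfied.
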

\begin{proof}
Let $m_1 = \lim_{\mu \rightarrow \mu_1} m(\mu)$, $m_2 = \lim_{\mu\rightarrow \mu_2} m(\mu)$ and $m_0= \mginv(0)$, where $\mu_1$ and $\mu_2$ are the values defined in Section \ref{sec:MT}. These limits exist due to A2, though they may be $\infty$. Then \eqref{eq:phiestrella0} follows by taking 
	\begin{equation}
		\Phi^{\ast}(y,j) = \left\{ \begin{array} {l}
				\rho(t(\vv) - m_1) \text{ if } j=-1\\ 
						\rho(t(\vv) - m_0) \text{ if } j=0\\
				\rho(t(\vv) - m_2) \text{ if } j=1\\
		\end{array}\right. \label{eq:phistar}
	\end{equation} $\Phi^{\ast}(\vv,j)$ is well defined due to B2, where, if $m_i = \pm \infty$ we understand $\rho(t(\vv) - m_i)$ as $\lim_{m\rightarrow\pm\infty}\rho(t(\vv) - m) =1$ and it is continuous due to Lemma 3 in \citet{ValdoraYohai2014}.

Let $\epsilon>0$ and
 $C_{\mathbf t}$ be  those given in Lemma \ref{lemma:Ct}. We have that, for all $\mathbf s \in B(\mathbf{t}, \epsilon)$ and $\w \in C_{\mathbf t}$,
 $ \lim_{\zeta\rightarrow \infty} \rho(t(\vv) - \mginv(\zeta\w^\top   \mathbf{s}) )= \rho(t(\vv)-m_i)$ for $i=1$ or $2$. Using B4 and the fact that  $B(\mathbf{t}, \epsilon) \times C_\mathbf{t}$ is compact, we conclude that the convergence is uniform in $\mathbf{w}$ and $\mathbf{s}$; see for instance, Theorem 7.13 in \citet{rudin1953principles}.
\end{proof}

\begin{proof}[Proof of Lemma \ref{lemma:tau}]
 Note that assumption B3 together with \eqref{eq:phistar} imply that $\tau\geq 0$. 
By assumption B7 there exists $\delta>0$ such that $\inf _{||\mathbf t||=1} \mathbb P\left(\mathbf{t}^{\top} \x \neq \mathbf{0}\right)=\delta$.
Let $K_1, K_2 \in \mathbb R$ be such that $\mathbb P \left(\boldsymbol{\beta}_*^\top \x \in [K_1, K_2]\right) >1 - \delta/2$ and consider the event $V_{\mathbf t}=  \left\{ \mathbf t^{\top} \x \neq 0 , \boldsymbol{\beta}_*^{\top}\x \in [K_1, K_2]\right\}$. 
$$\mathbb P(V_{\mathbf{t}})\geq 
 \mathbb P(\mathbf t^{\top} \x \neq 0 ) - \mathbb P\left(\boldsymbol{\beta}_*^{\top}\x \notin [K_1, K_2]\right) \geq \delta - \delta/2 =\delta/2.$$ 
For each $i=1,2$, the function $C_i(\mu) = \mathbb E_{\mu}\left( \rho \left( t(y) - m_i \right)\right)-\mathbb E_{\mu}\left( \rho \left( t(y) - m(\mu) \right)\right)$ is positive
 and continuous by assumption B5. Then there exists $c_0>0$ such that, for each $i=1,2$,  $C_i(\mu) \geq c_0$ for all $\mu \in [g^{-1}(K_1), g^{-1}(K_2)]$. Denote $\mu_{\mathbf x} = g^{-1}\left(\boldsymbol{\beta}_*^{\top} \mathbf{x}\right)$. Then 
$$\begin{aligned}
	\mathbb E & \left(\Phi^*\left(y, \operatorname{sign}\left(\mathbf{t}^{\top}\x \right)\right)\right) - L(\boldsymbol{\beta}_*) \\
		& \geq \mathbb E\left(\mathbb E\left(\Phi^*\left(y,  \operatorname{sign}\left(\mathbf{t}^{\top} \x\right)\right)-\rho \left( t(y) - \mginv\left(\boldsymbol{\beta}_*^{\top} \mathbf{x}\right) \right)    \mid \mathbf{x} \right)  \right)
	\\
	& = \mathbb E\left(\mathbb E_{\mu_\mathbf{x}}\left(\Phi^*\left(y,  \operatorname{sign}\left(\mathbf{t}^{\top} \x\right)\right)-\rho \left( t(y) - m\left(\mu_{\mathbf x}\right) \right)    \mid \mathbf{x} \right)  \right)
	\\
	& \geq \mathbb E\left(C_i\left(\mu_\x \right) \mathbf{I}\left(\mathbf{x} \in V_{\mathbf{t}}\right) \right) \text{ for } i = 1 \text{ or } 2 \text{ since }  \mathbf{t}^{\top} \mathbf{x}\neq 0 \text{ for } \mathbf x \in V_{\mathbf{t}}\\ 
	& \geq c_0 \, \mathbb E\left( \mathbf{I}\left(\mathbf{x} \in V_{\mathbf{t}}\right)\right) \\	 &\geq c_0 \, \delta/2 > 0 .
\end{aligned}$$
 Since the bound is independent of $\mathbf t$, \eqref{eq:tau} follows.
\end{proof}

\begin{lemma} \label{lemma:liminf}
Assume conditions A1-A5 and B1-B6 and B8. Then, for all $\mathbf{t} \in S$, there exists $\varepsilon>0$ such that
\begin{equation}
\label{eq1.2}   
L(\boldsymbol{\beta}_*) < \E_{\boldsymbol{\beta}_{*}}\left(
\mathop{\underline{\lim}}_{\zeta\rightarrow\infty}\inf_{\mathbf{s}\in B(\mathbf{t},\varepsilon)}\rho(t(y) - \mginv(\zeta \x^\top  \boldsymbol{s}))\right) .
\end{equation}
\end{lemma}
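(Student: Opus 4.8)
The plan is to follow the structure of the proof of Lemma~\ref{lemma:tau}: split the expectation over $\x$ according to whether $\x^\top\mathbf{t}$ vanishes, and on the part where it does not, replace the inner quantity by the limiting function $\Phi^*$ furnished by Lemma~\ref{lemma:phistar}. Fix $\mathbf{t}\in S$. Assumption B8 gives $\mathbb{P}(\x^\top\mathbf{t}=0)<\tau$, so we may choose $\delta$ with $\mathbb{P}(\x^\top\mathbf{t}=0)<\delta<\tau$. Applying Lemma~\ref{lemma:Ct} with this $\delta$ yields an $\varepsilon>0$ (this is the $\varepsilon$ of the statement) and a compact set $C_{\mathbf{t}}\subset\mathbb{R}^{p+1}$ such that $\mathbb{P}(\x\in C_{\mathbf{t}})>1-\delta$ and such that, for all $\mathbf{s}\in B(\mathbf{t},\varepsilon)$ and all $\w\in C_{\mathbf{t}}$, one has $\sign(\w^\top\mathbf{s})=\sign(\w^\top\mathbf{t})$ and $|\w^\top\mathbf{s}|>\varepsilon$.

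The crucial step is to show that, on the event $\{\x\in C_{\mathbf{t}}\}$,
\begin{equation*}
\mathop{\underline{\lim}}_{\zeta\to\infty}\inf_{\mathbf{s}\in B(\mathbf{t},\varepsilon)}\rho\left(t(y)-\mginv(\zeta\x^\top\mathbf{s})\right)=\Phi^*\bigl(y,\sign(\x^\top\mathbf{t})\bigr).
\end{equation*}
Indeed, for $\x\in C_{\mathbf{t}}$ every $\mathbf{s}\in B(\mathbf{t},\varepsilon)$ has $\x^\top\mathbf{s}\neq0$ with a sign independent of $\mathbf{s}$, so by Lemma~\ref{lemma:phistar} the functions $\mathbf{s}\mapsto\rho(t(y)-\mginv(\zeta\x^\top\mathbf{s}))$ converge pointwise, as $\zeta\to\infty$, to the constant $\Phi^*(y,\sign(\x^\top\mathbf{t}))$; a finite-subcover argument over the compact ball $B(\mathbf{t},\varepsilon)$ (exactly as in the proof of Lemma~\ref{lemma:phistar}) upgrades this to uniform convergence in $\mathbf{s}$, and uniform convergence to a constant forces the infimum, and hence its lower limit over $\zeta$, to equal that constant. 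Measurability of the inner quantity as a function of $(y,\x)$ is routine, using continuity of $\rho$, $m$ and $g^{-1}$ and restricting the infimum to a countable dense set of $\mathbf{s}$'s. On the complementary event $\{\x\notin C_{\mathbf{t}}\}$ I will use only $\rho\geq0$ (B1), giving the trivial bound $0$.

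Combining the two cases pointwise and taking expectations, then using $0\leq\Phi^*\leq1$ (B2) together with $\mathbb{P}(\x\notin C_{\mathbf{t}})<\delta$,
\begin{align*}
\E_{\boldsymbol{\beta}_*}\left(\mathop{\underline{\lim}}_{\zeta\to\infty}\inf_{\mathbf{s}\in B(\mathbf{t},\varepsilon)}\rho\left(t(y)-\mginv(\zeta\x^\top\mathbf{s})\right)\right)
&\geq \E_{\boldsymbol{\beta}_*}\left(\Phi^*\bigl(y,\sign(\x^\top\mathbf{t})\bigr)\,\mathbb{I}_{\{\x\in C_{\mathbf{t}}\}}\right)\\
&\geq \E_{\boldsymbol{\beta}_*}\left(\Phi^*\bigl(y,\sign(\x^\top\mathbf{t})\bigr)\right)-\delta .
\end{align*}
Finally, since $\Phi^*(y,\sign(\x^\top\mathbf{t}))=\lim_{\zeta\to\infty}\rho(t(y)-\mginv(\zeta\x^\top\mathbf{t}))$ by \eqref{eq:phiestrella0} and $\mathbf{t}\in S$, the definition of $\tau$ in Lemma~\ref{lemma:tau} gives $\E_{\boldsymbol{\beta}_*}(\Phi^*(y,\sign(\x^\top\mathbf{t})))\geq L(\boldsymbol{\beta}_*)+\tau$; hence the whole expression is at least $L(\boldsymbol{\beta}_*)+(\tau-\delta)>L(\boldsymbol{\beta}_*)$ because $\delta<\tau$, which is the assertion.

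The main obstacle is the passage from pointwise to \emph{uniform} convergence of $\mathbf{s}\mapsto\rho(t(y)-\mginv(\zeta\x^\top\mathbf{s}))$ on $B(\mathbf{t},\varepsilon)$ for fixed $\x\in C_{\mathbf{t}}$ (together with the attendant measurability), since everything else is bookkeeping with B1, B2 and the defining property of $\tau$. However, that passage is already carried out inside the proof of Lemma~\ref{lemma:phistar} via the compactness of $B(\mathbf{t},\varepsilon)\times C_{\mathbf{t}}$, so it can be quoted rather than redone.
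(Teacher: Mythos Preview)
Your proof is correct and follows essentially the same route as the paper's: pick the compact set $C_{\mathbf t}$ via Lemma~\ref{lemma:Ct}, identify the $\liminf\inf$ with $\Phi^*$ on $C_{\mathbf t}$ through Lemma~\ref{lemma:phistar}, and close using $\Phi^*\le 1$ together with the definition of $\tau$. The only cosmetic differences are that the paper establishes the key identity on $C_{\mathbf t}$ by a sequence/contradiction argument rather than by quoting uniform convergence directly, and that your explicit choice of an intermediate $\delta\in(\mathbb{P}(\x^\top\mathbf t=0),\tau)$ makes the final strict inequality cleaner than the paper's version.
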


\begin{proof} Because of B8, $\mathbb P(\x^\top \mathbf{t}=0) < 1-\tau$ for all $\mathbf{t} \in S$. Let $C_\mathbf t$ be the compact set given in Lemma \ref{lemma:Ct},then $P(C_\mathbf t) >\tau$. Given $\vv\in \mathbb R$ and $\w \in \mathbb R^{p+1}$, by Lemma \ref{lemma:phistar}, we have
\begin{align*}
\mathop{\underline{\lim}}_{\zeta\rightarrow\infty}\inf_{\mathbf{s}\in B(\mathbf{t},\varepsilon)}\rho(t(\vv) -\mginv(\zeta\w^\top  \boldsymbol{s}))
& \leq \mathop{\lim}_{\zeta\rightarrow\infty}\rho\left(t(\vv) -\mginv(\zeta\w^\top \mathbf{t})\right) \\
& = \Phi^{\ast}(\vv,\sign(\w^\top\mathbf{t})) . 
\end{align*}
Let us assume that the strict inequality holds for some point $\w \in C_{\mathbf{t}}$ and $\vv \in \mathbb{R}$, that is
\begin{equation*}
	\mathop{\underline{\lim}}_{\zeta\rightarrow\infty}
\inf_{\mathbf{s}\in B(\mathbf{t},\varepsilon)}\rho(t(\vv) -\mginv(\zeta\w^\top \boldsymbol{s})) < \Phi^{\ast}(\vv,\sign(\w^\top \mathbf{t})) ,
\end{equation*}
then, there exist $\varsigma>0$, a sequence of positive numbers $\zeta_{n} \rightarrow \infty$ and a sequence $\mathbf{s}_{n} \in B(\mathbf{t},\varepsilon)$ such that
\begin{equation}
\label{eq12}  
\rho(t(\vv) -\mginv(\zeta_n \w^\top  \boldsymbol{s}_n)) < \Phi^{\ast}(\vv,\sign(\w^\top \mathbf{t})) -\varsigma.
\end{equation}
We can assume without loss of generality that $\mathbf{s}_{n} \rightarrow \mathbf{s}_{0},$ where $\mathbf{s}_{0} \in B(\mathbf{t}, \varepsilon)$ and $\w^\top \mathbf{s}_{0} \neq 0$. Moreover the sign of $\w^\top \mathbf{s}_{n}$ is the same as the sign of $\w^\top \mathbf{s}_0$ and of $\w^\top \mathbf{t}$. By Lemma \ref{lemma:phistar}, we have
\begin{equation*}
	\lim_{n\rightarrow\infty}\rho(t(\vv) -\mginv (\zeta_n\w^\top  \boldsymbol{s}_n)) = \Phi^{\ast}(\vv,\sign(\w^\top \mathbf{s}_{0})) = \Phi^{\ast}(\vv,\sign(\w^\top \mathbf{t})).
\end{equation*}
contradicting \eqref{eq12}. Then
\begin{equation*}
\mathop{\underline{\lim}}_{\zeta\rightarrow\infty}\inf_{\mathbf{s}\in B(\mathbf{t},\varepsilon)}\rho(t(\vv) -\mginv( \zeta  \w^\top\boldsymbol{s})) = \Phi^{\ast}(\vv,\sign(\w^\top \mathbf{t})) .
\end{equation*}
Given a set $A,$ we denote $A^{c}$ its complement. Then since $P(C_{\mathbf{t}}^{c})<1-\tau$ and $\sup\Phi^{\ast}\leq 1$, because of Lemma \ref{lemma:phistar} we get 
\begin{align*}
\E_{\boldsymbol{\beta}_*}(\mathop{\underline{\lim}}_{\zeta \rightarrow\infty}\inf_{\mathbf{s}\in B(\mathbf{t},\varepsilon)} \rho(t(y) - \mginv( \zeta \x^\top \boldsymbol{s}))) & \geq \E_{\boldsymbol{\beta}_*}(\mathop{\underline{\lim}}_{\zeta \rightarrow\infty}\inf_{\mathbf{s}\in B(\mathbf{t},\varepsilon)} \rho(t(y) - \mginv(\zeta \x^\top \boldsymbol{s})\mathbb{I}_{C_{\mathbf{t}}}(\x))) \\
& = \E_{\boldsymbol{\beta}_*}(\Phi^{\ast}(y,\sign(\x^\top \mathbf{t})) \mathbb{I}_{C_{\mathbf{t}}}(\x)) \\ 
& = \E_{\boldsymbol{\beta}_*}\left(\Phi^{\ast}(y,\sign(\x^\top \mathbf{t})) - {\Phi}^{\ast}(y,\sign(\x^\top \mathbf{t})) \mathbb{I}_{C_{\mathbf{t}}^{c}}(\mathbf{x})\right) \\
& \geq \E_{\boldsymbol{\beta}_*}(\Phi^{\ast}(y,\sign(\x^\top \mathbf{t}))) - \mathbb P\left(C_{\mathbf{t}}^{c}\right)\\ 
& > \E_{\boldsymbol{\beta}_*}(\Phi^{\ast}(y,\sign(\x^\top \mathbf{t}))) - \tau \\
& \geq L(\boldsymbol{\beta}_*)
\end{align*}
This concludes the proof.
\end{proof}

The following is a general consistency theorem valid for a large class of estimators that includes penalized  MT-estimators. It has been stated and proved in Theorem S.2.1 in the Supplement to \citet{bianco2022penalized} for the particular case of logistic regression. However, the same proof is valid for general GLMs. We recall that in what follows we assume we have a sequence $\{\lambda_n,\alpha_n\}_{n=1}^\infty$ indexed by the sample size $n$ of parameters for the penalty term; such a sequence might be random as well.
\begin{theorem}\label{teo:gralcons}
Let $\hat{\boldsymbol{\beta}}_n$ be an estimator defined as
	 \begin{equation}\label{eq:gral_est}
	 	 \hat{\boldsymbol{\beta}}_n=\underset{\boldsymbol{\beta} \in \mathbb{R}^p}{\operatorname{argmin}} \frac{1}{n} \sum_{i=1}^n \Phi\left(y_i, \mathbf{x}_i^{\top} \boldsymbol{\beta}\right) + \penn{\lambda_n}{(\boldsymbol{\beta})}{\alpha_n},
   \end{equation}	 
   where $\Phi:\mathbb R^2 \longrightarrow \mathbb R$ and  $\penn{\lambda_n}{}{\alpha_n}: \mathbb R^{p+1} \longrightarrow \mathbb R$. Assume $L(\boldsymbol{\beta})=\mathbb{E} \left(\Phi(y, \mathrm{x}^\top \boldsymbol{\beta})\right)$ has a unique minimum at $\boldsymbol{\beta}=\boldsymbol{\beta}_*$ and that $\penn{\lambda_n}{\boldsymbol{\beta}_*}{\alpha_n} \xrightarrow{a.s.} 0$ when $n \rightarrow \infty$. Furthermore, assume that, for any $\epsilon>0$,
	\begin{equation}\label{eq:desigL}
		\inf _{\left\|\boldsymbol{\beta}-\boldsymbol{\beta}_*\right\|>\epsilon} L(\boldsymbol{\beta})>L\left(\boldsymbol{\beta}_*\right)
	\end{equation}	and that the following uniform Law of Large Numbers holds
	\begin{equation} \label{eq:unifLLN}
		\mathbb{P}\left(\lim _{n \rightarrow \infty} \sup _{\boldsymbol{\beta} \in \mathbb{R}^p}\left|\frac{1}{n} \sum_{i=1}^n \Phi\left(y_i, \mathbf{x}_i^{\top} \boldsymbol{\beta}\right) -\mathbb{E} \left(\Phi(y, \mathbf{x}^\top \boldsymbol{\beta})\right) \right|=0\right)=1 .
	\end{equation}
	Then, $\hat{\boldsymbol{\beta}}_n$ is strongly consistent for $\boldsymbol{\beta}_*$.
\end{theorem}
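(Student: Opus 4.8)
The plan is to adapt the classical argmin--consistency argument for M-estimators so that it tolerates the (possibly random) penalty. Write $Q_n(\boldsymbol{\beta}) = \tfrac{1}{n}\sum_{i=1}^n \Phi(y_i,\mathbf{x}_i^{\top}\boldsymbol{\beta})$, so that $\hat{\boldsymbol{\beta}}_n$ minimises $Q_n(\boldsymbol{\beta}) + \penn{\lambda_n}{\boldsymbol{\beta}}{\alpha_n}$. First I would fix an event $\Omega_0$ of probability one on which \emph{both} the uniform law of large numbers \eqref{eq:unifLLN} holds and $\penn{\lambda_n}{\boldsymbol{\beta}_*}{\alpha_n}\to 0$; everything below is carried out pathwise on $\Omega_0$, and I abbreviate $\varepsilon_n = \sup_{\boldsymbol{\beta}\in\mathbb{R}^p}|Q_n(\boldsymbol{\beta}) - L(\boldsymbol{\beta})|$, which tends to $0$ on $\Omega_0$.

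The first substantive step is the basic inequality coming from the minimising property of $\hat{\boldsymbol{\beta}}_n$,
\[
Q_n(\hat{\boldsymbol{\beta}}_n) + \penn{\lambda_n}{\hat{\boldsymbol{\beta}}_n}{\alpha_n} \;\le\; Q_n(\boldsymbol{\beta}_*) + \penn{\lambda_n}{\boldsymbol{\beta}_*}{\alpha_n}.
\]
Using that the penalty is non-negative (as it is for every penalty considered in this paper) I drop the term $\penn{\lambda_n}{\hat{\boldsymbol{\beta}}_n}{\alpha_n}\ge 0$ on the left, and then compare $Q_n$ with $L$ through $\varepsilon_n$ at the two points $\hat{\boldsymbol{\beta}}_n$ and $\boldsymbol{\beta}_*$:
\[
L(\hat{\boldsymbol{\beta}}_n) \;\le\; Q_n(\hat{\boldsymbol{\beta}}_n) + \varepsilon_n \;\le\; Q_n(\boldsymbol{\beta}_*) + \penn{\lambda_n}{\boldsymbol{\beta}_*}{\alpha_n} + \varepsilon_n \;\le\; L(\boldsymbol{\beta}_*) + 2\varepsilon_n + \penn{\lambda_n}{\boldsymbol{\beta}_*}{\alpha_n}.
\]
Since the right-hand side tends to $L(\boldsymbol{\beta}_*)$ along $\Omega_0$, this gives $\limsup_{n} L(\hat{\boldsymbol{\beta}}_n) \le L(\boldsymbol{\beta}_*)$.

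The final step converts this into convergence of $\hat{\boldsymbol{\beta}}_n$ via the well-separated-minimum hypothesis \eqref{eq:desigL}. Suppose, for some $\omega\in\Omega_0$, that $\hat{\boldsymbol{\beta}}_n(\omega)\not\to\boldsymbol{\beta}_*$; then there are $\epsilon>0$ and a subsequence with $\|\hat{\boldsymbol{\beta}}_{n_k}(\omega)-\boldsymbol{\beta}_*\|>\epsilon$, so that $L(\hat{\boldsymbol{\beta}}_{n_k}(\omega))\ge\inf_{\|\boldsymbol{\beta}-\boldsymbol{\beta}_*\|>\epsilon}L(\boldsymbol{\beta})>L(\boldsymbol{\beta}_*)$ for every $k$, contradicting $\limsup_n L(\hat{\boldsymbol{\beta}}_n(\omega))\le L(\boldsymbol{\beta}_*)$. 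Hence $\hat{\boldsymbol{\beta}}_n(\omega)\to\boldsymbol{\beta}_*$ for all $\omega\in\Omega_0$, and since $\mathbb{P}(\Omega_0)=1$ this is strong consistency.

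I expect the chain of inequalities to be routine; the two points requiring care are that the law of large numbers must be used in its \emph{uniform} form over all of $\mathbb{R}^p$ --- a priori nothing keeps $\hat{\boldsymbol{\beta}}_n$ in a bounded region, and this is precisely why \eqref{eq:desigL} is imposed as a global separation condition rather than deduced from continuity and uniqueness of the minimiser --- and the bookkeeping of the two full-probability events together with the implicitly assumed existence and measurability of $\hat{\boldsymbol{\beta}}_n$, which is what makes the pathwise reasoning on $\Omega_0$ legitimate.
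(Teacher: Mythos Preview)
Your argument is correct and is exactly the classical argmin-consistency proof the paper invokes by citing Theorem~S.2.1 of \citet{bianco2022penalized}; the paper does not spell out the details, so your write-up effectively supplies them. The one point worth flagging is that non-negativity of the penalty is not listed among the hypotheses of the theorem as stated, so your step ``drop $\penn{\lambda_n}{\hat{\boldsymbol{\beta}}_n}{\alpha_n}\ge 0$'' is, strictly speaking, an additional assumption---you acknowledge this, and it is harmless here since every penalty in the paper satisfies it, but in a fully self-contained version you would want to add $\penn{\lambda}{\cdot}{\alpha}\ge 0$ to the hypotheses.
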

\begin{proof}
The proof is the same as the proof of {Theorem S.2.1} in \citet{bianco2022penalized}, with  $L_n(\boldsymbol{\beta}) =  \frac{1}{n} \sum_{i=1}^n \Phi\left(y_i, \mathbf{x}_i^{\top} \boldsymbol{\beta}\right),  \,\,\, L(\boldsymbol{\beta}) = \mathbb{E}\left(\Phi\left(y, \mathbf{x}^{\top} \boldsymbol{\beta}\right)\right)$ and $w(\x) = 1$.
\end{proof}

\begin{proof}[Proof of Theorem \ref{teo:consistency} ]
It is enough to show that the conditions in Theorem \ref{teo:gralcons} hold, with $\Phi(y, \x^{\top}\boldsymbol{\beta})=\rho(t(y) - m g^{-1} \left(\x^{\top}\boldsymbol{\beta}\right))$. First note that, because of Lemma \ref{lemma:FC}, $L(\boldsymbol{\beta})$ has a unique minimum at $\boldsymbol{\beta_*}$. Second, we prove \eqref{eq:unifLLN} by applying Corollary 3.12 in \citet{geer2000empirical}. To apply this corollary, note that $\left\vert \Phi(y, \x^{\top}\boldsymbol{\beta}) \right\vert = \left\vert \rho(t(y) - m g^{-1} \left(\x^{\top}\boldsymbol{\beta}\right)) \right\vert$ is uniformly bounded and that, by Lemma \ref{lemma:vcsubgraph}, the family 
$$\mathcal F = \left\{ f_{\boldsymbol{\beta}}(\vv, \w)=\Phi(\vv, \w^\top\boldsymbol{\beta}), \boldsymbol{\beta} \in \mathbb R^{p+1} \right\}$$ is VC-subgraph with envelope $F=1$. The mentioned corollary then implies that the family $\mathcal{F}$ satisfies the Uniform Law of Large Numbers and \eqref{eq:unifLLN} follows. It remains to prove \eqref{eq:desigL} holds for all $\epsilon>0$. Assume there exists $\epsilon>0$ such that \eqref{eq:desigL} does not hold. Then there exists a sequence $\left(\boldsymbol{\beta}_n\right)_n$ such that $\left\|\boldsymbol{\beta}_n-\boldsymbol{\beta}_*\right\| >\epsilon$ and  $\lim_{n_\rightarrow\infty} L(\boldsymbol{\beta}_n) = l  \leq L\left(\boldsymbol{\beta}_*\right)$. 
Assume first $\boldsymbol{\beta}_n$ is bounded, then it has a subsequence $\boldsymbol{\beta}_{n_k}$ such that $\lim_{n_\rightarrow\infty} L(\boldsymbol{\beta}_n) = L\left(\boldsymbol{\beta}_{**}\right)$ for some $\boldsymbol{\beta}_{**}$ with $\left\|\boldsymbol{\beta}_{**}-\boldsymbol{\beta}_*\right\|>\epsilon$. Since $L\left(\boldsymbol{\beta}_{*}\right) < L\left(\boldsymbol{\beta}_{**}\right)$ by Lemma \ref{lemma:FC} we arrive at a contradiction.
 This means that the sequence $\boldsymbol{\beta}_n$ must be unbounded.
Let $\boldsymbol{\gamma}_n = \boldsymbol{\beta}_n/||\boldsymbol{\beta}_n||$. We can assume without loss of generality that $ \lim_{n\rightarrow \infty}\boldsymbol{\gamma}_n =\boldsymbol{\gamma}_*$ with $\boldsymbol{\gamma}_* = 1$ and that $ \lim_{n\rightarrow \infty}||\boldsymbol{\beta}_n||$. 
By Lemma \ref{lemma:liminf}, there exists $\epsilon>0$ such that 
\begin{equation}\label{eq:lbetastar}
L(\boldsymbol{\beta}_*) < \E_{\boldsymbol{\beta}_{*}}(\mathop{\underline{\lim}}_{a\rightarrow\infty}\inf_{\boldsymbol{\gamma}\in B(\boldsymbol{\gamma}_*,\varepsilon)}\Phi(y, \x^\top a \boldsymbol{\gamma})) .	
\end{equation}	
For each $M>0$ we can choose $n_0 \in \mathbb N$ such that, for all $n\geq n_0$, $\left\vert\left\vert \boldsymbol{\beta}_n \right\vert\right\vert >M$  and $\boldsymbol{\gamma}_n \in B(\boldsymbol{\gamma}_*,\varepsilon)$. Then 
$$\Phi(y, \x^\top \boldsymbol{\beta}_n) = \Phi(y, M\x^\top \boldsymbol{\gamma}_n) \geq    \inf_{a > M}  \inf_{ \boldsymbol{\gamma} \in B(\boldsymbol{\gamma}_*,\varepsilon) }  \Phi(y, a \x^\top \boldsymbol{\gamma}). $$

$$ \lim_{n\rightarrow\infty} \Phi(y, \x^\top \boldsymbol{\beta}_n)  \geq  \mathop{\underline{\lim}}_{a\rightarrow\infty}    \inf_{\boldsymbol{\gamma} \in B(\boldsymbol{\gamma}_*,\varepsilon)}  \Phi(y, a \x^\top \boldsymbol{\gamma}) $$
By the dominated convergence theorem and \eqref{eq:lbetastar},
$$ \lim_{n\rightarrow\infty} L(\boldsymbol{\beta}_n) =  \E \left( \lim_{n\rightarrow\infty}\Phi(y, \x^\top \boldsymbol{\beta}_n) \right)   \geq \E \left( \mathop{\underline{\lim}}_{a\rightarrow\infty}  \inf_{\boldsymbol{\gamma} \in B(\boldsymbol{\gamma}_*,\varepsilon)}   \Phi(y, a \x^\top \boldsymbol{\gamma}) \right) > L(\boldsymbol{\beta_*}),$$ which is a contradiction. This implies \eqref{eq:desigL} and the theorem follows.
\end{proof}

The following lemma is a direct consequence of Lemma 4.2 in \citet{yohai1985high}. We include it here, together with its proof, for the sake of completeness.
\begin{lemma}
\label{lemma:victor}
Let $\tilde{\boldsymbol\beta}_{n}$ be a sequence of estimators such that $\tilde{\boldsymbol\beta}_{n} \rightarrow \boldsymbol{\beta}_*$ a.s.. Suppose {C1, C2 and C4} hold. Then
\begin{equation*}
\mathop{\lim _{n \rightarrow \infty}} \mathbf{A}_n(\tilde{\boldsymbol{\beta}}_{n}) = \mathbf{A} \quad \text {a.s.} ,
\end{equation*}
where $\mathbf{A}_n$ and $\mathbf{A}$ are given in \eqref{eq:AB}.
\end{lemma}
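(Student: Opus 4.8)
The plan is to reduce the claim to an entrywise uniform strong law of large numbers over a neighborhood of $\boldsymbol{\beta}_*$, combined with continuity of the limiting expectation; this is essentially the content of Lemma 4.2 in \citet{yohai1985high}, so the work consists in checking its hypotheses. Fix an index pair $(j,k)$ and let $J^{j,k}(y,\x,\boldsymbol{\beta})$ be the corresponding entry of the Hessian $\mathbf{J}(y,\x,\boldsymbol{\beta})$ of $\rho\big((t(y)-\mginv(\x^\top\boldsymbol{\beta}))/\sqrt{a(\phi)}\big)$. Writing $u(\boldsymbol{\beta})=(t(y)-\mginv(\x^\top\boldsymbol{\beta}))/\sqrt{a(\phi)}$, the chain rule expresses $J^{j,k}$ as a finite sum of products of $\rho''(u(\boldsymbol{\beta}))$ or $\rho'(u(\boldsymbol{\beta}))$, the first and second derivatives of $\mginv$ evaluated at $\x^\top\boldsymbol{\beta}$, and the coordinates of $\x$. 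By C2 the functions $\rho'$ and $\rho''$ are continuous and bounded, by C1 the link $g$ is twice continuously differentiable, and by Lemma 5 in \citet{ValdoraYohai2014} the function $m$ is twice differentiable; taken together these facts guarantee that, for each fixed $(y,\x)$, the map $\boldsymbol{\beta}\mapsto J^{j,k}(y,\x,\boldsymbol{\beta})$ is continuous.

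Next I would invoke C4, which furnishes $\eta>0$ such that $\E\big(\sup_{\|\boldsymbol{\beta}-\boldsymbol{\beta}_*\|\le\eta}|J^{j,k}(y,\x,\boldsymbol{\beta})|\big)<\infty$, i.e.\ an integrable envelope for the family $\{J^{j,k}(\cdot,\cdot,\boldsymbol{\beta}):\|\boldsymbol{\beta}-\boldsymbol{\beta}_*\|\le\eta\}$ indexed by a compact parameter ball. Since the $(y_i,\x_i)$ are i.i.d., continuity in the parameter together with an integrable envelope yield, by Lemma 4.2 in \citet{yohai1985high} (or, alternatively, by a Glivenko--Cantelli argument obtained by covering the $\eta$-ball with finitely many small balls and bounding the oscillation of $J^{j,k}$ on each one),
\begin{equation*}
\sup_{\|\boldsymbol{\beta}-\boldsymbol{\beta}_*\|\le\eta}\Big|\frac1n\sum_{i=1}^n J^{j,k}(y_i,\x_i,\boldsymbol{\beta})-\E\big(J^{j,k}(y,\x,\boldsymbol{\beta})\big)\Big|\xrightarrow{a.s.}0 .
\end{equation*}
The same domination, combined with the pointwise continuity of $\boldsymbol{\beta}\mapsto J^{j,k}(y,\x,\boldsymbol{\beta})$ and the dominated convergence theorem, also shows that $\boldsymbol{\beta}\mapsto\E\big(J^{j,k}(y,\x,\boldsymbol{\beta})\big)$ is continuous on the $\eta$-ball.

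To conclude I would work on the almost-sure event on which the display above holds and $\tilde{\boldsymbol{\beta}}_n\to\boldsymbol{\beta}_*$. For $n$ large enough $\tilde{\boldsymbol{\beta}}_n$ belongs to the $\eta$-ball, and writing $A_n^{j,k}$ and $A^{j,k}$ for the $(j,k)$ entries of $\mathbf{A}_n$ and $\mathbf{A}$,
\begin{equation*}
\big|A_n^{j,k}(\tilde{\boldsymbol{\beta}}_n)-A^{j,k}\big|\le\sup_{\|\boldsymbol{\beta}-\boldsymbol{\beta}_*\|\le\eta}\big|A_n^{j,k}(\boldsymbol{\beta})-\E\big(J^{j,k}(y,\x,\boldsymbol{\beta})\big)\big|+\big|\E\big(J^{j,k}(y,\x,\tilde{\boldsymbol{\beta}}_n)\big)-\E\big(J^{j,k}(y,\x,\boldsymbol{\beta}_*)\big)\big| .
\end{equation*}
The first term on the right tends to $0$ by the uniform SLLN and the second by continuity of the expectation together with $\tilde{\boldsymbol{\beta}}_n\to\boldsymbol{\beta}_*$; since there are only finitely many entries, $\mathbf{A}_n(\tilde{\boldsymbol{\beta}}_n)\to\mathbf{A}$ a.s. The one step that needs real care is the uniform strong law of large numbers --- upgrading the classical SLLN at a single parameter value to uniform convergence over the whole $\eta$-ball --- which is exactly where the continuity-plus-integrable-envelope packaging of \citet{yohai1985high}, or the covering/oscillation argument, does the work; the rest is routine.
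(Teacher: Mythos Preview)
Your proposal is correct and follows essentially the same route as the paper: both invoke Lemma~4.2 in \citet{yohai1985high}, verify continuity of $\boldsymbol{\beta}\mapsto J^{j,k}(y,\x,\boldsymbol{\beta})$ via C1--C2 (and the differentiability of $m$), and use the integrable envelope from C4 over a ball around $\boldsymbol{\beta}_*$. The only cosmetic difference is that you decompose the error as ``uniform SLLN $+$ continuity of the expectation,'' whereas the paper uses the equivalent upper/lower envelope formulation, bounding $\sup_{\|\boldsymbol{\beta}-\boldsymbol{\beta}_*\|<\delta}\mathbf{A}_n(\boldsymbol{\beta})$ by $\tfrac1n\sum_i\sup_{\boldsymbol{\beta}}J^{j,k}(y_i,\x_i,\boldsymbol{\beta})$ and applying the ordinary LLN to the envelope; both presentations unpack the same argument.
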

\begin{proof}
First note that it is enough to prove that, for all $\epsilon>0$
\begin{equation*}
\mathop{\overline{\lim}_{n\rightarrow \infty}} \mathbf{A}_n(\tilde{\boldsymbol{\beta}}_{n}) < \mathbf{A} + \epsilon \qquad \text{and} \qquad
\mathop{\underline{\lim}_{n\rightarrow \infty}} \mathbf{A}_n(\tilde{\boldsymbol{\beta}}_{n}) > \mathbf{A} - \epsilon
\end{equation*}
Then it is enough to show that for all $\epsilon>0$ there exists $\delta>0$ such that
\begin{equation}
\label{eq:limsups}  
\mathop{\overline{\lim}_{n\rightarrow \infty}}  \mathop{\sup_{||\boldsymbol{\beta} - \boldsymbol{\beta}_*||<\delta}} \mathbf{A}_n(\boldsymbol{\beta}) < \mathbf{A} + \epsilon \qquad \text{and} \qquad \mathop{\underline{\lim}_{n\rightarrow \infty}} \mathop{\inf_{||\boldsymbol{\beta} - \boldsymbol{\beta}_*||<\delta}} \mathbf{A}_n(\boldsymbol{\beta}) < \mathbf{A} - \epsilon 
\end{equation}
To prove the first inequality, note that, by the dominated convergence theorem and the continuity of $\mathbf{J}$, there exists $\delta$ such that
\begin{equation*}
\E\left(\mathop{\sup_{||\boldsymbol{\beta} - \boldsymbol{\beta}_*||<\delta}} \mathbf{J}(y, \x, \boldsymbol{\beta})\right) <  \E ( \mathbf{J}(y, \x, \boldsymbol{\beta}_*)) + \epsilon .
\end{equation*}
Using the Law of Large Numbers we get the first inequality in \eqref{eq:limsups}. The second inequality is proved similarly.
\end{proof}
\begin{lemma}\label{lemma:limits} Assume conditions A1-A5, B1-B7 and C1-C4. Then, 
\begin{enumerate}
\item[(a)] $\mathop{\lim}_{\boldsymbol{\beta}\rightarrow\boldsymbol{\beta}_*} \E(\mathbf{J}(y,\x,\boldsymbol{\beta})) = \E(\mathbf{J}(y,\x,\boldsymbol{\beta}_*))$
\item[(b)] Let $\eta$ be the constant given in assumption C4. Then,  for all $\delta \in (0, \eta)$,
\begin{equation}
\sup _{\| \boldsymbol{\beta}-\boldsymbol{\beta}_{*} \|<\delta} \frac{1}{n} \sum_{i=1}^n \mathbf{J}(y_i,\x_i,\boldsymbol{\beta})- \E ( \mathbf{J}(y,\x,\boldsymbol{\beta})) \xrightarrow{a.s.} 0.
\end{equation}
\end{enumerate}
\end{lemma}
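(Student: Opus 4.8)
The plan is to reduce both parts to one analytic fact: for every fixed $(y,\x)$ the map $\boldsymbol{\beta}\mapsto\mathbf{J}(y,\x,\boldsymbol{\beta})$ is continuous, and on a neighbourhood of $\boldsymbol{\beta}_*$ its entries admit an integrable envelope. Continuity follows from C1 and C2 together with Lemma~5 of \citet{ValdoraYohai2014}: since $g$ is twice continuously differentiable (C1) and $m$ is twice differentiable, the composition $\mginv=m\circ g^{-1}$ is twice continuously differentiable, and since $\rho$ has three bounded continuous derivatives (C2), $\boldsymbol{\beta}\mapsto\rho((t(y)-\mginv(\x^\top\boldsymbol{\beta})))$ is twice continuously differentiable with Hessian $\mathbf{J}(y,\x,\boldsymbol{\beta})$ whose entries are continuous in $\boldsymbol{\beta}$; joint measurability in $(y,\x,\boldsymbol{\beta})$ is then immediate. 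The envelope is the function $H^{j,k}(y,\x)=\sup_{\|\boldsymbol{\beta}-\boldsymbol{\beta}_*\|\leq\eta}\vert J^{j,k}(y,\x,\boldsymbol{\beta})\vert$ supplied by C4, with $\eta$ as in that assumption; its measurability is granted by restricting the supremum to a countable dense subset of the ball, using the continuity just established.

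For part (a): let $\boldsymbol{\beta}_n\to\boldsymbol{\beta}_*$; eventually $\|\boldsymbol{\beta}_n-\boldsymbol{\beta}_*\|\leq\eta$, so $J^{j,k}(y,\x,\boldsymbol{\beta}_n)\to J^{j,k}(y,\x,\boldsymbol{\beta}_*)$ pointwise, dominated by $H^{j,k}\in L^1$ thanks to C4. The dominated convergence theorem gives $\E(J^{j,k}(y,\x,\boldsymbol{\beta}_n))\to\E(J^{j,k}(y,\x,\boldsymbol{\beta}_*))$ for each pair $(j,k)$, i.e. $\E(\mathbf{J}(y,\x,\boldsymbol{\beta}))\to\E(\mathbf{J}(y,\x,\boldsymbol{\beta}_*))$ as $\boldsymbol{\beta}\to\boldsymbol{\beta}_*$. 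The same argument at any interior point shows $\boldsymbol{\beta}\mapsto\E(\mathbf{J}(y,\x,\boldsymbol{\beta}))$ is continuous on the open $\eta$-ball.

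For part (b): fix $\delta\in(0,\eta)$ and work on the closure of $B(\boldsymbol{\beta}_*,\delta)$, which is compact and contained in the $\eta$-ball. This is a uniform strong law of large numbers over a compact index set for a family with continuous sample paths and an integrable envelope, provable by the standard covering argument that underlies the proof of Lemma~\ref{lemma:victor}. Given $\epsilon>0$: by dominated convergence and continuity of $\mathbf{J}$ in $\boldsymbol{\beta}$, for each $\boldsymbol{\beta}$ in the closed $\delta$-ball choose $r_{\boldsymbol{\beta}}>0$ with $B(\boldsymbol{\beta},r_{\boldsymbol{\beta}})\subset B(\boldsymbol{\beta}_*,\eta)$ and $\E\!\left(\sup_{\|\boldsymbol{\gamma}-\boldsymbol{\beta}\|\leq r_{\boldsymbol{\beta}}}\|\mathbf{J}(y,\x,\boldsymbol{\gamma})-\mathbf{J}(y,\x,\boldsymbol{\beta})\|\right)<\epsilon$ (the oscillation is dominated by $2\max_{j,k}H^{j,k}\in L^1$ and tends to $0$ pointwise as $r_{\boldsymbol{\beta}}\downarrow 0$). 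Extract a finite subcover $\{B(\boldsymbol{\beta}_\ell,r_{\boldsymbol{\beta}_\ell})\}_{\ell=1}^L$. For any $\boldsymbol{\gamma}$ in the $\ell$-th ball, using also $\|\E(\mathbf{J}(y,\x,\boldsymbol{\gamma}))-\E(\mathbf{J}(y,\x,\boldsymbol{\beta}_\ell))\|<\epsilon$,
\[
\left\|\tfrac1n\sum_{i=1}^n\mathbf{J}(y_i,\x_i,\boldsymbol{\gamma})-\E(\mathbf{J}(y,\x,\boldsymbol{\gamma}))\right\|\leq\tfrac1n\sum_{i=1}^n\sup_{\|\boldsymbol{\gamma}-\boldsymbol{\beta}_\ell\|\leq r_{\boldsymbol{\beta}_\ell}}\|\mathbf{J}(y_i,\x_i,\boldsymbol{\gamma})-\mathbf{J}(y_i,\x_i,\boldsymbol{\beta}_\ell)\|+\left\|\tfrac1n\sum_{i=1}^n\mathbf{J}(y_i,\x_i,\boldsymbol{\beta}_\ell)-\E(\mathbf{J}(y,\x,\boldsymbol{\beta}_\ell))\right\|+\epsilon .
\]
By the ordinary strong law of large numbers the first average converges a.s. to a quantity $<\epsilon$ and the middle term to $0$; taking the maximum over the finitely many $\ell$, letting $n\to\infty$ and then $\epsilon\downarrow 0$ yields $\sup_{\|\boldsymbol{\beta}-\boldsymbol{\beta}_*\|<\delta}\|\frac1n\sum_{i=1}^n\mathbf{J}(y_i,\x_i,\boldsymbol{\beta})-\E(\mathbf{J}(y,\x,\boldsymbol{\beta}))\|\xrightarrow{a.s.}0$.

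The main obstacle is the first, analytic, step: verifying genuine continuity of the Hessian entries $J^{j,k}(y,\x,\boldsymbol{\beta})$ in $\boldsymbol{\beta}$, which needs the twice-differentiability of $m$ from Lemma~5 of \citet{ValdoraYohai2014} and not merely the smoothness of $F$ and $g$ postulated in C1, and ensuring that the suprema in the covering argument are measurable with an $L^1$ envelope inherited from C4. Once continuity and domination are secured, both conclusions are routine consequences of the dominated convergence theorem and the strong law of large numbers.
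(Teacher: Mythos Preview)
Your argument is correct. Part (a) matches the paper's reasoning exactly: continuity of $\mathbf{J}(y,\x,\cdot)$ (from C1, C2 and the twice-differentiability of $m$ supplied by Lemma~5 of \citet{ValdoraYohai2014}) together with the integrable envelope from C4 allow dominated convergence.

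For part (b) the paper takes a somewhat different route. Instead of your direct $\epsilon$-net on the parameter ball combined with the strong law applied to the local oscillations $\sup_{\|\boldsymbol{\gamma}-\boldsymbol{\beta}_\ell\|\le r_{\boldsymbol{\beta}_\ell}}\|\mathbf{J}(y_i,\x_i,\boldsymbol{\gamma})-\mathbf{J}(y_i,\x_i,\boldsymbol{\beta}_\ell)\|$, the paper invokes Pollard's bracketing Glivenko--Cantelli theorem (Theorem~2, Chapter~2 of \citet{pollard1984convergence}) and constructs explicit brackets $f^{\pm}_{\epsilon,i}$. To do so it first truncates the sample space to a compact set $\mathcal{A}_K=\{(v,\w):\|\w\|\le K,|v|\le K\}$, uses uniform continuity of $J^{k,l}$ on the resulting compact product set to control the oscillation there, and handles the complement $\mathcal{A}_K^c$ via the envelope from C4. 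Your approach avoids the sample-space truncation entirely because the envelope in C4 already dominates the oscillations globally, so the SLLN can be applied directly to them; this is the classical Jennrich-type argument and is a bit more elementary. The paper's bracketing construction is more explicit about the function-class structure and would generalize more readily if one wanted to upgrade to a uniform central limit theorem, but for the present almost-sure statement your method is both shorter and sufficient.
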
	
\begin{proof}
  Part (a) follows from the dominated convergence theorem and conditions C1, C2 and C4. To prove part (b) we follow the lines of the proof of Lemma 1 in \citet{bianco2002asymptotic}. We prove the result componentwise. Fix a component of $J^{j,l}$ of $\mathbf{J}$ and $\delta \in (0,\eta)$. From Theorem 2 in Chapter 2 in \citet{pollard1984convergence}, it is enough to show that for all $\epsilon > 0$, there exists a finite class of functions $\mathcal F_\epsilon$ such that for all $f$ in the class
\begin{equation*}
\mathcal{F} =\left\{f_{\boldsymbol{\beta}}(\vv,\w) = J^{k,l}(\vv,\w,\boldsymbol{\beta}), ||\boldsymbol{\beta}-\boldsymbol{\beta}_*|| \leq \delta \right\} ,
\end{equation*}
there exist  functions $f_{\epsilon}^-$, $f_{\epsilon}^+$ $\in \mathcal F_\epsilon$ such that
\begin{equation}
\label{eq:pollard}
f_{\epsilon}^- \leq f \leq f_{\epsilon}^+ \quad \text { and } \quad \E(f^+_{\epsilon}(y,\x)-f^-_{\epsilon}(y,\x)) < \epsilon .
\end{equation}
Let $\mathcal{A}_K=\left\{(\vv, \w)\in \mathbb{R}^{p+2} ||\w|| \leq K, |\vv| \leq K \right\}$, then, for all $\vv \in\mathbb{R}, \w\in \mathbb{R}^{p+1}$,  $\mathop{\lim}_{K\rightarrow\infty}\mathbb{I}_{\mathcal{A}_K^c}(\vv,\w) =0$, so
\begin{equation}
	\label{eq:sup}
	\mathop{\lim}_{K\rightarrow\infty}\mathop{\sup}_{ ||\boldsymbol{\beta}-\boldsymbol{\beta}_*||\leq \delta}\left\vert J^{k,l}(\vv,\w,\boldsymbol{\beta})\right\vert\mathbb{I}_{\mathcal{A}_K^c}(\vv,\w) =0 .
\end{equation} 
Because of C4 and the dominated convergence theorem, there exists $K_0$, such that for all $K\geq K_0$ and $\delta<\eta$,
\begin{equation}
\label{eq:expsup}
\E\left(\mathop{\sup}_{ ||\boldsymbol{\beta}-\boldsymbol{\beta}_*||\leq \delta}\left\vert J^{k,l}(y,\x,\boldsymbol{\beta})\right\vert\mathbb{I}_{\mathcal{A}_K^c}(y,\x)\right) < \epsilon/8 .
\end{equation} 
Because of C1, C2 and Lemmas 3 and 5 in \citet{ValdoraYohai2014},  $J^{k,l}$ is uniformly continuous in $\mathcal A_K \times \{||\boldsymbol{\beta}-\boldsymbol{\beta}_*||\leq \delta \}$, then there exists $\xi$ such that
\begin{equation}
\label{eq:unifcont}
|J^{k,l}(\vv_1,\w_1,\Beta) - J^{k,l}(\vv_2,\w_2,\boldsymbol{\beta}_2)| < \epsilon/4
\end{equation}
for all $(\vv_1, \w_1), (\vv_2, \w_2) \in \mathcal A_K$, $\Beta, \boldsymbol{\beta}_2 \in \{||\boldsymbol{\beta}-\boldsymbol{\beta}_*||\leq \delta \}$ such that $||\Beta-\boldsymbol{\beta}_2|| \leq \xi$, $||\w_1 -\w_2|| < \xi$ and $|\vv_1 -\vv_2| < \xi$. Since $\left\{||\boldsymbol{\beta}-\boldsymbol{\beta}_*|| \leq \delta \right\}$ is compact, there exist $B_1, \ldots, B_N$ balls with radius smaller than $\xi$  and centers at certain points $\boldsymbol{\beta}_1, \ldots, \boldsymbol{\beta}_N \in \{||\boldsymbol{\beta}-\boldsymbol{\beta}_*||\leq \delta \}$ such that $\left\{||\boldsymbol{\beta}-\boldsymbol{\beta}_*||\leq \delta \right\} \subset \bigcup_{i=1}^N B_i$. The class $\mathcal{F}_\epsilon$ is then the class of functions
\begin{equation*}
f^{\pm}_{\epsilon, i}(\vv, \w) = J^{k,l}(\vv, \w, \boldsymbol{\beta}_i) \pm \left( \frac{\epsilon}{4} + 2 \mathop{\sup}_{ ||\boldsymbol{\beta}-\boldsymbol{\beta}_*||\leq \delta}\left\vert J^{k,l}(\vv,\w,\boldsymbol{\beta}) \right\vert \mathbb{I}_{\mathcal{A}_K^c}(\vv, \w) \right) ,
\end{equation*}  
for $i \in 1, \ldots, N$. To show the first inequality in \eqref{eq:pollard}, let $f_{\boldsymbol{\beta}}(\vv, \w) = J^{k,l}(\vv, \w,\boldsymbol{\beta}) \in \mathcal{F}$. If $(\vv, \w) \in \mathcal{A}_K$, then there exists $i_0 \in \{1, \ldots, N\}$ such that $||\boldsymbol{\beta} - \boldsymbol{\beta}_{i_0}|| \leq \xi$ and
\begin{equation*}
|J^{k,l}(\vv,\w,\boldsymbol{\beta}) - J^{k,l}(\vv,\w,\boldsymbol{\beta}_{i_0})| < \epsilon/4 .
\end{equation*}  
Then
\begin{equation*}
f^-_{\epsilon, i_0}(\vv, \w) = J^{k,l}(\vv,\w,\boldsymbol{\beta}_{i_0}) - \epsilon/4 < J^{k,l}(\vv,\w,\boldsymbol{\beta}) < J^{k,l}(\vv,\w,\boldsymbol{\beta}_{i_0}) + \epsilon/4 = f^+_{\epsilon, i_0}(\vv, \w) .
\end{equation*}  
If $(\vv, \w) \in \mathcal{A}_K^c$, because of the triangular inequality,
\begin{equation*}
|J^{k,l}(\vv,\w,\boldsymbol{\beta}) - J^{k,l}(\vv,\w,\boldsymbol{\beta}_{i_0})| \leq 2 \sup_{||\boldsymbol{\beta}-\boldsymbol{\beta}_*||<\delta} |J^{k,l}(\vv,\w,\boldsymbol{\beta})| .
\end{equation*}
Then
\begin{equation*}
J^{k,l}(\vv,\w,\boldsymbol{\beta}) \leq J^{k,l}(\vv,\w,\boldsymbol{\beta}_{i_0})+  2 \mathop{\sup}_{ ||\boldsymbol{\beta}-\boldsymbol{\beta}_*||\leq \delta}\left\vert J^{k,l}(\vv,\w,\boldsymbol{\beta}) \right\vert \leq f^{+}_{\epsilon, i_0}(\vv, \w)
\end{equation*}  
and
\begin{equation*}
J^{k,l}(\vv,\w,\boldsymbol{\beta}) \geq J^{k,l}(\vv,\w,\boldsymbol{\beta}_{i_0})- 2 \mathop{\sup}_{ ||\boldsymbol{\beta}-\boldsymbol{\beta}_*||\leq \delta}\left\vert J^{k,l}(\vv,\w,\boldsymbol{\beta}) \right\vert \geq f^{-}_{\epsilon, i_0}(\vv, \w) .
\end{equation*}  
To show the second inequality in \eqref{eq:pollard}, note that, by definition of $f^\pm_{\epsilon, i}$ and \eqref{eq:expsup},
\begin{equation*}
\E( f^+_{\epsilon, i_0}(y, \x) - f^-_{\epsilon, i_0}(y, \x)) = \frac{\epsilon}{2} + 4 \E\left( \mathop{\sup}_{ ||\boldsymbol{\beta}-\boldsymbol{\beta}_*||\leq \delta}\left\vert J^{k,l}(y,\x,\boldsymbol{\beta})\right\vert \mathbb{I}_{\mathcal{A}_K^c}(y, \x)\right) < \epsilon .
\end{equation*}
\end{proof}

\begin{proof}[Proof of Theorem \ref{teo:order}]
Let $W_n(\boldsymbol{\beta}) = L_n(\boldsymbol{\beta}) + \penn{\lambda_{n}}{\boldsymbol{\beta}}{\alpha_n}$ and $\mathbf{A}$ and $\mathbf{A}_n(\boldsymbol{\beta})$ as defined in \eqref{eq:AB}. Because of the definition of $\hat{\boldsymbol{\beta}}_n$, $W_n(\hat{\boldsymbol{\beta}}_n) - W_n(\boldsymbol{\beta}_*) \leq 0$. 
Using a Taylor expansion of $L_n(\hat{\boldsymbol{\beta}}_n)$ about $\boldsymbol{\beta}_*$, we obtain that for a certain point $\boldsymbol{\xi}_{n}=\boldsymbol{\beta}_*+\tau_{n}(\hat{\boldsymbol{\beta}}_{n}-\boldsymbol{\beta}_*)$ with $\tau_{n} \in[0,1]$, 
\begin{align*}
0 & \geq \frac{1}{n}\sum_{i=1}^{n}\ \boldsymbol{\Psi}(y_{i},\x_{i},\boldsymbol{\beta}_{*})^\top (\hat{\boldsymbol{\beta}}_{n}-\boldsymbol{\beta}_*) + (\hat{\boldsymbol{\beta}}_{n}-\boldsymbol{\beta}_*) ^\top \mathbf{A}_n(\boldsymbol{\xi}_n)(\hat{\boldsymbol{\beta}}_{n}-\boldsymbol{\beta}_*) \\
& + \penn{\lambda_{n}}{\hat{\boldsymbol{\beta}}_{n}}{\alpha_n} - \penn{\lambda_{n}}{\boldsymbol{\beta}_{*}}{\alpha_n} \\
& = \frac{1}{n}\sum_{i=1}^{n}\ \boldsymbol{\Psi}(y_{i},\x_{i},\boldsymbol{\beta}_{*})^{\top} (\hat{\boldsymbol{\beta}}_{n}-\boldsymbol{\beta}_*) + (\hat{\boldsymbol{\beta}}_{n}-\boldsymbol{\beta}_*)^\top (\mathbf{A}_n(\boldsymbol{\xi}_n)- \mathbf{A}) (\hat{\boldsymbol{\beta}}_{n}-\boldsymbol{\beta}_*) \\
& + (\hat{\boldsymbol{\beta}}_{n}-\boldsymbol{\beta}_*)^\top \mathbf{A}(\hat{\boldsymbol{\beta}}_{n}-\boldsymbol{\beta}_*)+ \penn{\lambda_{n}}{\hat{\boldsymbol{\beta}}_{n}}{\alpha_n} - \penn{\lambda_{n}}{\boldsymbol{\beta}_{*}}{\alpha_n} \\
& \geq \frac{1}{n}\sum_{i=1}^{n}\boldsymbol{\Psi}(y_{i},\x_{i},\boldsymbol{\beta}_{*})^{\top} (\hat{\boldsymbol{\beta}}_{n}-\boldsymbol{\beta}_*) + (\hat{\boldsymbol{\beta}}_{n}-\boldsymbol{\beta}_*)^\top (\mathbf{A}_n(\boldsymbol{\xi}_n)- \mathbf{A})(\hat{\boldsymbol{\beta}}_{n}-\boldsymbol{\beta}_*) \\
& + \zeta_1||\hat{\boldsymbol{\beta}}_{n}-\boldsymbol{\beta}_*||^2 + \{ \penn{\lambda_{n}}{\hat{\boldsymbol{\beta}}_{n}}{\alpha} - \penn{\lambda_{n}}{\boldsymbol{\beta}_{*}}{\alpha} \} ,
\end{align*}
where $\zeta_1$ is the smallest eigenvalue of $\mathbf{A}$ which we know is positive by C4. Using that the sum of the first two terms is greater than or equal to minus its absolute value and applying the triangle inequality we get that the complete sum is 
\begin{align*}
& \geq - \frac{1}{n} \sum_{i=1}^{n} ||\boldsymbol{\Psi}(y_{i},\x_{i},\boldsymbol{\beta}_{*})|| ||\hat{\boldsymbol{\beta}}_{n}-\boldsymbol{\beta}_*|| - ||\hat{\boldsymbol{\beta}}_{n}-\boldsymbol{\beta}_*||^2 ||\mathbf{A}_n(\boldsymbol{\xi}_n) - \mathbf{A}|| \\
& + \zeta_1||\hat{\boldsymbol{\beta}}_{n}-\boldsymbol{\beta}_*||^2 + \{ \penn{\lambda_{n}}{\hat{\boldsymbol{\beta}}_{n}}{\alpha} - \penn{\lambda_{n}}{\boldsymbol{\beta}_{*}}{\alpha} \} \ .
\end{align*}
Let $\epsilon>0$. To bound the first term note that the Central Limit Theorem and Lemma \ref{lemma:FC}  imply that
\begin{equation*}
\sqrt{n} \frac{1}{n}\sum_{i=1}^{n} \boldsymbol{\Psi}(y_{i},\x_{i},\boldsymbol{\beta}_{*}) = O_{\mathbb P}(1)
\end{equation*}
Therefore there exists a constant $M_{1}$ such that, for all $n$, $P(\mathcal{C}_{n}) > 1-\varepsilon / 4$, where $\mathcal{C}_{n} = \left\{\left\|\sqrt{n} \frac{1}{n}\sum_{i=1}^{n} \boldsymbol{\Psi}(y_{i},\x_{i},\boldsymbol{\beta}_{*}) \right\| < M_{1} \right\}$.

To bound the second term, note that, since $\hat{\boldsymbol{\beta}}_{n} \xrightarrow{p} \boldsymbol{\beta}_{*}$, from Lemma \ref{lemma:limits}, we have that $\mathbf{A}_{n}(\hat{\boldsymbol{\beta}}_{n}) \xrightarrow{p} \mathbf{A}$, so there exists $n_{1}$ such that for every $n \geq n_{1}, P (\mathcal{B}_{n}) > 1-\varepsilon/4$, where $\mathcal{B}_{n} = \left\{\left\|\mathbf{A}_{n}(\boldsymbol{\xi}_n)-\mathbf{A}\right\|<\zeta_{1} / 2\right\}$. 

We then have:
\begin{equation}\label{eq:ineqcons}
0 \geq-\left\|\hat{\boldsymbol{\beta}}_{n}-\boldsymbol{\beta}_*\right\| \frac{1}{\sqrt{n}} M_{1}+\frac{\zeta_{1}}{2}\left\|\hat{\boldsymbol{\beta}}_{n}-\boldsymbol{\beta}_*\right\|^{2} + \{ \penn{\lambda_{n}}{\hat{\boldsymbol{\beta}}_{n}}{\alpha} - \penn{\lambda_{n}}{\boldsymbol{\beta}_{*}}{\alpha} \}
\end{equation}

Finally, to bound the third term, note that since P1, the function $\penn{\lambda}{\cdot}{\alpha}$ is Lipschitz with constant $K$ independent of $\alpha$ and $\lambda$, in a neighbourhood of $\boldsymbol{\beta}_{*}$ and define the event
\begin{equation*}
\mathcal{D}_{n}=\left\{\mathop{\sup_{\alpha\in [0,1]}} | \penn{1}{\hat{\boldsymbol{\beta}}_{n}}{\alpha} - \penn{1}{\boldsymbol{\beta}_{*}}{\alpha} | \leq K \left\|\hat{\boldsymbol{\beta}}_{n}-\boldsymbol{\beta}_{*} \right\| \right\} .
\end{equation*}  
Then, there exists $n_{2} \in \mathbb{N}$ such that for $n \geq n_{2}$, $P(\mathcal{D}_{n}) \geq 1-\epsilon/2$. Hence, for $n \geq \max \left\{n_{1}, n_{2}\right\}$ we have that $P(\mathcal{B}_{n} \cap \mathcal{C}_{n} \cap \mathcal{D}_{n})>1-\epsilon$. Besides, in $\mathcal{B}_{n} \cap \mathcal{C}_{n} \cap \mathcal{D}_{n}$ we have that
\begin{equation*}
0 \geq-\left\|\hat{\boldsymbol{\beta}}_{n}-\boldsymbol{\beta}_*\right\| \frac{1}{\sqrt{n}} M_{1}+\frac{\zeta_{1}}{2}\left\|\hat{\boldsymbol{\beta}}_{n}-\boldsymbol{\beta}_*\right\|^{2}-K \lambda_{n}\left\|\hat{\boldsymbol{\beta}}_{n}-\boldsymbol{\beta}_*\right\|
\end{equation*}
which implies
\begin{equation*}
\left\|\hat{\boldsymbol{\beta}}_{n}-\boldsymbol{\beta}_{*}\right\| \leq 2\left(\lambda_{n}+\frac{1}{\sqrt{n}}\right) \frac{M_{1}+K}{\zeta_{1}}.
\end{equation*}
Hence, $\left\|\hat{\boldsymbol{\beta}}_{n}-\boldsymbol{\beta}_{*}\right\|=O_{\mathbb P}(\lambda_{n}+1 / \sqrt{n})$, which completes the proof.
\end{proof}

\subsection{Proofs of the robustness results in Section \ref{sec:asres}}

\begin{proof}[Proof of Theorem \ref{teo:finitebreakdown}]
We follow \citet{smucler2017robust}. Given a sample size $n$ assume we can construct a sequence of samples $\{\{(y_{Ni}, \x_{Ni}): 1 \le i \le n \}\}_{N=1}^\infty$ such that for the corresponding sequence $\{\hat{\boldsymbol{\beta}}_N\}_{N=1}^\infty$ of penalized MT-estimators we have $\lim_{N \rightarrow \infty} \|\hat{\boldsymbol{\beta}}_N \| = \infty$. Since B1 and B2 hold, for sufficiently large $N$, by assumption P3 on $\lambda_n$, we have $\penn{\lambda_n}{\hat{\boldsymbol{\beta}}_N}{\alpha} > a + \penn{\lambda_n}{\underline{\boldsymbol{\beta}}}{\alpha}$ that implies $\hat{L}(\hat{\boldsymbol{\beta}}_N) > \hat{L}(\underline{\boldsymbol{\beta}})$ which contradicts the definition of $\hat{\boldsymbol{\beta}}_N$. The proof is similar under P3'.
\end{proof}

\begin{proof}[Proof of Theorem \ref{teo:stability}]
We follow the proof of Theorem 1 in \citet{avella2018robust}. Let $\boldsymbol{\beta}_\varepsilon$ be the solution of the optimization problem in \eqref{equ:stability} and $\boldsymbol{\beta}_0$ be the solution of the same problem when $\varepsilon = 0$. Let $\boldsymbol{\Psi}(y, \x, \boldsymbol{\beta})$ be the derivative of $\rho((t(y)-\mginv(\x^\top\boldsymbol{\beta}))/\sqrt{a(\phi)})$ with respect to $\boldsymbol{\beta}$ and $\mathbf{J}(y,\x,\boldsymbol{\beta})$ be the Hessian matrix as in C4. Let $J = \{j: \boldsymbol{\beta}_{\varepsilon,j} \neq 0\}$, then the Karush-Kuhn-Tucker conditions for the $|J|$ equations corresponding to the indices in $J$ must satisfy the following estimating equations, where we intend that any quantity that follows is restricted to indices in $J$,
\begin{equation*}
\int \int \boldsymbol{\Psi}(y, \x, \boldsymbol{\beta}_\varepsilon) \ dF_\varepsilon(y | \x) \ dH(\x) + \dot{P}_{\lambda,\alpha}(\boldsymbol{\beta}_\varepsilon) = \boldsymbol{0} ,
\end{equation*}
where $\dot{P}_{\lambda,\alpha}(\boldsymbol{\beta})$ is the gradient of $\penn{\lambda}{\cdot}{\alpha}$ at $\boldsymbol{\beta}$. Consider a first order Taylor expansion of $\boldsymbol{\Psi}(y, \x, \boldsymbol{\beta}_\varepsilon)$ around $\boldsymbol{\beta}_0$ we obtain
\begin{equation*}
\boldsymbol{\Psi}(y, \x, \boldsymbol{\beta}_\varepsilon) = \boldsymbol{\Psi}(y, \x, \boldsymbol{\beta}_0) + \mathbf{J}(y, \x, \tilde{\boldsymbol{\beta}}_\varepsilon) (\boldsymbol{\beta}_\varepsilon - \boldsymbol{\beta}_0)
\end{equation*}
with $\tilde{\boldsymbol{\beta}}_\varepsilon = \boldsymbol{\beta}_0 + \tau (\boldsymbol{\beta}_\varepsilon - \boldsymbol{\beta}_0)$ for some $\tau \in [0,1]$. Similarly, since P2, we have
\begin{equation*}
\dot{P}_{\lambda,\alpha}(\boldsymbol{\beta}_\varepsilon) = \dot{P}_{\lambda,\alpha}(\boldsymbol{\beta}_0) + \ddot{P}_{\lambda,\alpha}(\check{\boldsymbol{\beta}}_\varepsilon) (\boldsymbol{\beta}_\varepsilon - \boldsymbol{\beta}_0)
\end{equation*}
with $\check{\boldsymbol{\beta}}_\varepsilon = \boldsymbol{\beta}_0 + \tau (\boldsymbol{\beta}_\varepsilon - \boldsymbol{\beta}_0)$ for some $\tau \in [0,1]$, where $\ddot{P}_{\lambda,\alpha}(\cdot)$ is the Hessian matrix. Hence, we have
\begin{align*}
0 & = \varepsilon \E_{\boldsymbol{X}} \left( \int \boldsymbol{\Psi}(y, \x, \boldsymbol{\beta}_0) \ d(G(y|\x) - F(y|\x)) \right) \\
& + \E_{\boldsymbol{X}} \left( \int \boldsymbol{\Psi}(y, \x, \boldsymbol{\beta}_0) \ dF(y|\x) \right) + \dot{P}_{\lambda,\alpha}(\boldsymbol{\beta}_0) \\
& + \left[ \varepsilon \E_{\boldsymbol{X}} \left( \int \mathbf{J}(y, \x, \tilde{\boldsymbol{\beta}}_\varepsilon) \ d(G(y|\x) - F(y|\x)) \right) \right. \\
& \left. + \E_{\boldsymbol{X}} \left( \int \mathbf{J}(y, \x, \tilde{\boldsymbol{\beta}}_\varepsilon) \ dF(y|\x) \right) \right. \\
& + \left. \ddot{P}_{\lambda,\alpha}(\check{\boldsymbol{\beta}}_\varepsilon) \right] (\boldsymbol{\beta}_\varepsilon - \boldsymbol{\beta}_0) \ .
\end{align*}
We let 
\begin{align*}
S & = \varepsilon \E_{\boldsymbol{X}} \left( \int \mathbf{J}(y, \x, \tilde{\boldsymbol{\beta}}_\varepsilon) \ d(G(y|\x) - F(y|\x)) \right) \\
& + \E_{\boldsymbol{X}} \left( \int \mathbf{J}(y, \x, \tilde{\boldsymbol{\beta}}_\varepsilon) \ dF(y|\x) \right) \\
& + \ddot{P}_{\lambda,\alpha}(\check{\boldsymbol{\beta}}_\varepsilon)
\end{align*}
and
\begin{align*}
N_1 & = \E_{\boldsymbol{X}} \left( \int \boldsymbol{\Psi}(y, \x, \boldsymbol{\beta}_0) \ d(G(y|\x) - F(y|\x)) \right) \\
N_2 & = \E_{\boldsymbol{X}} \left( \int \boldsymbol{\Psi}(y, \x, \boldsymbol{\beta}_0) \ dF(y|\x) \right) + \dot{P}_{\lambda,\alpha}(\boldsymbol{\beta}_0)  
\end{align*}
and we obtain
\begin{equation*}
(\boldsymbol{\beta}_\varepsilon - \boldsymbol{\beta}_0) = - S^{-1} \left(\varepsilon  N_1 + N_2\right) \ . 
\end{equation*}
Since C2, $N_1$ and $N_2$ are bounded, while since C5 there exists $\varepsilon_* > 0$ small enough such that, for all $0 \le \varepsilon < \varepsilon_*$ the matrix $S$ is non singular.
\end{proof}

\begin{proof}[Proof of Theorem \ref{teo:abp}]
  Under the assumption of Theorem \ref{teo:abp}, as it is shown in Theorem 3 of \citet{ValdoraYohai2014} the unpenalized MT-estimator $\tilde{\boldsymbol{\beta}}_\varepsilon$, solution of \eqref{equ:stability} with $\lambda=0$, has an asymptotic breakdown point of at least $\varepsilon_*$ and hence $\|\tilde{\boldsymbol{\beta}}_\varepsilon \| < \infty$ for any $0 \le \varepsilon \le \varepsilon_*$. Given, $\lambda \ge 0$, assume that $\|\boldsymbol{\beta}_\varepsilon \|$, solution of \eqref{equ:stability}, is not finite. This means there exists a sequence $\{ \boldsymbol{\beta}_{N\varepsilon} \}_{N=1}^\infty$ such that $\lim_{N \rightarrow \infty}\boldsymbol{\beta}_{N\varepsilon} = \boldsymbol{\beta}_\varepsilon$ and $\lim_{N \rightarrow \infty} \| \boldsymbol{\beta}_{N\varepsilon} \| = \infty$. Since assumption P3 we have that $\lim_{N \rightarrow \infty} \penn{\lambda}{\boldsymbol{\beta}_{N\varepsilon}}{\alpha} = \sup_{\boldsymbol{\beta}} \penn{\lambda}{\boldsymbol{\beta}}{\alpha}$ and hence $\penn{\lambda}{\tilde{\boldsymbol{\beta}}_\varepsilon}{\alpha} \le \penn{\lambda}{\boldsymbol{\beta}_{N\varepsilon}}{\alpha}$ for a sufficiently large $N$. Since $\tilde{\boldsymbol{\beta}}_\varepsilon$ and $\boldsymbol{\beta}_\varepsilon$ cannot be the same, the last fact contradict the definition of $\boldsymbol{\beta}_\varepsilon$, which proves that $\|\boldsymbol{\beta}_\varepsilon \| < \infty$ as well for any $0 \le \varepsilon \le \varepsilon_*$. 
\end{proof}
  
\subsection{Proofs of the asymptotic normality results in Section \ref{sec:asres}}
We now turn to the proof of the asymptotic normality of the proposed estimators. To derive the proof of Theorem \ref{teo:asdist} we will need two lemmas.
\begin{lemma}
\label{lemma:rn1}
Assume A2, C1, C2 nd C4.  {For each $\mathbf w \in \mathbb{R}^{p+1}$, define} 
\begin{equation*}
	R_{1}(\mathbf{w})=\mathbf{w}_0^\top \mathbf{w}+\frac{1}{2} \mathbf{w}^\top \mathbf{A} \mathbf{w}
\end{equation*}
where $\mathbf{w}_0 \sim N_{p+1}(\mathbf{0}, \mathbf{B})$ with $\mathbf{B}$ given {in \eqref{eq:AB}}. Furthermore, let $R_{n, 1}(\mathbf{w})$ be
\begin{align*}
R_{n, 1}(\mathbf{w}) & = \sum_{i=1}^{n}\left\{\rho \left(t(y_{i})-mg^{-1}\left(\x_{i}^\top \left[\boldsymbol{\beta}_* + \frac{\mathbf{w}}{\sqrt{n}} \right]\right)\right) - \rho \left(t(y_{i})-mg^{-1}\left(\x_{i}^\top\boldsymbol{\beta}_*\right)\right) \right\}. 
\end{align*}
Then, the process $R_{n, 1}$ converges in distribution to $R_{1}$.
\end{lemma}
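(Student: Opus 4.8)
The plan is to Taylor-expand each summand of $R_{n,1}$ to second order about $\boldsymbol{\beta}_*$ and to identify the two resulting terms with the linear and the quadratic parts of $R_{1}$. Write $g_i(\boldsymbol{\beta}) = \rho(t(y_i) - \mginv(\x_i^\top\boldsymbol{\beta}))$. Since $\rho$ is twice continuously differentiable with bounded derivatives by C2 and $\mginv = m\circ g^{-1}$ is twice continuously differentiable by C1 together with Lemma~5 of \citet{ValdoraYohai2014}, the scalar map $s\mapsto g_i(\boldsymbol{\beta}_* + s\,\mathbf{w}/\sqrt{n})$ is $C^2$ on $[0,1]$, so the one-dimensional Taylor formula with Lagrange remainder applies and, for a fixed $\mathbf{w}\in\mathbb{R}^{p+1}$, gives
\begin{equation*}
R_{n,1}(\mathbf{w}) = \mathbf{Z}_n^\top \mathbf{w} + \tfrac12\,\mathbf{w}^\top \mathbf{A}_n(\mathbf{w})\,\mathbf{w}, \qquad \mathbf{Z}_n := \frac{1}{\sqrt{n}}\sum_{i=1}^n \boldsymbol{\Psi}(y_i,\x_i,\boldsymbol{\beta}_*),
\end{equation*}
where $\mathbf{A}_n(\mathbf{w}) = n^{-1}\sum_{i=1}^n \mathbf{J}(y_i,\x_i,\boldsymbol{\beta}_* + \tau_{n,i}\mathbf{w}/\sqrt{n})$ for some (random) $\tau_{n,i} = \tau_{n,i}(\mathbf{w}) \in [0,1]$. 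Note that a third-order expansion is not available here, since C1 only grants $\mginv \in C^2$; this is why the Hessian must be kept at intermediate points.

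For the linear term, Lemma~\ref{lemma:FC} shows that $L$ attains its unique minimum at $\boldsymbol{\beta}_*$; since $L$ is differentiable with $\nabla L(\boldsymbol{\beta}) = \E(\boldsymbol{\Psi}(y,\x,\boldsymbol{\beta}))$ (differentiation under the integral sign, justified by C1--C2 and the integrability already used in C4), we conclude $\E(\boldsymbol{\Psi}(y,\x,\boldsymbol{\beta}_*)) = \mathbf{0}$, so the covariance matrix of $\boldsymbol{\Psi}(y,\x,\boldsymbol{\beta}_*)$ equals $\mathbf{B}$; the multivariate central limit theorem then yields $\mathbf{Z}_n \xrightarrow{d} \mathbf{w}_0 \sim N_{p+1}(\mathbf{0},\mathbf{B})$. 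For the quadratic term, since $\|\boldsymbol{\beta}_* + \tau_{n,i}\mathbf{w}/\sqrt{n} - \boldsymbol{\beta}_*\| \le \|\mathbf{w}\|/\sqrt{n}$, for every $R > 0$ and all $n$ large one has
\begin{equation*}
\sup_{\|\mathbf{w}\| \le R} \big\| \mathbf{A}_n(\mathbf{w}) - \mathbf{A} \big\| \le \sup_{\|\boldsymbol{\beta}-\boldsymbol{\beta}_*\| \le R/\sqrt{n}} \Big\| \frac{1}{n}\sum_{i=1}^n \mathbf{J}(y_i,\x_i,\boldsymbol{\beta}) - \mathbf{A} \Big\| \xrightarrow{a.s.} 0 ,
\end{equation*}
by the uniform law of large numbers for the Hessian in Lemma~\ref{lemma:limits}(b) together with the continuity at $\boldsymbol{\beta}_*$ of $\boldsymbol{\beta}\mapsto\E(\mathbf{J}(y,\x,\boldsymbol{\beta}))$ from Lemma~\ref{lemma:limits}(a) (choose a fixed $\delta$ making the expected-Hessian part small, then let $n$ be large enough that $R/\sqrt{n}<\delta$). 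Hence $R_{n,1}(\mathbf{w}) = \mathbf{Z}_n^\top\mathbf{w} + \tfrac12\mathbf{w}^\top\mathbf{A}\mathbf{w} + r_n(\mathbf{w})$ with $\sup_{\|\mathbf{w}\|\le R}|r_n(\mathbf{w})| \xrightarrow{a.s.} 0$ for every $R$.

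To pass from these estimates to convergence of the processes, I would argue as follows: the map $\mathbf{z} \mapsto \big(\mathbf{w} \mapsto \mathbf{z}^\top\mathbf{w} + \tfrac12\mathbf{w}^\top\mathbf{A}\mathbf{w}\big)$ is continuous from $\mathbb{R}^{p+1}$ into the separable metrizable space of continuous functions on $\mathbb{R}^{p+1}$ endowed with the topology of uniform convergence on compacta, so $\mathbf{Z}_n \xrightarrow{d} \mathbf{w}_0$ and the continuous mapping theorem give that $Q_n(\mathbf{w}) := \mathbf{Z}_n^\top\mathbf{w} + \tfrac12\mathbf{w}^\top\mathbf{A}\mathbf{w}$ converges in distribution, in that function space, to $R_1$. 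Since $R_{n,1} = Q_n + r_n$ and $r_n \to 0$ uniformly on compacta in probability, the Slutsky/extended continuous mapping argument for metric-space--valued random elements yields $R_{n,1} \xrightarrow{d} R_1$; in particular this delivers the finite-dimensional convergence that Theorem~\ref{teo:asdist} will use in its $\arg\min$ step. The main obstacle is the quadratic term: one must show that replacing the $i$- and $\mathbf{w}$-dependent intermediate points $\boldsymbol{\beta}_* + \tau_{n,i}\mathbf{w}/\sqrt{n}$ by $\boldsymbol{\beta}_*$ in the Hessian average costs only a uniformly negligible amount, and this is exactly what the uniform (over a shrinking $L_2$-ball around $\boldsymbol{\beta}_*$) law of large numbers in Lemma~\ref{lemma:limits}(b), combined with part~(a) of that lemma, is designed to provide.
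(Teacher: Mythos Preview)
Your overall strategy---Taylor expand, identify a CLT-driven linear term plus a Hessian quadratic, then upgrade to process convergence via continuous mapping on $C(\mathbb{R}^{p+1})$ with the compact-open topology---is sound and close in spirit to the paper's own argument, which uses the same expansion but casts the process convergence in the Kim--Pollard framework (finite-dimensional convergence plus stochastic equicontinuity). However, there is a concrete gap in your treatment of the quadratic term.

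By expanding each summand $g_i$ separately you obtain $i$-dependent intermediate points $\tau_{n,i}$, so your
\[
\mathbf{A}_n(\mathbf{w}) \;=\; \frac{1}{n}\sum_{i=1}^n \mathbf{J}\bigl(y_i,\x_i,\boldsymbol{\beta}_* + \tau_{n,i}\mathbf{w}/\sqrt{n}\bigr)
\]
is \emph{not} of the form $n^{-1}\sum_i \mathbf{J}(y_i,\x_i,\boldsymbol{\beta})$ for a single $\boldsymbol{\beta}$. Consequently the displayed inequality
\[
\sup_{\|\mathbf{w}\|\le R}\bigl\|\mathbf{A}_n(\mathbf{w})-\mathbf{A}\bigr\| \;\le\; \sup_{\|\boldsymbol{\beta}-\boldsymbol{\beta}_*\|\le R/\sqrt{n}} \Bigl\| \frac{1}{n}\sum_{i=1}^n \mathbf{J}(y_i,\x_i,\boldsymbol{\beta}) - \mathbf{A} \Bigr\|
\]
is false in general: the right side fixes a common $\boldsymbol{\beta}$ across all summands, the left side does not, and Lemma~\ref{lemma:limits}(b) (a uniform LLN for the empirical Hessian at a \emph{single} parameter value) therefore does not control your $\mathbf{A}_n(\mathbf{w})$. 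The simplest fix, which is what the paper does, is to apply Taylor's formula with Lagrange remainder to the single $C^2$ scalar function $s\mapsto L_n(\boldsymbol{\beta}_*+s\,\mathbf{w}/\sqrt{n})$ rather than to each $g_i$; this yields one intermediate point $\tilde{\boldsymbol{\beta}}_{\mathbf{w}}=\boldsymbol{\beta}_*+\tau_n\mathbf{w}/\sqrt{n}$ and hence a genuine empirical Hessian $\mathbf{A}_n(\tilde{\boldsymbol{\beta}}_{\mathbf{w}})$ at a single point, to which Lemma~\ref{lemma:victor} (for pointwise convergence) and Lemma~\ref{lemma:limits} (for the uniform-in-$\mathbf{w}$ statement you need for $r_n$) apply directly. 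With that correction the remainder of your argument---$Q_n\Rightarrow R_1$ by continuous mapping from $\mathbf{Z}_n\Rightarrow\mathbf{w}_0$, and $r_n\to 0$ uniformly on compacta in probability, then Slutsky---goes through and is equivalent to the paper's route.
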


\begin{proof} The proof is {similar} to the proof of Lemma A.7 in \citet{bianco2019penalized}. According to Theorem 2.3 in \citet{kim1990cube}, it is enough to prove that
\begin{itemize}
\item[(a)] For any $\mathbf{w}_{1}, \cdots, \mathbf{w}_{s}$ we have $$(R_{n, 1}(\mathbf{w}_{1}), \cdots, R_{n, 1}(\mathbf{w}_{s})) \xrightarrow{d} (R_{1}(\mathbf{w}_{1}), \cdots, R_{1}(\mathbf{w}_{s}))$$
\item[(b)] Given $\epsilon>0, \eta>0$ and $M<\infty$ there exists $\delta>0$ such that
\begin{equation*}
\limsup _{n \rightarrow \infty} P^{*}\left(\sup_{\|\mathbf{u}\| \leq M,\|\mathbf{v}\| \leq M \atop\|\mathbf{u}-\mathbf{v}\|<\delta }\left|R_{n, 1}(\mathbf{u})-R_{n, 1}(\mathbf{v})\right|>\epsilon \right) < \eta
\end{equation*}
where $P^{*}$ stands for outer probability.	
\end{itemize}
First note that it is enough to consider $s=1$  since for any other $s$ the proof follows similarly using the Cramer-Wald device.
Hence, we fix {$\mathbf{w} \in \mathbb{R}^{p+1}$}. A Taylor expansion of $R_{n, 1}$ around $\mathbf{w} =\mathbf{0}$ yields
\begin{equation*}
R_{n, 1}(\mathbf{w}) = \sqrt{n} \mathbf{w}^\top \nabla L_{n}(\boldsymbol{\beta}_*)+\frac{1}{2} \mathbf{w}^\top \mathbf{A}_{n}(\tilde{\boldsymbol{\beta}}_{\mathbf{w}}) \mathbf{w},
\end{equation*}
with $\mathbf{A}_n(\boldsymbol{\beta})$ as defined in \eqref{eq:AB} and $\tilde{\boldsymbol{\beta}}_{\mathbf{w}}=\boldsymbol{\beta}_{*}+\tau_{n} \mathbf{w} / \sqrt{n}$ with $\tau_n \in [0,1]$. Note that $\nabla L_{n}(\boldsymbol{\beta})=\frac{1}{n}\sum_{i=1}^n \mathbf{\Psi}(y_i, \mathbf x_i,\boldsymbol{\beta})$.
The Multivariate Central Limit Theorem, A2 and Lemma \ref{lemma:FC} entail that $\sqrt{n} \nabla L_{n}(\boldsymbol{\beta}_{*}) \xrightarrow{d} N_{p+1}(\mathbf{0}, \mathbf{B})$. On the other hand, Lemma \ref{lemma:victor} implies that $\mathbf{A}_{n}(\tilde{\boldsymbol{\beta}}_{\mathbf{w}}) \xrightarrow{p} \mathbf{A}$, so using Slutsky's Theorem we obtain that $R_{n, 1} \xrightarrow{d} \mathbf{w}_0^\top \mathbf{w}+\frac{1}{2} \mathbf{w}^\top \mathbf{A} \mathbf{w}$, concluding the proof of (a).
 
To derive (b), we perform a first order Taylor expansion of $R_{n, 1}(\mathbf{u})$ and $R_{n, 1}(\mathbf{v})$ around $0$ obtaining
\begin{equation*}
R_{n, 1}(\mathbf{u})-R_{n, 1}(\mathbf{v})=\sqrt{n} \nabla L_{n}(\boldsymbol{\beta}_*)^\top(\mathbf{u}-\mathbf{v})+\frac{1}{2} \mathbf{u}^\top \mathbf{A}_{n}(\tilde{\boldsymbol{\beta}}_{\mathbf{u}}) \mathbf{u}-\frac{1}{2} \mathbf{v}^\top \mathbf{A}_{n}(\tilde{\boldsymbol{\beta}}_{\mathbf{v}}) \mathbf{v}
\end{equation*}
where
$\tilde{\boldsymbol{\beta}}_{\mathbf{v}}=\boldsymbol{\beta}_*+{\tau_{\mathbf{v}, n} \mathbf{v}}/{\sqrt{n}}$ and $\tilde{\boldsymbol{\beta}}_{\mathbf{u}}=\boldsymbol{\beta}_* + {\tau_{\mathbf{u}, n} \mathbf{u}}/{\sqrt{n}}$ with $\tau_{\mathbf{v}, n}, \tau_{\mathbf{u}, n} \in[0,1]$. Noting that $\sqrt{n} \nabla L_{n}(\boldsymbol{\beta}_*)^\top(\mathbf{u}-\mathbf{v}) \leq O_{\mathbb P}(1)\|\mathbf{u}-\mathbf{v}\|$ and
\begin{align*}
\mathbf{u}^\top \mathbf{A}_{n}(\tilde{\boldsymbol{\beta}}_{\mathbf{u}}) \mathbf{u}-\mathbf{v}^\top \mathbf{A}_{n}(\tilde{\boldsymbol{\beta}}_{\mathbf{v}}) \mathbf{v} &=\mathbf{u}^\top \mathbf{A}_{n}(\tilde{\boldsymbol{\beta}}_{\mathbf{u}}) \mathbf{u}-\mathbf{u}^\top \mathbf{A}_{n}(\tilde{\boldsymbol{\beta}}_{\mathbf{v}}) \mathbf{u}+\mathbf{u}^\top \mathbf{A}_{n}(\tilde{\boldsymbol{\beta}}_{\mathbf{v}}) \mathbf{u} \\
 &-\mathbf{u}^\top \mathbf{A}_{n}(\tilde{\boldsymbol{\beta}}_{\mathbf{v}}) \mathbf{v}+\mathbf{u}^\top \mathbf{A}_{n}(\tilde{\boldsymbol{\beta}}_{\mathbf{v}}) \mathbf{v}-\mathbf{v}^\top \mathbf{A}_{n}(\tilde{\boldsymbol{\beta}}_{\mathbf{v}}) \mathbf{v}
\end{align*}
we obtain that, if $\|\mathbf{u}\|,\|\mathbf{v}\| \leq M$,
\begin{align*}
\left|R_{n, 1}(\mathbf{u})-R_{n, 1}(\mathbf{v})\right| \leq & O_{p}(1)\|\mathbf{u}-\mathbf{v}\|+M^{2}\left\|\mathbf{A}_{n}(\tilde{\boldsymbol{\beta}}_{\mathbf{u}})-\mathbf{A}_{n}(\tilde{\boldsymbol{\beta}}_{\mathbf{v}})\right\| \\
& + 2\|\mathbf{u}-\mathbf{v}\| M\left\|\mathbf{A}_{n}(\tilde{\boldsymbol{\beta}}_{\mathbf{v}})\right\|
\end{align*}
where $\| \mathbf C  \|$ stands for the Frobenius norm if $\mathbf{C}$ is a matrix. Lemma \ref{lemma:limits} entails that $\mathbf{A}_{n}(\tilde{\boldsymbol{\beta}}_{\mathbf{u}})- \mathbf{A}_{n}(\tilde{\boldsymbol{\beta}}_{\mathbf{v}}) \xrightarrow{p} 0$ and $\mathbf{A}_{n}(\tilde{\boldsymbol{\beta}}_{\mathbf{v}}) \xrightarrow{p} \mathbf{A}$, uniformly over all $\left\{\mathbf{u}, \mathbf{v} \in \mathbb{R}^{p+1}: \max \{\|\mathbf{u}\|,\|\mathbf{v}\|\} \leq M\right\}$ and $(\mathrm{b})$ follows, concluding the proof.
\end{proof}

In the next Lemma and in Theorem \ref{teo:asdist} $\lambda_{n}$ is allowed to be random.
\begin{lemma}\label{lemma:rn2} Let assume that P1 holds for $\penn{\lambda}{\cdot}{\alpha}$ and $\sqrt{n} \lambda_{n}=O_{\mathbb P}(1)$. Let $\mathbf{w}=(w_0, \mathbf{w}_1) \in \mathbb{R}^{p+1}$ and 
\begin{equation*}
R_{n,2}(\mathbf{w}) = n \left\{ \penn{\lambda_n}{\boldsymbol{\beta}_* + \frac{\mathbf{w}}{\sqrt{n}}}{\alpha} - \penn{\lambda_n}{\boldsymbol{\beta}_*}{\alpha} \right\}.
\end{equation*}  
Then, the process $R_{n, 2}$ is equicontinuous, i.e., for any $\epsilon>0, \eta>0$ and $M<\infty$ there exists $\delta>0$ such that
\begin{equation*}
\limsup_{n \rightarrow \infty} P^{*}\left(\sup _{\|\mathbf{u}|| \leq M,\|\mathbf{v}\|\leq M \atop\| \mathbf{u}-\mathbf{v} \|<\delta}\left|R_{n, 2}(\mathbf{u})-R_{n, 2}(\mathbf{v})\right|>\epsilon\right)<\eta.
\end{equation*}
\end{lemma}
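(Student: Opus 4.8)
The plan is to dominate the increment $R_{n,2}(\mathbf u)-R_{n,2}(\mathbf v)$ by the Lipschitz bound supplied by P1 and then to absorb the resulting random multiplicative factor $\sqrt n\,\lambda_n$ using the tightness hypothesis. So fix $\epsilon>0$, $\eta>0$ and $M<\infty$, and let $\epsilon_0>0$ and $K$ be the radius and constant appearing in P1, with $K$ independent of $\alpha$. Exactly as in the proof of Theorem~\ref{teo:order}, I would write $\penn{\lambda_n}{\boldsymbol{\beta}}{\alpha}=\lambda_n\penn{1}{\boldsymbol{\beta}}{\alpha}$ near $\boldsymbol{\beta}_*$ (for SCAD and MCP up to an additive constant that is irrelevant since it cancels in the increment), so that
\begin{equation*}
R_{n,2}(\mathbf u)-R_{n,2}(\mathbf v)= n\,\lambda_n\left\{\penn{1}{\boldsymbol{\beta}_*+\tfrac{\mathbf u}{\sqrt n}}{\alpha}-\penn{1}{\boldsymbol{\beta}_*+\tfrac{\mathbf v}{\sqrt n}}{\alpha}\right\},
\end{equation*}
where $K$ from now on is the Lipschitz constant of $\penn{1}{\cdot}{\alpha}$ on $B(\boldsymbol{\beta}_*,\epsilon_0)$.

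Next, I would observe that $\|\mathbf u\|\le M$ and $\|\mathbf v\|\le M$ force $\boldsymbol{\beta}_*+\mathbf u/\sqrt n$ and $\boldsymbol{\beta}_*+\mathbf v/\sqrt n$ into $B(\boldsymbol{\beta}_*,\epsilon_0)$ as soon as $n>n_0:=(M/\epsilon_0)^2$, so P1 applies and gives, for all such $n$,
\begin{equation*}
\bigl|R_{n,2}(\mathbf u)-R_{n,2}(\mathbf v)\bigr|\le n\,\lambda_n\,K\,\bigl\|\tfrac{\mathbf u}{\sqrt n}-\tfrac{\mathbf v}{\sqrt n}\bigr\|_1=\sqrt n\,\lambda_n\,K\,\|\mathbf u-\mathbf v\|_1\le \sqrt n\,\lambda_n\,K\sqrt{p+1}\,\|\mathbf u-\mathbf v\|,
\end{equation*}
using $\|\cdot\|_1\le\sqrt{p+1}\,\|\cdot\|$. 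Taking the supremum over the index set $\{\|\mathbf u\|\le M,\ \|\mathbf v\|\le M,\ \|\mathbf u-\mathbf v\|<\delta\}$ then produces, for every $n\ge n_0$, the single $\omega$-wise bound $\sqrt n\,\lambda_n\,K\sqrt{p+1}\,\delta$; in particular the (possibly non-measurable) event that the supremum exceeds $\epsilon$ is contained in the measurable event $\{\sqrt n\,\lambda_n\,K\sqrt{p+1}\,\delta>\epsilon\}$.

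Finally, I would choose $\delta$. Since $\sqrt n\,\lambda_n=O_{\mathbb P}(1)$, tightness yields an $M_0<\infty$ with $\limsup_{n\to\infty}\mathbb P(\sqrt n\,\lambda_n>M_0)<\eta$; then pick $\delta>0$ so small that $M_0\,K\sqrt{p+1}\,\delta<\epsilon$. For $n\ge n_0$ the supremum exceeds $\epsilon$ only on $\{\sqrt n\,\lambda_n>M_0\}$, hence $P^{*}(\sup(\cdot)>\epsilon)\le\mathbb P(\sqrt n\,\lambda_n>M_0)$, and passing to the $\limsup$ gives a value $<\eta$, which is precisely the equicontinuity claim. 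The only genuinely delicate point is the presence of the sample-dependent sequence $\lambda_n$: one cannot fix a deterministic bound on $\sqrt n\,\lambda_n$, so the argument must be routed through the $O_{\mathbb P}(1)$ hypothesis (tightness), and the reduction to the normalized penalty $\penn{1}{\cdot}{\alpha}$ is what keeps the Lipschitz constant $K$ free of both $\alpha$ and $\lambda_n$; everything else is bookkeeping.
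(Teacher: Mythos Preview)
Your proof is correct and follows essentially the same route as the paper's: bound $|R_{n,2}(\mathbf u)-R_{n,2}(\mathbf v)|$ by $K\sqrt{n}\,\lambda_n\|\mathbf u-\mathbf v\|$ via the Lipschitz assumption P1, then use tightness of $\sqrt{n}\,\lambda_n$ to choose $\delta$. The paper's version is terser---it applies P1 directly to $\penn{\lambda_n}{\cdot}{\alpha}$ without passing through the normalization $\penn{\lambda_n}{\cdot}{\alpha}=\lambda_n\penn{1}{\cdot}{\alpha}$---so your factorization step (and the accompanying SCAD/MCP caveat, which is not literally correct but harmless for the increment bound) is an unnecessary detour; otherwise the arguments coincide, with yours being more explicit about the threshold $n_0$ and the $\ell_1/\ell_2$ norm conversion.
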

\begin{proof} Because of P1, the function $\penn{\lambda}{\cdot}{\alpha}$ is Lipschitz with constant $K$ independent of $\alpha$. Therefore, in a neighbourhood of $\boldsymbol{\beta}_*$, there exists $M$ such that, if $||\mathbf{u}||< M$ and $||\mathbf{v}||<M$,
\begin{equation*}
\left|R_{n, 2}(\mathbf{u})-R_{n, 2}(\mathbf{v})\right| = n \left| \penn{\lambda_n}{\boldsymbol{\beta}_* + \frac{\mathbf{u}}{\sqrt{n}}}{\alpha} - \penn{\lambda_n}{\boldsymbol{\beta}_* + \frac{\mathbf{v}}{\sqrt{n}}}{\alpha} \right|\leq K \sqrt n\lambda_n \left\|\mathbf{u}-\mathbf{v}  \right\| .
\end{equation*}
Since $\sqrt n\lambda_n=O_P(1)$, there exists $C$ such that $P(\sqrt n \lambda_n \geq C)<\eta$, then, if $||\mathbf{u} - \mathbf{v}||<\epsilon/(CK)$, $||\mathbf{u}||<M$ and $||\mathbf{v}||<M$,
\begin{equation*}
P^*\left(\left|R_{n, 2}(\mathbf{u})-R_{n, 2}(\mathbf{v})\right|>\epsilon\right) \leq P\left(\sqrt n \lambda_n > \frac{\epsilon}{K ||\mathbf{u} - \mathbf{v}|| } \right) \leq P \left(\sqrt n \lambda_n> M\right) < \eta .
\end{equation*}
The result follows taking $\delta =\epsilon/(CK)$.
\end{proof}

\begin{proof}[Proof of Theorem \ref{teo:asdist}]
The proof is similar to the proof of Theorem {5.3} in \citet{bianco2019penalized}. Let $R_n(\mathbf{w}) = R_{n,1}(\mathbf{w}) + R_{n,2}(\mathbf{w})$, with $R_{n,1}(\mathbf{w})$ and $R_{n,2}(\mathbf{w})$ defined as in Lemmas \ref{lemma:rn1} and \ref{lemma:rn2} respectively and note that $\arg\min_{\mathbf{w}} R_{n}(\mathbf{w}) = \sqrt{n}(\hat{\boldsymbol{\beta}}_{n}-\boldsymbol{\beta}_*)$. So what we need to prove is that 
{
\begin{equation*}
\operatorname{argmin}_{\mathbf{w}} R_{n}(\mathbf{w}) \rightarrow \operatorname{argmin}_{\mathbf{w}}R(\mathbf{w}).\end{equation*}
We will use Theorem 2.7 in \citet{kim1990cube} with $t_n=-\sqrt{n}(\hat{\boldsymbol{\beta}}_{n}-\boldsymbol{\beta}_*)$, $Z=-R(\mathbf w)$ and $Z_n= R_n$.}
Condition (iii) is verified since $R_n( \sqrt{n}(\hat{\boldsymbol{\beta}}_{n}-\boldsymbol{\beta}_*)) \leq \inf_{\mathbf{t}} R_n(\mathbf{t})$ while condition (ii) follows from Theorem \ref{teo:order}. To verify (i) we need to prove that $R_{n}(\mathbf{w}) \xrightarrow{d} R(\mathbf{w})$. As explained in the proof of Lemma \ref{lemma:rn1}, it is enough to show that
\begin{itemize}
\item[(a)] For any $\mathbf{w}_{1}, \cdots, \mathbf{w}_{s}$ we have $(R_{n}(\mathbf{w}_{1}), \cdots, R_{n}(\mathbf{w}_{s})) \xrightarrow{d} (R(\mathbf{w}_{1}), \cdots, R(\mathbf{w}_{s}))$.
\item[(b)] Given $\epsilon>0, \eta>0$ and $M<\infty$ there exists $\delta>0$ such that
\begin{equation*}
\limsup_{n \rightarrow \infty} P^{*} \left( \sup _{\|\mathbf{u}\|\leq M,\|\mathbf{v}\|\leq M \atop\|\mathbf{u}-\mathbf{v}\|<\delta}\left|R_{n}(\mathbf{u})-R_{n}(\mathbf{v})\right|>\epsilon \right) < \eta
\end{equation*}
where $P^{*}$ stands for outer probability.
\end{itemize}
First note that (b) follows easily from Lemmas \ref{lemma:rn1} and \ref{lemma:rn2}. To prove (a), recall from the proof of Lemma \ref{lemma:rn1} that it is enough to consider the case $s=1$ and fix $\mathbf{w} \in \mathbb{R}^{p+1}$. We already know $R_{n, 1}(\mathbf{w}) \xrightarrow{d} \mathbf{w}^\top \mathbf{w}_0+\frac{1}{2} \mathbf{w}^\top \mathbf{A} \mathbf{w}$, so we only have to study the convergence of $R_{n, 2}(\mathbf{w})$. 

\noindent (i) For the elastic net penalty function we have that $R_{n, 2}(\mathbf{w})=R_{n, 2,1}(\mathbf{w})+R_{n, 2,2}(\mathbf{w})$ where $\mathbf{w} = (w_0, w_1, \ldots, w_p)$ and 
\begin{align*}
R_{n, 2,1}(\mathbf{w}) & = n \lambda_{n}\frac{1-\alpha}{2}\left\{\sum_{\ell=1}^{p} \left(\beta_{*, \ell}+\frac{w_{\ell}}{\sqrt{n}}\right)^2-(\beta_{*, \ell})^2\right\} \\
R_{n, 2,2}(\mathbf{w}) & = n \lambda_{n} \alpha\left\{\sum_{\ell=1}^{p}\left|\beta_{*, \ell}+\frac{w_{\ell}}{\sqrt{n}}\right|-\left|\beta_{*, \ell}\right|\right\}
\end{align*}
\begin{equation*}
R_{n, 2,1}(\mathbf{w}) = n \lambda_n\frac{1-\alpha}{2} \sum_{\ell=1}^{p} \left\{ 2 \beta_{*, \ell} \frac{w_{\ell}}{\sqrt n} + \frac{w_{\ell}^2}{n} \right\}=\sqrt n \lambda_n (1-\alpha) \sum_{\ell=1}^{p} \left\{ \beta_{*, \ell} w_{\ell} + \frac{w_{\ell}^2}{2\sqrt n}\right\}
\end{equation*}
{For each $\ell = 1, \ldots, p$}, if $\beta_{*,\ell} > 0$ and $n$ is large enough, then $\beta_{*, \ell}+\frac{w_{\ell}}{\sqrt{n}} >0$ as well, and the $\ell$-th term of $R_{n, 2,2}(\mathbf{w})$ is  $\sqrt{n} \lambda_n \alpha  w_{\ell}$. If $\beta_{*,\ell} < 0$ and $n$ is large enough, then the $\ell$-th term of $R_{n, 2,2}(\mathbf{w})$  is $-\sqrt{n} \lambda_n \alpha {w_{\ell}}$, while, if $\beta_{*,\ell} = 0$, the $\ell$-th term in $R_{n, 2,2}(\mathbf{w})$ is $\sqrt{n} \lambda_n \alpha  {|w_{\ell}|}$. So we have that
\begin{equation*}
R_{n, 2,2}(\mathbf{w}) = \sqrt{n} \lambda_n \alpha \sum_{\ell=1}^{p} w_\ell \sign(\beta_{*,\ell}) \mathbb{I}_{\{\beta_{*,\ell}\neq 0\}} +\left|w_{\ell}\right| \mathbb{I}_{\{{\beta_{*,\ell}=0} \}} \ .
\end{equation*}  
And then, 
\begin{align*}
R_{n, 2,1}(\mathbf{w}) \xrightarrow{p} & b(1-\alpha) \sum_{\ell=1}^{p} \beta_{*, \ell} w_{\ell} \\
R_{n, 2,2}(\mathbf{w}) \xrightarrow{p} & b \alpha \sum_{\ell=1}^{p}\left\{w_{\ell} \sign(\beta_{*, \ell}) \mathbb{I}_{\left\{\beta_{*, \ell} \neq 0\right\}}+|w_{\ell}| \mathbb{I}_{\left\{{\beta_{*,\ell}=0}  \right\}}\right\}.
\end{align*}
Therefore, $R_{n, 2}(\mathbf{w}) \xrightarrow{p} b \mathbf{w}^\top \mathbf{q}(\mathbf{w})$ and the result follows from Slutzky's lemma.

\noindent (ii) For the bridge penalty function and $\alpha > 1$, Theorem 7 in \citet{smucler2015} shows that
\begin{equation*}
R_{n,2} (\mathbf{w}) = \sqrt{n} \lambda_n \left( \sum_{\ell=1}^p |\beta_{*, \ell} + \frac{w_\ell}{\sqrt{n}}|^\alpha - |\beta_{*, \ell}|^\alpha \right) \xrightarrow{p} b \alpha \sum_{\ell=1}^p w_{\ell} \sign(\beta_{*, \ell}) |\beta_{*, \ell}|^{\alpha-1}   
\end{equation*}
uniformly over compact sets, while for $\alpha=1$ (lasso) it is equivalent to the Elastic net penalty and $\alpha=1$ in the previous point (i).
For $\alpha < 1$ and $\lambda/n^{\alpha/2} \rightarrow b$ Theorem 8 in \citet{smucler2015} shows that
\begin{equation*}
R_{n,2} (\mathbf{w}) = \sqrt{n} \lambda_n \left( \sum_{\ell=1}^p |\beta_{*, \ell} + \lambda_n w_\ell|^\alpha - |\beta_{*, \ell}|^\alpha \right) \xrightarrow{p} \sum_{\ell=1}^p \mathbb{I}_{\left\{\beta_{*, \ell} = 0\right\}} |w_\ell|^{\alpha} 
\end{equation*}
uniformly over compact sets.

\noindent (iii) For the sign penalty function Theorem 5 of \citet{bianco2022penalized} shows that, given $J(\boldsymbol{\beta}) = \sum_{\ell=1}^p J_\ell(\boldsymbol{\beta})$, $J_\ell(\boldsymbol{\beta}) = |\beta_\ell|/\|\boldsymbol{\beta}\|_2$ and $\tau_{n,\ell} \in [0,1]$,
\begin{align*}
  R_{n,2} (\mathbf{w}) & = \sqrt{n} \lambda_n \left( \sum_{\ell=1}^p \left[\nabla J_\ell\left(\boldsymbol{\beta}_{*} + \tau_{n,\ell} \frac{\mathbf{w}}{\sqrt{n}}\right)\right]^\top \mathbf{w} \mathbb{I}_{\left\{\beta_{*, \ell} \neq 0\right\}} + \sum_{\ell=1}^p \frac{|w_\ell|}{\| \boldsymbol{\beta}_{*} + \frac{\mathbf{w}}{\sqrt{n}} \|} \mathbb{I}_{\left\{\beta_{*, \ell} = 0\right\}} \right) \\
& \xrightarrow{p} b \sum_{\ell=1}^p  {\nabla_\ell(\boldsymbol{\beta}_*)^\top\mathbf{w}} \mathbb{I}_{\left\{\beta_{*, \ell} \neq 0\right\}} + \sign(w_\ell)/\|\boldsymbol{\beta}_*\|_2 \mathbb{I}_{\left\{\beta_{*, \ell} = 0\right\}} \mathbf{e}_\ell
\end{align*}
where $\mathbf{e}_\ell$ stands for the $\ell$th canonical vector and $\nabla_\ell(\boldsymbol{\beta}) = - (|\beta_\ell| / \|\boldsymbol{\beta}_*\|_2^3) \boldsymbol{\beta} + \sign(\beta_\ell)/\|\boldsymbol{\beta}\|_2 \mathbf{e}_\ell$.
\end{proof}

\begin{proof}[Proof of Theorem \ref{teo:asdist2}]
Let
\begin{align*}
R_{n, 1}(\mathbf{w})& = \frac{1}{n \lambda_{n}^{2}} \sum_{i=1}^{n}\left\{\rho(t(y_{i})- \mginv(\x_{i}^\top(\boldsymbol{\beta}_*+\lambda_{n} \mathbf{w})) )-\rho(t(y_{i})- \mginv(\x_{i}^\top\boldsymbol{\beta}_*) )\right\}\\
R_{n, 2}(\mathbf{w})& = \frac{1}{\lambda_{n}}\left\{ \penn{1}{\boldsymbol{\beta}_* + \lambda_{n} \mathbf{w}}{\alpha} - \penn{1}{\boldsymbol{\beta}_*}{\alpha} \right\}
\end{align*}
and $R_n(\mathbf{w})=R_{n, 1}(\mathbf{w}) + R_{n, 2}(\mathbf{w})$.

Note that $\arg\min_{\mathbf{w}} R_{n}(\mathbf{w})=(1 / \lambda_{n})(\hat{\boldsymbol{\beta}}_{n}-\boldsymbol{\beta}_*)$. As in the proof of Theorem \ref{teo:asdist}, we begin by showing that given $\epsilon>0, \eta>0$ and $M<\infty$ there exists $\delta>0$ such that
\begin{equation}
\label{eq:equicont}
\limsup_{n \rightarrow \infty} P^{*}\left(\sup_{\|\mathbf{u}\| \leq M,\|\mathbf{v}\| \leq M \atop\|\mathbf{u}-\mathbf{v}\|<\delta }\left|R_{n, l}(\mathbf{u})-R_{n, l}(\mathbf{v})\right|>\epsilon\right)<\eta
\text{ for } l=1,2.
\end{equation}
For $l=2$ \eqref{eq:equicont} follows from the fact that $\penn{\lambda}{\cdot}{\alpha}$ is locally Lipschitz since P1.

To prove \eqref{eq:equicont} for $l=1$, we use Taylor expansions of $R_{n, 1}(\mathbf{u})$ and $R_{n, 1}(\mathbf{v})$ around $\boldsymbol{\beta}_*$ and we get
\begin{equation*}
R_{n, 1}(\mathbf{u})-R_{n, 1}(\mathbf{v})=\frac{1}{\lambda_{n} \sqrt{n}} \sqrt{n} \nabla L_{n}(\boldsymbol{\beta}_*)^\top(\mathbf{u}-\mathbf{v})+\frac{1}{2} \mathbf{u}^\top \mathbf{A}_{n}(\tilde{\boldsymbol{\beta}}_{\mathbf{u}}) \mathbf{u}-\frac{1}{2} \mathbf{v}^\top \mathbf{A}_{n}(\tilde{\boldsymbol{\beta}}_{\mathbf{v}}) \mathbf{v}
\end{equation*}  
where $\tilde{\boldsymbol{\beta}}_{\mathbf{u}}$ and $\tilde{\boldsymbol{\beta}}_{\mathbf{v}}$ are intermediate points defined as
\begin{equation*}
\tilde{\boldsymbol{\beta}}_{\mathbf{u}}=\boldsymbol{\beta}_*+\lambda_{n} \tau_{\mathbf{u}, n} \mathbf{u}
\quad \text{ and } \quad
\tilde{\boldsymbol{\beta}}_{\mathbf{v}}=\boldsymbol{\beta}_*+\lambda_{n} \tau_{\mathbf{v}, n} \mathbf{v}
\end{equation*}
with $\tau_{\mathbf{u}, n}, \tau_{\mathbf{v}, n} \in [0,1]$. As in the proof of Lemma \ref{lemma:rn1}, using that $\sqrt{n} \nabla L_{n}(\boldsymbol{\beta}_*) \xrightarrow{d} N_{p+1}(\mathbf{0}, \mathbf{B})$, we conclude that $\sqrt{n} \nabla L_{n}(\boldsymbol{\beta}_*)^\top(\mathbf{u}-\mathbf{v}) \leq O_{\mathbb P}(1)\|\mathbf{u}-\mathbf{v}\| .$ On the other hand,
\begin{align*}
\mathbf{u}^\top \mathbf{A}_{n}(\tilde{\boldsymbol{\beta}}_{\mathbf{u}}) \mathbf{u}-\mathbf{v}^\top \mathbf{A}_{n}(\tilde{\boldsymbol{\beta}}_{\mathbf{v}}) \mathbf{v} & = \mathbf{u}^\top \mathbf{A}_{n}(\tilde{\boldsymbol{\beta}}_{\mathbf{u}}) \mathbf{u}-\mathbf{u}^\top \mathbf{A}_{n}(\tilde{\boldsymbol{\beta}}_{\mathbf{v}}) \mathbf{u}+\mathbf{u}^\top \mathbf{A}_{n}(\tilde{\boldsymbol{\beta}}_{\mathbf{v}}) \mathbf{u} \\
& - \mathbf{u}^\top \mathbf{A}_{n}(\tilde{\boldsymbol{\beta}}_{\mathbf{v}}) \mathbf{v}+\mathbf{u}^\top \mathbf{A}_{n}(\tilde{\boldsymbol{\beta}}_{\mathbf{v}}) \mathbf{v}-\mathbf{v}^\top \mathbf{A}_{n}(\tilde{\boldsymbol{\beta}}_{\mathbf{v}}) \mathbf{v}
\end{align*}
so, if $\|\mathbf{u}\|,\|\mathbf{v}\| \leq M$
\begin{align*}
\left|R_{n, 1}(\mathbf{u})-R_{n, 1}(\mathbf{v})\right| \leq & \frac{1}{\lambda_{n} \sqrt{n}} O_{p}(1){\|\mathbf{u}-\mathbf{v}\|}+M^{2}\left\|\mathbf{A}_{n}(\tilde{\boldsymbol{\beta}}_{\mathbf{u}})-\mathbf{A}_{n}(\tilde{\boldsymbol{\beta}}_{\mathbf{v}})\right\| \\
&+2\|\mathbf{u}-\mathbf{v}\| M\left\|\mathbf{A}_{n}(\tilde{\boldsymbol{\beta}}_{\mathbf{v}})\right\|.
\end{align*}
Using that $\lambda_{n} \sqrt{n} \rightarrow \infty$ and  that $\mathbf{A}_{n}(\tilde{\boldsymbol{\beta}}_{\mathbf{u}})-\mathbf{A}_{n}(\tilde{\boldsymbol{\beta}}_{\mathbf{v}}) \xrightarrow{p} 0$ and $\mathbf{A}_{n}(\tilde{\boldsymbol{\beta}}_{\mathbf{v}}) \xrightarrow{p} \mathbf{A}$ uniformly over $\left\{\mathbf{u}, \mathbf{v} \in \mathbb{R}^{p}: \max \{\|\mathbf{u}\|,\|\mathbf{v}\|\} \leq M\right\}$ by Lemma \ref{lemma:limits}, we get that \eqref{eq:equicont} holds for $\ell=1$.
It remains to see that given $\mathbf{w}_{1}, \ldots, \mathbf{w}_{s} \in \mathbb{R}^{p+1}$ we have $(R_{n}(\mathbf{w}_{1}), \ldots, R_{n}(\mathbf{w}_{s})) \xrightarrow{d} (R(\mathbf{w}_{1}), \ldots, R(\mathbf{w}_{s}))$, where $R(\mathbf{w})=(1 / 2) \mathbf{w}^\top \mathbf{A} \mathbf{w}+\mathbf{w}^\top \mathbf{q}(\mathbf{w})$. As in the proof of Theorem \ref{teo:asdist}, it is enough to show the result when $s=1$. Fix $\mathbf{w} \in \mathbb{R}^{p+1}$. Using again a Taylor expansion of order one, we obtain
\begin{equation}
\label{eq:rn1}
R_{n, 1}(\mathbf{w})=\frac{1}{\lambda_{n} \sqrt{n}} \sqrt{n} \nabla L_{n}(\boldsymbol{\beta}_*)^\top \mathbf{z}
+ \frac{1}{2} \mathbf{z}^\top \mathbf{A}_{n}(\tilde{\boldsymbol{\beta}}_{\mathbf{w}}) \mathbf{w}
\end{equation}
where $\tilde{\boldsymbol{\beta}}_{\mathbf{w}}=\boldsymbol{\beta}_*+\lambda_{n} \tau_{n} \mathbf{w}$, with $\tau_{n} \in [0,1]$. {Since}
\begin{equation*}
\sqrt{n} \nabla L_{n}(\boldsymbol{\beta}_*) \xrightarrow{d} N_{p+1}(\mathbf{0}, \mathbf{B}) ,
\end{equation*}
The fact that $\lambda_{n} \sqrt{n} \rightarrow \infty$, implies that
\begin{equation*}
\frac{1}{\lambda_{n} \sqrt{n}} \sqrt{n} \nabla L_{n}(\boldsymbol{\beta}_*)^\top \mathbf{w} \xrightarrow{p} 0.
\end{equation*}
The second term in \eqref{eq:rn1} converges to $(1 / 2) \mathbf{w}^\top \mathbf{A} \mathbf{w}$ in probability by Lemma \ref{lemma:limits}. This proves
\begin{equation*}
R_{n,1}(\mathbf{w}) \xrightarrow{p} (1 / 2) \mathbf{w}^\top \mathbf{A} \mathbf{w}.
\end{equation*}  
We now show $R_{n, 2}(\mathbf{w}) \xrightarrow{p} \mathbf{w}^\top \mathbf{q}(\mathbf{w})$.

\noindent (i) For the elastic net penalty function, first note that
\begin{equation*}
R_{n, 2}(\mathbf{w})= \frac{1}{\lambda_n} \sum_{\ell=1}^{p} \frac{1-\alpha}{2} \left( 2 \beta_{*, \ell} \lambda_n w_{\ell} + \lambda_n^2w_{\ell}^2) + \alpha (|\beta_{*, \ell}+\lambda_n w_{\ell}|-|\beta_{*, \ell}| \right) .
\end{equation*}
Note that, for $\ell > 0$ if $n$ is large enough, then $\beta_{*, \ell}+\lambda_n w_{\ell}$ has the same sign as $\beta_{*, \ell}$, then, as in the proof of Theorem \ref{teo:asdist}, we get
\begin{align*}
R_{n, 2}(\mathbf{w}) & = \sum_{\ell=1}^{p} \frac{1-\alpha}{2} \left( 2 \beta_{*, \ell} w_{\ell} + \lambda_n w_{\ell}^2\right) + \alpha \left( w_\ell \sign(\beta_{*,\ell}) \mathbb{I}_{\{\beta_{*,\ell}\neq 0\}} +\left|w_{\ell}\right| \mathbb{I}_{{\{\beta_{*,\ell}= 0 \}}}\right) \\
& \xrightarrow{p} \mathbf{w}^\top \mathbf{q}(\mathbf{w}) .
\end{align*}

\noindent (ii) For the bridge penalty function, with $\alpha > 1$ we have
\begin{equation*}
R_{n,2} (\mathbf{w}) = \frac{1}{\lambda_n} \left( \sum_{\ell=1}^p |\beta_{*, \ell} + \lambda_n w_\ell|^\alpha \right) \xrightarrow{p} \alpha \sum_{\ell=1}^p w_{\ell} \sign(\beta_{*, \ell}) |\beta_{*, \ell}|^{\alpha-1}   
\end{equation*}
uniformly over compact sets, while for $\alpha=1$ (lasso) it is equivalent to the Elastic net penalty and $\alpha=1$ in the previous point (i). For $\alpha < 1$ and $\lambda n^{\alpha/2} \rightarrow \infty$
\begin{equation*}
R_{n,2} (\mathbf{w}) = \frac{1}{\lambda_n} \left( \sum_{\ell=1}^p |\beta_{*, \ell} + \lambda_n w_\ell|^\alpha - |\beta_{*, \ell}|^\alpha \right) \xrightarrow{p} \sum_{\ell=1}^p \mathbb{I}_{\left\{\beta_{*, \ell} = 0\right\}} |w_\ell|^{\alpha} 
\end{equation*}
uniformly over compact sets.

\noindent (iii) For the sign penalty function Theorem 7 of \citet{bianco2022penalized} shows that, given $J(\boldsymbol{\beta}) = \sum_{\ell=1}^p J_\ell(\boldsymbol{\beta})$, $J_\ell(\boldsymbol{\beta}) = |\beta_\ell|/\|\boldsymbol{\beta}\|_2$ and $\tau_{n,\ell} \in [0,1]$,
\begin{align*}
R_{n,2} (\mathbf{w}) & = \left( \sum_{\ell=1}^p \left[\nabla J_\ell\left(\boldsymbol{\beta}_{*} + \tau_{n,\ell} \lambda_n \mathbf{w} \right)\right]^\top \mathbf{w} \mathbb{I}_{\left\{\beta_{*, \ell} \neq 0\right\}} + \sum_{\ell=1}^p \frac{|w_\ell|}{\| \boldsymbol{\beta}_{*} + \lambda_n \mathbf{w} \|} \mathbb{I}_{\left\{\beta_{*, \ell} = 0\right\}} \right) \\
& \xrightarrow{p} \sum_{\ell=1}^p {\nabla_\ell(\boldsymbol{\beta}_*)^\top\mathbf{w}} \mathbb{I}_{\left\{\beta_{*, \ell} \neq 0\right\}} + \sign(w_\ell)/\|\boldsymbol{\beta}_*\|_2 \mathbb{I}_{\left\{\beta_{*, \ell} = 0\right\}} \mathbf{e}_\ell
\end{align*}
where $\mathbf{e}_\ell$ stands for the $\ell$th canonical vector and $\nabla_\ell(\boldsymbol{\beta}) = - (|\beta_\ell| / \|\boldsymbol{\beta}_*\|_2^3) \boldsymbol{\beta} + \sign(\beta_\ell)/\|\boldsymbol{\beta}\|_2 \mathbf{e}_\ell$.
\end{proof}

\section{Computational details and algorithm}
\label{sec:computational}
We introduce an Iteratively Re-Weighted Least Square (IRWLS) algorithm in Subsection \ref{app:irwls} while in subsection \ref{app:initial} we discuss how to obtain feasible starting values.

\subsection{Iteratively Re-Weighted Least Square algorithm}
\label{app:irwls}
An IRWLS procedure to compute the MT-estimator is described in \citet{agostinelli2019initial}, where we show that the solution to the minimization of \eqref{eq:Ln} can be approximated by the following 
\begin{equation}
\label{eq:appMT}
\min_{\boldsymbol{\beta}} \sum_{i=1}^{n} \rho( t(y_i) - s(\x_i^\top \boldsymbol{\beta}^k) - s^{\prime} (\x_i^\top \boldsymbol{\beta}^k ) \x_i^\top (\boldsymbol{\beta} - \boldsymbol{\beta}^k) ,
\end{equation}
where $s(\x_i^\top \boldsymbol{\beta}^k) = \mginv(\x_i^\top \boldsymbol{\beta}^k)$.

Since the elastic net penalty is not differentiable, we wish to keep the minimization problem instead of replacing it by the estimating equations. The iterative procedure developed in \citet{agostinelli2019initial} is based on the estimating equations. However, the same iterative method can be written in the following way
\begin{equation}
\label{eq:appMT2}
\min_{\boldsymbol{\beta}} \sum_{i=1}^{n} ( t(y_i) - s(\x_i^\top \boldsymbol{\beta}^k) - s^{\prime} (\x_i^\top \boldsymbol{\beta}^k) \x_i^\top (\boldsymbol{\beta} - \boldsymbol{\beta}^k))^2 w_i^\ast( \x_i^\top \boldsymbol{\beta}^k)
\end{equation}
where $w_i^\ast(v) = \psi(t(y_i) - s(v))/(t(y_i)-s(v))$ and $\psi=\rho^{\prime}$. In fact, the estimating equations corresponding to problem \eqref{eq:appMT2} are
\begin{equation}
\label{MTA1}
\sum_{i=1}^n ( t(y_i) - s(\x_i^\top \boldsymbol{\beta}^k ) - s^{\prime} (\x_i^\top \boldsymbol{\beta}^k ) \x_i^\top (\boldsymbol{\beta} - \boldsymbol{\beta}^k) ) w_i^\ast(\x_i^\top \boldsymbol{\beta}^k) s^{\prime}(\x_i^\top \boldsymbol{\beta}^k ) \x_i = \mathbf{0}. 
\end{equation}
The IRWLS introduced in \citet{agostinelli2019initial} to solve the equations above is the following. If the solution on step $k$ is $\boldsymbol{\beta}^k$, the solution on step $k+1$ is given by 
\begin{equation}
\label{equ:mtIRWLS}  
\boldsymbol{\beta}^{k+1} = \boldsymbol{\beta}^{k} + ( \mathbf{X}^\top \mathbf{W}^{2}(\mathbf{X} \boldsymbol{\beta}^{k})^\top \mathbf{W}^\ast(\mathbf{X} \boldsymbol{\beta}^k ) \mathbf{X} )^{-1} \mathbf{X}^\top \mathbf{W} (\mathbf{X} \boldsymbol{\beta}^k ) \mathbf{W}^\ast( \mathbf{X} \boldsymbol{\beta}^k)(\mathbf{T} - \mathbf{s}( \mathbf{X} \boldsymbol{\beta}^k))
\end{equation}
where $\mathbf{X}$ is the $n \times (p+1)$ matrix whose $i$-th row is $\x_i$, $\mathbf{W}$ is a diagonal matrix whose elements are $s^{\prime}(\x_i^\top \boldsymbol{\beta}^k) \x_i^\top \boldsymbol{\beta}^k$ and $\mathbf{W}^\ast$ is a diagonal matrix of robust weights $(w_1^\ast, \ldots, w_n^\ast)$ as defined below equation \eqref{eq:appMT2}. Now we turn to the problem of solving the penalized optimization problem
\begin{equation*}
\hat{\boldsymbol{\beta}} = \arg\min_{\boldsymbol{\beta}} \sum_{i=1}^n \rho ( t(y_i) - \mginv(\x_i^\top \boldsymbol{\beta})) + \penn{\lambda}{\boldsymbol{\beta}}{\alpha} .
\end{equation*}
At each step of an IRWLS algorithm we can approximate our optimization problem by
\begin{equation}\label{eq:app:elasticnet1}
\boldsymbol{\beta}^{k+1} = \arg\min_{\boldsymbol{\beta}} \frac{1}{2n}\sum_{i=1}^{n}( t(y_i) - s(\x_i^\top \boldsymbol{\beta}^k) - s^{\prime} (\x_i^\top \boldsymbol{\beta}^k) \x_i^\top (\boldsymbol{\beta} - \boldsymbol{\beta}^k))^2 w_i^\ast(\x_i^\top \boldsymbol{\beta}^k) + \penn{\lambda}{\boldsymbol{\beta}}{\alpha}
\end{equation}
which is a form of weighted elastic net with weights evaluated at a previous step $k$. This problem can be written as 
\begin{equation}
\label{eq:app:elasticnet}
\boldsymbol{\beta}^{k+1} = \arg\min_{\boldsymbol{\beta}} \frac{1}{2n} \sum_{i=1}^{n}( z_i^k - \mathbf{v}_i^{k\top} (\boldsymbol{\beta} - \boldsymbol{\beta}^k))^2 + \penn{\lambda}{\boldsymbol{\beta}}{\alpha}
\end{equation}
with $z_i^k = (t(y_i) - s(\x_i^\top \boldsymbol{\beta}^k)) \sqrt{w_i^\ast( \x_i^\top \boldsymbol{\beta}^k)}$ and $\mathbf{v}_i^k = s^{\prime} (\x_i^\top \boldsymbol{\beta}^k) \x_i \sqrt{w_i^\ast( \x_i^\top \boldsymbol{\beta}^k)}$. Solutions to \eqref{eq:app:elasticnet} can be obtained with several algorithms depending on $\alpha$ and on the dimension of the problem. Here we follow the approach in \citet{friedman2007pathwise} which is specialized to the GLMs case in \citet{friedman2010regularization} using a coordinate descendent algorithm. At step $k$ we can define $\hat{z}_i^{k(j)} = \beta_0^k + \sum_{l\neq j} v_{il} \beta_l^k$ as the fitted value excluding the contribution from $x_{ij}$, and $z_i^k - \hat{z}_i^{k(j)}$ the partial residual for fitting $\beta_j$. Hence, as explained in \citet{friedman2010regularization}, an update for $\beta_j$ can be obtained as
\begin{equation}
\label{eq:app:updatej}
\beta_j^{k+1} = \frac{ S\left( \frac{1}{n} \sum_{i=1}^n  v_{ij} (z_i^k - \hat{z}_i^{k(j)}) , \lambda \alpha \right)}{\frac{1}{n} \sum_{i=1}^n v_{ij}^2 + \lambda (1 - \alpha)} \qquad j=1,\ldots,p.
\end{equation}
where $S(z, \gamma) = \sign(z) (|z| - \gamma)_+$ is the soft-thresholding operator, while for $j=0$,
\begin{equation}
\label{eq:app:update0}
\beta_0^{k+1} = \frac{1}{n} \sum_{i=1}^n (z_i^k - \hat{z}_i^{k(0)})
\end{equation}
where $\hat{z}_i^{k(0)} =  \sum_{l} v_{il} \beta_l^k$. In the special case $\alpha=0$ we have the Ridge penalty and instead of the coordinate descent algorithm we can solve \eqref{eq:app:elasticnet} directly to obtain the Ridge normal equations 
\begin{align*}
\beta_0^{k+1} & = \bar{z}^k - \bar{\mathbf{v}}^{k\top} \boldsymbol{\beta}^{k+1} \\
\Beta^{k+1} & = ( \mathbf{V}^{k\top} \mathbf{V}^{k} + \lambda \mathbf{I} )^{-1} \mathbf{V}^{k\top} (\mathbf{z}^{k} - \beta_0^{k+1} \mathbf{1}_n)
\end{align*}
where $\Beta^\top = (\beta_1, \ldots, \beta_p)$, $\mathbf{V}^k$ is the matrix with rows $\mathbf{v}_i^k$, $\mathbf{z}^k = (z_1^k, \ldots, z_n^k)^\top$, $\bar{\mathbf{v}}^k$ is the vector of the means of the columns of $\mathbf{V}^k$ and $\bar{z}^k$ is the mean of $\mathbf{z}^k$.

On the other hand, if $\alpha=1$, we have the lasso penalty and \eqref{eq:app:elasticnet} can also be solved with a weighted lasso algorithm based on least angle regression as done in \citet{alfons2013sparse} and \citet{smucler2017robust}. This was implemented using the function \texttt{fastLasso} from package \texttt{robustHD}, see \citet{alfons2019robustHD}.

\subsection{Procedure for obtaining a robust initial estimate}
\label{app:initial}

In our penalized version of MT-estimators we consider redescending $\rho$-functions and hence the goal function $\hat{L}_n(\boldsymbol{\beta})$ might have several local minima. As a consequence, it might happen that the IRWLS algorithm converges to a solution of the estimating equations that is not a solution of the optimization problem. In practice, to avoid this, one must begin the iterative algorithm at an initial estimator which is a very good approximation of the absolute minimum of $\hat{L}_n(\boldsymbol{\beta})$. If $p$ is small, an approximate solution may be obtained by the subsampling method which consists in computing a finite set $A$ of candidate solutions and then replace the minimization over $\mathbb{R}^{p+1}$ by a minimization over $A$. The set $A$ is obtained by randomly drawing subsamples of size $p+1$ and computing the maximum likelihood estimator based on the subsample. Assume the original sample contains a proportion $\epsilon$ of outliers, then the probability of having at least one subsample free of outliers is $1-(1-(1-\epsilon)^p)^N$ where $N$ is the number of subsamples drawn. So, for a given probability $\alpha$ such that $1-(1-(1-\epsilon)^p)^N > \alpha$ we need
\begin{equation*}
N > \frac{\log(\alpha)}{\log(1-(1-\varepsilon)^p)}\underline{\sim}\left\vert
\frac{\log(\alpha)}{(1-\varepsilon)^p}\right\vert.
\end{equation*}
This makes the algorithm based on subsampling infeasible for large $p$. We instead propose an adaptation of the procedure in \citet{agostinelli2019initial} which is a deterministic algorithm.

Consider a random sample following a generalized linear model. The following procedure computes an approximation of $\boldsymbol{\beta}_*$ which will be used as an initial estimator in the IRWLS algorithm for the estimating equation describe in Subsection \ref{app:irwls}. The procedure has two stages. Stage 1 aims at finding a highly robust but possibly inefficient estimate and stage 2 aims at increasing its efficiency.

\noindent{\itshape Stage 1.} In this stage, the idea is to find a robust, but possibly inefficient, estimate of $\boldsymbol{\beta}_*$ by an iterative procedure. In each iteration $k \geq 1$ we get
\begin{equation}
\label{eq:optLikAk}
\hat{\boldsymbol{\beta}}^{(k)} = \arg\min_{\boldsymbol{\beta} \in A_{k}} \hat{L}_n(\boldsymbol{\beta}) .
\end{equation}
In the first iteration ($k=1$) the set $A_{1}$ is constructed as follows. We begin by computing the penalized LST estimate $\hat{\boldsymbol{\beta}}_{LS}^{(1)}$ with the complete sample and the principal sensitivity components \citep{PenaYohai1999} obtained as follows. We define the fitted values $\hat{\boldsymbol{\mu}}^{(1)} = g^{-1}(\mathbf{X}\hat{\boldsymbol{\beta}}^{(1)})$ and for each index $j$ the fitted values $\hat{\boldsymbol{\mu}}_{(j)}^{(1)} = g^{-1}(\mathbf{X}\hat{\boldsymbol{\beta}}_{LS(j)}^{(1)})$ obtained by computing the penalized LST estimate $\hat{\boldsymbol{\beta}}_{LS(j)}^{(1)}$ with the sample without using observation with $j$ index. We compute the sensitivity vector $\mathbf{r}_{(j)} = \hat{\mathbf{t}} - \hat{\mathbf{t}}_{(j)}$ which is the difference between the predicted value $\hat{\mathbf{t}} = m(\hat{\boldsymbol{\mu}}^{(1)})$ based on the complete sample and $\hat{\mathbf{t}}_{(j)} = m(\hat{\boldsymbol{\mu}}_{(j)}^{(1)})$ based on the sample without the observation with $j$ index. The sensitivity matrix $\mathbf{R}$ is built from the sensitivity vectors $\mathbf{r}_{(1)}, \ldots, \mathbf{r}_{(n)}$. We obtain the direction $\mathbf{v}_{1}$ in which the projections of the sensitivity vectors is largest, i.e., 
\begin{equation*}
\mathbf{v}_{1}=\mbox{argmax}_{||\mathbf{v}||=1}\sum_{j=1}^{n}\left(
\mathbf{v}^{\top}\mathbf{r}_{(j)}\right)^{2}
\end{equation*}
and $\mathbf{z}_{1} = \mathbf{R}\mathbf{v}_{1}$ where the largest entries in $\mathbf{z}_{1}$ correspond to the largest terms in the sum $\sum_{j=1}^{n}\left(\mathbf{v}^\top \mathbf{r}_{(j)} \right)^2$, which in turn correspond to the observations that have the largest projected sensitivity in the direction $\mathbf{v}_{1}$. Recursively we compute $\mathbf{v}_{i},$ $2 \leq i \leq n$ as the solution of
\begin{align*}
\mathbf{v}_{i} & = \arg\max_{||\mathbf{v}||=1}\sum_{j=1}^{n}\left(\mathbf{v}^{\top}\mathbf{r}_{(j)}\right)^{2} \\
& \text{ subject to } \mathbf{v}_{i}\mathbf{v}_{j} =0 \text{ for all } 1 \leq j < i
\end{align*}
and the corresponding principal sensitivity components $\mathbf{z}_{i} = \mathbf{R}\mathbf{v}_{i}$. Large entries are considered potential outliers. For each principal sensitivity component $\mathbf{z}_{i}$ we compute three estimates by the
penalized LST method. The first eliminating the half of the observations corresponding to the smallest entries in $\mathbf{z}_{i}$, the second eliminating the half corresponding to the largest entries in $\mathbf{z}_{i}$ and the third eliminating the half corresponding to the largest absolute values. To these $3p$ initial candidates we add the penalized LST estimate computed using the complete sample, obtaining a set of $3p+1$ elements. Once we have $A_{1}$ we obtain $\hat{\boldsymbol{\beta}}^{(1)}$ by minimizing $\hat{L}_n(\boldsymbol{\beta})$ over the elements of $A_{1}$.

Suppose now that we are on stage $k$. Let $0 < \alpha < 0.5$ be a trimming constant, in all our applications we set $\alpha=0.05$. Then, for $k>1$, we first delete the observations $(i=1,\cdots,n)$ such that $y_{i} > F_{\hat{\mu}_{i}^{(k-1)}}^{-1}(1-\alpha/2)$ or $y_{i}<F_{\hat{\mu}^{(k-1)}_{i}}^{-1}(\alpha/2)$ and, with the remaining observations, we re-compute the penalized LST estimator $\hat{\boldsymbol{\beta}}_{LS}^{(k)}$ and the principal sensitivity components. Let us remark that, for the computation of $\hat{\boldsymbol{\beta}}_{LS}^{(k)}$ we have deleted the observations that have large residuals, since $\hat{\boldsymbol{\mu}}^{(k-1)}$ are the fitted values obtained using $\hat{\boldsymbol{\beta}}_{LS}^{(k-1)}$. In this way, while candidates on the first step of the iteration are protected from high leverage outliers, candidate $\hat{\boldsymbol{\beta}}_{LS}^{(k)}$ is protected from low leverage outliers, which may not be extreme entries of the $\mathbf{z}_{i}$.

The set $A_{k}$ will contain $\hat{\boldsymbol{\beta}}_{LS}^{(k)}$, $\hat{\boldsymbol{\beta}}^{(k-1)}$ and the $3p$ penalized LST estimates computed deleting extreme values according to the principal sensitivity components as in the first iteration. $\hat{\boldsymbol{\beta}}^{(k)}$ is the element of $A_k$ minimizing $\hat{L}_n(\boldsymbol{\beta})$.

The iterations will continue until $\hat{\boldsymbol{\beta}}^{(k)} \approx \hat{\boldsymbol{\beta}}^{(k-1)}$. Let $\tilde{\boldsymbol{\beta}}^{(1)}$ be the final estimate obtained at this stage.

\noindent{\itshape Stage 2.} We first delete the observations $y_{i}$ ($i=1,\cdots,n$) such that $y_{i}>F_{\tilde{\mu}_{i}^{(1)}}^{-1}(1-\alpha/2)$ or $y_{i}<F_{\tilde{\mu}_{i}^{(1)}}^{-1}(\alpha/2)$, where $\tilde{\boldsymbol{\mu}}^{(1)}=g^{-1}\left( \mathbf{X}\tilde{\boldsymbol{\beta}}^{(1)} \right)$ and compute the penalized LST estimate $\tilde{\boldsymbol{\beta}}^{(2)}$ with the reduced sample. Then for each of the deleted observations we check whether $y_{i}>F_{\tilde{\mu}_{i}^{(2)}}^{-1}(1-\alpha/2)$ or $y_{i}<F_{\tilde{\mu}_{i}^{(2)}}^{-1}(\alpha/2)$, where $\tilde{\boldsymbol{\mu}}^{(2)}=g^{-1}\left( \mathbf{X}\tilde{\boldsymbol{\beta}}^{(2)}\right)$. Observations which are not within these bounds are finally eliminated and those which are, are restored to the sample. With the resulting set of observations we compute the penalized LST estimate $\tilde{\boldsymbol{\beta}}$ which is our proposal as a starting value for solving the estimating equations of the MT-estimates.

\subsection{Asymptotic variance}

Let $\eta = \x^\top \boldsymbol{\beta}$, $\dot{m}$, $\dot{(g^{-1})}$ and $\dot{\psi}$ be the derivatives with respect to their arguments and $z(y;\eta,\phi) = (t(y) - \mginv(\eta))/\sqrt{a(\phi)}$.
\begin{align*}
\Psi(y, \x, \boldsymbol{\beta}) & = \frac{\partial}{\partial \eta} \rho(z(y;\eta,\phi)) \nabla_{\boldsymbol{\beta}} \eta \\
& = \psi(z(y;\eta,\phi)) \dot{m}(g^{-1}(\eta)) \dot{(g^{-1})}(\eta) a(\phi)^{-1/2} \x \\
& = \psi(z(y;\eta,\phi)) K_1(\eta) a(\phi)^{-1/2} \x
\end{align*}
where we let $K_1(\eta) = \dot{m}(g^{-1}(\eta)) \dot{(g^{-1})}(\eta)$. Hence
\begin{align*}
\mathbf{B}(\boldsymbol{\beta}) & = \E(\Psi(y, \x, \boldsymbol{\beta}) \Psi(y, \x, \boldsymbol{\beta})^\top) \\
& = \var\left(\psi\left(\frac{t(y) - \mginv(\eta)}{\sqrt{a(\phi)}}\right)\right) K_1(\eta)^2  a(\phi)^{-1} \x \x^\top \\
& = B(\eta, \phi) K_1^2(\eta)  a(\phi)^{-1} \x \x^\top
\end{align*}
and since
\begin{equation*}
\frac{\partial}{\partial \eta} K_1(\eta) = \ddot{m}(g^{-1})\dot{(g^{-1})} + \dot{m}(g^{-1})\ddot{(g^{-1})} = K_2(\eta)
\end{equation*}
we have
\begin{align*}
\mathbf{J}(y,\x ,\boldsymbol{\beta}) & = \nabla \Psi(y, \x, \boldsymbol{\beta}) \\
& = \frac{\partial}{\partial \eta} \psi\left(\frac{t(y) - \mginv(\eta)}{\sqrt{a(\phi)}}\right) K_1(\eta) a(\phi)^{-1/2} \x \x^\top \\
& + \psi(z(y;\eta,\phi)) \frac{\partial}{\partial \eta} K_1(\eta) a(\phi)^{-1/2} \x \x^\top \\
& = \dot{\psi}(z(y;\eta,\phi)) K_1^2(\eta) a(\phi)^{-1} \x \x^\top \\
& + \psi(z(y;\eta,\phi)) K_2(\eta) a(\phi)^{-1/2} \x \x^\top \\
& = \left[ \dot{\psi}(z(y;\eta,\phi)) a(\phi)^{-1/2} K_1^2(\eta)  + \psi(z(y;\eta,\phi)) K_2(\eta) \right] a(\phi)^{-1/2} \x \x^\top 
\end{align*}
we have that
\begin{align*}
\mathbf{A}(\boldsymbol{\beta}) & = \E(\mathbf{J}(y,\x ,\boldsymbol{\beta})) \\
& = \left[ \E(\dot{\psi}(z(y;\eta,\phi))) a(\phi)^{-1/2} K_1^2(\eta) + \E(\psi(z(y;\eta,\phi))) K_2(\eta) \right] a(\phi)^{-1/2} \x \x^\top \\
& = \left[ A_1(\eta, \phi) K_1^2(\eta) + A_2(\eta, \phi) K_2(\eta) \right] a(\phi)^{-1/2} \x \x^\top
\end{align*}
So that, the asymptotic variance is given by
\begin{equation*}
\mathbf{A}^{-1}(\boldsymbol{\beta})\mathbf{B}(\boldsymbol{\beta})\mathbf{A}^{-1}(\boldsymbol{\beta}) = B(\eta, \phi) K_1^2(\eta) \left[ A_1(\eta, \phi) K_1^2(\eta) + A_2(\eta, \phi) K_2(\eta) \right]^{-2} (\x \x^\top)^{-1}
\end{equation*}

\clearpage

\section{Supplementary Material: Robust Penalized Estimators for High--Dimensional Generalized Linear Models}

\subsection{Complete results of the Monte Carlo study}
\label{app:montecarlo}
In this Section we report  full results for simulation settings AVY and AMR together with the results of a third setting namely AVY2 which is similar to ``model 2'' considered in \cite{agostinelli2019initial}. The difference with AVY described in Section 6 ``Monte Carlo study'' of the main document is in the value of parameters, which in AVY2 are given by $\boldsymbol{\beta}_* = 2 \mathbf{e}_1 + \mathbf{e}_2$.

\subsubsection{False Negative and False Positive for Lasso Methods}
\label{app:FN}
In Tables \ref{table:fp1} to \ref{table:fp3} we report a summary  of the performance of Lasso methods for variable selection for the AMR setting. These tables are similar to Table 1 in \cite{avella2018robust}. {For each measure, we give the median over 100 simulations and its median absolute deviation in parentheses}. Size is the number of selected variables in the final model, $\#FN$ is the number of parameters that are incorrectly estimated as $0$, $\#FP$ is the number of parameters that are zero but their estimates are not. Tables \ref{table:fp4} to \ref{table:fp6} give the selected value of the penalty parameter, the BIC  and the degrees of freedom for this value.
\begin{table}[ht]
\resizebox{\textwidth}{!}{
% [inline block 0: 6 envs, 39103 chars -> data_tex | \begin{tabular}{lllllllllllllll}   \hline...]

}\caption{Selected value of tuning parameter and the corresponding BIC and df for Lasso methods for AMR seting and $p=1600$}\label{table:fp6}
\end{table}

\clearpage

\subsubsection{Complete results of the Monte Carlo study}
\label{app:Ridge}

In this section we report the complete results of the Monte Carlo Study. We analyze the behaviour of the estimators under contamination in models AVY, AVY2 and AMR. We also present a table with the MSE for samples without contamination. The models and contamination schemes are described in Section 6 ``Monte Carlo study'' of the main document.

In Figures \ref{fig:msesimu1p10Ridge} to \ref{fig:msesimuRQLp1600Ridge}  we plot the MSE of MT Ridge and ML Ridge as a function of the contamination $y_0$ for the different models, while in Figures \ref{fig:msesimu1p10Lasso} to \ref{fig:msesimuRQLp1600Lasso}  we plot the MSE of MT Lasso, RQL Lasso and ML Lasso. 
We note the very high robustness of MT Ridge and  MT Lasso in all scenarios.

Table \ref{table:msenoout} gives mean squared errors for clean samples. We note that MT Ridge gives smaller MSE than ML Ridge in most situations. In the case of Lasso methods, we note that the performance of MT Lasso improves when the sample size increases, while RQL and ML give relatively small MSE for all sample sizes. 
 
 Figures \ref{fig:msesimuRQLp100Lasso} to \ref{fig:msesimuRQLp1600Lasso} show that in the presence of outliers, the MSE of ML Lasso and RQL Lasso increases without bound while the MSE of MT Lasso remains bounded. 

 In some cases, the boundedness of the mean squared errors of MT Lasso is not apparent. Nevertheless, we can see that they increase at a much slower rate than the mean squared errors of ML Lasso and RQL Lasso.

\begin{table}[ht]
\footnotesize
	% [inline block 1: 29 envs, 38357 chars -> data_tex | \begin{tabular}{llllllllll} 		\hline...]

		\caption{\small \label{fig:timesimuRQLp1600Ridge}  Computation times for  model AMR and Ridge methods, $p=1600$.}
	\end{center} 
\end{figure}


\begin{thebibliography}{43}
\providecommand{\natexlab}[1]{#1}
\providecommand{\url}[1]{\texttt{#1}}
\expandafter\ifx\csname urlstyle\endcsname\relax
  \providecommand{\doi}[1]{doi: #1}\else
  \providecommand{\doi}{doi: \begingroup \urlstyle{rm}\Url}\fi

\bibitem[Agostinelli et~al.(2019)Agostinelli, Valdora, and
  Yohai]{agostinelli2019initial}
C.~Agostinelli, M.~Valdora, and V.J. Yohai.
\newblock Initial robust estimation in generalized linear models.
\newblock \emph{Computational Statistics \& Data Analysis}, 134:\penalty0
  144--156, 2019.

\bibitem[Alfons(2019)]{alfons2019robustHD}
A.~Alfons.
\newblock \emph{robustHD: Robust Methods for High-Dimensional Data}, 2019.
\newblock URL \url{https://CRAN.R-project.org/package=robustHD}.
\newblock R package version 0.6.1.

\bibitem[Alfons et~al.(2013)Alfons, Croux, and Gelper]{alfons2013sparse}
A.~Alfons, C.~Croux, and S.~Gelper.
\newblock Sparse least trimmed squares regression for analyzing
  high-dimensional large data sets.
\newblock \emph{The Annals of Applied Statistics}, pages 226--248, 2013.

\bibitem[Avella-Medina and Ronchetti(2018)]{avella2018robust}
M.~Avella-Medina and E.~Ronchetti.
\newblock Robust and consistent variable selection in high-dimensional
  generalized linear models.
\newblock \emph{Biometrika}, 105\penalty0 (1):\penalty0 31--44, 2018.

\bibitem[Bianco and Boente(2002)]{bianco2002asymptotic}
A.M. Bianco and G.~Boente.
\newblock On the asymptotic behavior of one-step estimates in heteroscedastic
  regression models.
\newblock \emph{Statistics \& probability letters}, 60\penalty0 (1):\penalty0
  33--47, 2002.

\bibitem[Bianco et~al.(2022)Bianco, Boente, and Chebi]{bianco2019penalized}
A.M. Bianco, G.~Boente, and G.~Chebi.
\newblock Penalized robust estimators in sparse logistic regression.
\newblock \emph{TEST}, 31:\penalty0 563--594, 2022.
\newblock \doi{10.1007/s11749-021-00792-w}.

\bibitem[Bianco et~al.(2023)Bianco, Boente, and Chebi]{bianco2022penalized}
A.M. Bianco, G.~Boente, and G.~Chebi.
\newblock Asymptotic behaviour of penalized robust estimators in logistic
  regression when dimension increases.
\newblock In \emph{Robust and Multivariate Statistical Methods: Festschrift in
  Honor of David E. Tyler}, pages 323--348. Springer International Publishing,
  2023.

\bibitem[Boente et~al.(2020)Boente, Rodriguez, and Vena]{boente2020robust}
G.~Boente, D.~Rodriguez, and P.~Vena.
\newblock Robust estimators in a generalized partly linear regression model
  under monotony constraints.
\newblock \emph{Test}, 29\penalty0 (1):\penalty0 50--89, 2020.

\bibitem[Cantoni and Ronchetti(2001)]{cantoni2001robust}
Eva Cantoni and Elvezio Ronchetti.
\newblock Robust inference for generalized linear models.
\newblock \emph{Journal of the American Statistical Association}, 96\penalty0
  (455):\penalty0 1022--1030, 2001.

\bibitem[Connors et~al.(1996)Connors, Speroff, Dawson, Thomas, Harrell, Wagner,
  Desbiens, Goldman, Wu, Califf, Fulkerson, Vidaillet, Broste, Bellamy, Lynn,
  and Knaus]{Connors1996}
A.F. Connors, T.~Speroff, N.V. Dawson, C.~Thomas, F.E. Harrell, D.~Wagner,
  N.~Desbiens, L.~Goldman, A.W. Wu, R.M. Califf, W.J. Fulkerson, H.~Vidaillet,
  S.~Broste, P.~Bellamy, J.~Lynn, and W.A. Knaus.
\newblock The effectiveness of right heart catheterization in the initial care
  of critically iii patients.
\newblock \emph{Journal of the American Medical Association}, 276\penalty0
  (11):\penalty0 889--897, 1996.

\bibitem[Davies and Gather(2005)]{daviesgather2005}
P.L. Davies and U.~Gather.
\newblock Breakdown and groups.
\newblock \emph{Annals of Statistics}, 33\penalty0 (3):\penalty0 977--1035,
  2005.

\bibitem[Donoho and Huber(1983)]{donohohuber1983}
D.L. Donoho and P.J. Huber.
\newblock The notion of breakdown point.
\newblock In K.A.~Doksum P.J.~Bickel and Jr. J.L.~Hodges, editors,
  \emph{Festschrift for Erich L. Lehmann}, pages 157--184. Wadsworth, Belmont,
  Calif., 1983.

\bibitem[Fan and Li(2001)]{fan2001variable}
J.~Fan and R.~Li.
\newblock Variable selection via nonconcave penalized likelihood and its oracle
  properties.
\newblock \emph{Journal of the American statistical Association}, 96\penalty0
  (456):\penalty0 1348--1360, 2001.

\bibitem[Fan and Tang(2013)]{fan2013}
Y.~Fan and C.~Tang.
\newblock Tuning parameter selection in high dimensional penalized likelihood.
\newblock \emph{Journal of Royal Statistical Society. Series B}, pages
  531--552, 2013.

\bibitem[Frank and Friedman(1993)]{frank1993}
I.E. Frank and J.H. Friedman.
\newblock A statistical view of some chemometrics regression tools.
\newblock \emph{Technometrics}, 35\penalty0 (2):\penalty0 109--135, 1993.

\bibitem[Friedman et~al.(2001)Friedman, Hastie, and
  Tibshirani]{friedman2001elements}
J.~Friedman, T.~Hastie, and R.J. Tibshirani.
\newblock \emph{The elements of statistical learning}, volume~1.
\newblock Springer, 2001.

\bibitem[Friedman et~al.(2007)Friedman, Hastie, H{\"o}fling, and
  Tibshirani]{friedman2007pathwise}
J.~Friedman, T.~Hastie, H.~H{\"o}fling, and R.J. Tibshirani.
\newblock Pathwise coordinate optimization.
\newblock \emph{Annals of Applied Statistics}, 1\penalty0 (2):\penalty0
  302--332, 2007.

\bibitem[Friedman et~al.(2010)Friedman, Hastie, and
  Tibshirani]{friedman2010regularization}
J.~Friedman, T.~Hastie, and R.J. Tibshirani.
\newblock Regularization paths for generalized linear models via coordinate
  descent.
\newblock \emph{Journal of statistical software}, 33\penalty0 (1):\penalty0 1,
  2010.

\bibitem[Geer(2000)]{geer2000empirical}
S.A. Geer.
\newblock \emph{Empirical Processes in M-estimation}, volume~6.
\newblock Cambridge university press, 2000.

\bibitem[Hampel(1971)]{hampel1971}
F.R. Hampel.
\newblock A general qualitative definition of robustness.
\newblock \emph{Annals of Mathematical Statistics}, 42:\penalty0 1887--1996,
  1971.

\bibitem[Hoerl and Kennard(1970)]{hoerl1970Ridge}
A.E. Hoerl and R.W. Kennard.
\newblock Ridge regression: Biased estimation for nonorthogonal problems.
\newblock \emph{Technometrics}, 12\penalty0 (1):\penalty0 55--67, 1970.

\bibitem[Kim and Pollard(1990)]{kim1990cube}
J.~Kim and D.~Pollard.
\newblock Cube root asymptotics.
\newblock \emph{Annals of Statistics}, pages 191--219, 1990.

\bibitem[Knight and Fu(2000)]{knight2000asymptotics}
K.~Knight and W.~Fu.
\newblock Asymptotics for lasso-type estimators.
\newblock \emph{Annals of statistics}, pages 1356--1378, 2000.

\bibitem[Kosorok(2008)]{kosorok2008introduction}
M.R. Kosorok.
\newblock \emph{Introduction to empirical processes and semiparametric
  inference.}
\newblock Springer, 2008.

\bibitem[Lee and Weidner(2021)]{sokbae2021ATbounds}
Sokbae Lee and Martin Weidner.
\newblock \emph{ATbounds: Bounding Treatment Effects by Limited Information
  Pooling}, 2021.
\newblock URL \url{https://CRAN.R-project.org/package=ATbounds}.
\newblock R package version 0.1.0.

\bibitem[Maronna(2011)]{maronna2011robust}
R.A. Maronna.
\newblock Robust ridge regression for high-dimensional data.
\newblock \emph{Technometrics}, 53\penalty0 (1):\penalty0 44--53, 2011.

\bibitem[Pe{\~n}a and Yohai(1999)]{PenaYohai1999}
D.~Pe{\~n}a and V.J. Yohai.
\newblock A fast procedure for outlier diagnostics in large regression
  problems.
\newblock \emph{Journal of the American Statistical Association}, 94:\penalty0
  434--445, 1999.
\newblock \doi{10.1080/01621459.1999.10474138}.

\bibitem[Pensia et~al.(2024)Pensia, Jog, and Loh]{pensia2024robust}
Ankit Pensia, Varun Jog, and Po-Ling Loh.
\newblock Robust regression with covariate filtering: Heavy tails and
  adversarial contamination.
\newblock \emph{Journal of the American Statistical Association}, pages 1--12,
  2024.

\bibitem[Pollard(1984)]{pollard1984convergence}
D.~Pollard.
\newblock \emph{Convergence of stochastic processes}.
\newblock Springer, 1984.

\bibitem[Rudin(1976)]{rudin1953principles}
W.~Rudin.
\newblock \emph{Principles of mathematical analysis}.
\newblock McGraw-Hill Publishing Company, 1976.

\bibitem[Salamwade and Sakate(2021)]{salamwade2021robust}
R.L. Salamwade and D.M. Sakate.
\newblock Robust variable selection via penalized {MT}-estimator in generalized
  linear models.
\newblock \emph{Communications in Statistics -- Theory and Methods}, pages
  1--13, 2021.

\bibitem[Salibian-Barrera and Yohai(2006)]{salibian2006fast}
Mat{\'\i}as Salibian-Barrera and V{\'\i}ctor~J Yohai.
\newblock A fast algorithm for s-regression estimates.
\newblock \emph{Journal of computational and Graphical Statistics}, 15\penalty0
  (2):\penalty0 414--427, 2006.

\bibitem[Smucler(2016)]{smucler2016estimadores}
E.~Smucler.
\newblock \emph{Estimadores robustos para el modelo de regresi{\'o}n lineal con
  datos de alta dimensi{\'o}n}.
\newblock PhD thesis, Universidad de Buenos Aires. Facultad de Ciencias Exactas
  y Naturales, 2016.

\bibitem[Smucler and Yohai(2015)]{smucler2015}
E.~Smucler and V.J. Yohai.
\newblock Robust and sparse estimators for linear regression models, 2015.

\bibitem[Smucler and Yohai(2017)]{smucler2017robust}
E.~Smucler and V.J. Yohai.
\newblock Robust and sparse estimators for linear regression models.
\newblock \emph{Computational Statistics \& Data Analysis}, 111:\penalty0
  116--130, 2017.

\bibitem[Tibshirani(1996)]{tibshirani1996regression}
R.J. Tibshirani.
\newblock Regression shrinkage and selection via the lasso.
\newblock \emph{Journal of the Royal Statistical Society: Series B
  (Methodological)}, 58\penalty0 (1):\penalty0 267--288, 1996.

\bibitem[Tibshirani and Taylor(2012)]{TibshiraniTaylor2012}
R.J. Tibshirani and J.~Taylor.
\newblock Degrees of freedom in lasso problems.
\newblock \emph{Annals of Statistics}, 40\penalty0 (2):\penalty0 1198--1232,
  2012.
\newblock \doi{10.1214/12-AOS1003}.

\bibitem[Valdora and Yohai(2014)]{ValdoraYohai2014}
M.~Valdora and V.J. Yohai.
\newblock Robust estimators for generalized linear models.
\newblock \emph{Journal of Statistical Planning and Inference}, 146:\penalty0
  31--48, 2014.
\newblock \doi{10.1016/j.jspi.2013.09.016}.

\bibitem[Yohai(1985)]{yohai1985high}
V.J. Yohai.
\newblock High breakdown-point and high efficiency robust estimates for
  regression.
\newblock Technical report, Department of Statistics, University of Washington,
  1985.

\bibitem[Zhang(2010)]{zhang2010}
C.H. Zhang.
\newblock Nearly unbiased variable selection under minimax concave penalty.
\newblock \emph{Annals of statistics}, 38:\penalty0 894--942, 2010.

\bibitem[Zou(2006)]{zou2006adaptive}
H.~Zou.
\newblock The adaptive lasso and its oracle properties.
\newblock \emph{Journal of the American statistical association}, 101\penalty0
  (476):\penalty0 1418--1429, 2006.

\bibitem[Zou and Hastie(2005)]{zou2005regularization}
H.~Zou and T.~Hastie.
\newblock Regularization and variable selection via the elastic net.
\newblock \emph{Journal of the Royal Statistical Society: Series B (Statistical
  Methodology)}, 67\penalty0 (2):\penalty0 301--320, 2005.

\bibitem[Zou et~al.(2007)Zou, Hastie, and Tibshirani]{zuo2007}
H.~Zou, T.~Hastie, and R.J. Tibshirani.
\newblock On the ``degrees of freedom'' of the lasso.
\newblock \emph{Annals of Statistics}, 35\penalty0 (5):\penalty0 2173--2192,
  2007.
\newblock \doi{10.1214/009053607000000127}.

\end{thebibliography}
\end{document}